\definecolor{mycitegreen}{RGB}{0,90,0}
\definecolor{myrefblue}{RGB}{90,0,0}
\definecolor{red}{RGB}{140,0,0}
\definecolor{green}{RGB}{0,80,0}
\newcommand{\fixmec}[2]{{\color{#2}{[#1]}}}
\newcommand{\fixme}[1]{\fixmec{#1}{red}}
\newcommand{\sriram}[1]{\fixmec{[SR: #1]}{magenta}}
\newcommand{\panic}[1]{\vspace{-#1 plus 1pt minus 1pt}}
\newcommand{\panictwo}[1]{\vspace{-#1 plus 3pt minus 3pt}}
\newcommand{\ncaption}[1]{
  \panic{1pt}
  \renewcommand{\baselinestretch}{0.9}
  \caption{#1}
  \panictwo{8pt}
  \renewcommand{\baselinestretch}{1}
}
\newcommand{\eat}[1]{}
\def\compactify{\itemsep=0pt \topsep=0pt \partopsep=0pt \parsep=0pt}
 \let\latexusecounter=\usecounter
\newif\if@restonecol
\newcommand*{\mathcolor}{}
\def\mathcolor#1#{\mathcoloraux{#1}}
\newcommand*{\mathcoloraux}[3]{%
  \protect\leavevmode
  \begingroup
    \color#1{#2}#3%
  \endgroup
}
\newcommand{\cut}[1]{}
\newcommand{\rnum}[1]{\lowercase\expandafter{\romannumeral #1\relax}}
\let\oldnl\nl
\newcommand{\nonl}{\renewcommand{\nl}{\let\nl\oldnl}}
\patchcmd{\maketitle}{\@copyrightspace}{}{}{}
\newcommand{\sktilde}{\raise.17ex\hbox{$\scriptstyle\sim$}}
\newcommand{\codecomment}{\textcolor{blue}}
\newcommand{\camera}{}
\newcommand{\xref}[1]{\S\ref{#1}}
\newcommand{\chline}{\vspace{-3pt} \line(1,0){225} }
\newcommand{\name}{{\textsc{\small DagPS}}}
\newcommand{\namesec}{{\sc{DagPS}}}
\newenvironment{Itemize}%
{
\begin{itemize}%
\setlength{\itemsep}{0in}%
\setlength{\topsep}{-.1in}%
\setlength{\partopsep}{-.1in}%
\setlength{\parsep}{-.1in}%
\setlength{\parskip}{0in}}%
{\end{itemize}}
\newif\iflongversion
\title{
Do the Hard Stuff First:
Scheduling Dependent Computations in Data-Analytics Clusters
}
\author{
	{Robert Grandl$^{\ddag}$,  Srikanth Kandula,   Sriram Rao,   Aditya Akella$^{\ddag}$, Janardhan Kulkarni}\\
	Microsoft and University of Wisconsin-Madison$^{\ddag}$\\
}
\begin{document}

\maketitle
\begin{sloppypar}
\noindent{\bf Abstract-- }
We present {\name}, a scheduler that improves cluster utilization and job completion times
by packing tasks with multi-resource requirements and inter-dependencies.
While the underlying scheduling problem is intractable in general, {\name} is nearly optimal on the job DAGs
that appear in production clusters at a large enterprise.
Our key insight is that carefully handling the long-running tasks and those with 
tough-to-pack resource requirements will lead to good schedules for DAGs.
However, which subset of tasks to treat carefully is {\em a priori} unclear. 
{\name} offers a novel search procedure that evaluates various possibilities and outputs a valid schedule. 
An online component enforces the schedules desired by the various jobs running on the cluster.
In addition, it packs tasks and, for any desired fairness scheme, guarantees bounded unfairness.
We evaluate {\name} on a 200 server cluster using traces of over 20,000 DAGs
collected from a large production cluster.
Relative to the state-of-the art schedulers, {\name} speeds up half of the jobs by over 30\%.

\theoremstyle{definition}
\newtheorem{lemma}{Lemma}

\section{Introduction} 
\label{sec:intro} 
DAGs (directed acyclic graphs) are a powerfully general abstraction
for scheduling problems. Scheduling network transfers of a multi-way join or the work in a
geo-distributed analytics job and many others can be represented as
DAGs.  However, scheduling even one DAG is known to be an NP-hard
problem~\cite{Mastrolilli:2008,MastrolilliS:2009}.
\cut{
The common abstraction makes it possible to adapt
previously-studied DAG scheduling
algorithms~\cite{tpkelly1,StripPart,Goldberg:1997} for the problems at
hand.  DAG scheduling is, however, known to be an NP-hard
problem~\cite{Mastrolilli:2008,MastrolilliS:2009}.  
}

Consequently, existing work focuses on special cases of the DAG scheduling
problem using simplifying assumptions such as: ignore dependencies, only consider chains, assume
only two types of resources or only one machine or that the
vertices have similar resource
requirements~\cite{tpkelly1,StripPart,psv,Goldberg:1997,lmr,reverse-fit,ImKMP15}.
However, the assumptions that underlie these approaches
often do not hold in practical settings, motivating us to 
take a fresh look at this problem.

We illustrate the challenges in the context of job DAGs in
data-analytics clusters. Here, each DAG vertex represents a computational task and edges
encode input-output dependencies. 
Programming models such as SparkSQL, Dryad and Tez~\cite{tez,sparksql,dryad} lead to job DAGs that violate  
many of the above assumptions.
Traces from a large cluster reveal
that (a) DAGs have complex structures with the median job having a
depth of seven and a thousand tasks, (b) there is substantial
variation in compute, memory, network and disk usages across 
tasks~(stdev./avg in requirements is nearly $1$), (c) task runtimes range from sub-second to hundreds of seconds,
and (d) clusters suffer from resource fragmentation across machines.
The net effect of these challenges, based on our analysis, is that the  
completion times of jobs in this production cluster can be improved by 50\%  
for half the DAGs.  

The problem is important because data-analytics clusters run thousands of mission critical jobs each day in enterprises and in the cloud~\cite{bdas,bdas2}.
Even modest improvements in job throughput significantly improves the ROI (return-on-investment)
of these clusters; and quicker job completion reduces the lag between data collection and decisions (i.e., ``time to insight'')
which potentially increases revenue~\cite{velocityLatency}.

%

\vskip .05in
To identify a good schedule for one DAG, we observe that the pathologically bad schedules in today's approaches mostly arise due to these reasons:
(a) long-running tasks have no other work to overlap
with them and (b) the tasks that are runnable do not pack well with
each other.  Our core idea, in response, is rather simple: identify the potentially {\em troublesome} tasks, such as those that run for a very long time
or are hard to pack, and place them first on a {\em virtual resource-time space}.
This space would have $d+1$ dimensions when tasks require $d$ resources; the last 
dimension being time. Our claim is that placing the troublesome tasks first
leads to a good schedule since the remaining
tasks can be placed into resultant holes in this space. 

\vskip .05in
Unfortunately, scheduling one DAG well does not suffice.
Production cluster schedulers have many concurrent jobs, online arrivals
and short-lived tasks~\cite{yarn,mesos,sparrow,omega}. Together, these impose a strict time-budget on scheduling. 
Also, sharing criteria such as fairness have to be addressed during scheduling.
Hence, 
production clusters are forced to use simple, online heuristics.

We ask whether it is possible to efficiently schedule complex DAGs {\em while} retaining the
advantageous properties of today's production schedulers such as reacting in an online manner, considering multiple objectives etc.

\vskip .05in
To this end, we design a new cluster scheduler~{\name}.
At job submission time or soon thereafter, {\name} builds a preferred schedule
for a single job DAG by placing the troublesome tasks first.  {\name}
solves two key challenges in realizing this idea: (1) the best choice
of troublesome tasks is intractable to compute and (2) {\em dead-ends}
may arise because tasks are placed out-of-order~(e.g., troublesome go first)
and it is apriori unclear how much slack space should be set aside.
{\name} employs a performant search procedure to address the first challenge
and has a placement procedure that provably avoids dead-ends for the second challenge.
\autoref{fig:graphene_overview} shows an example. 
\begin{figure}[t!]
\centering
\includegraphics*[width=3.3in]{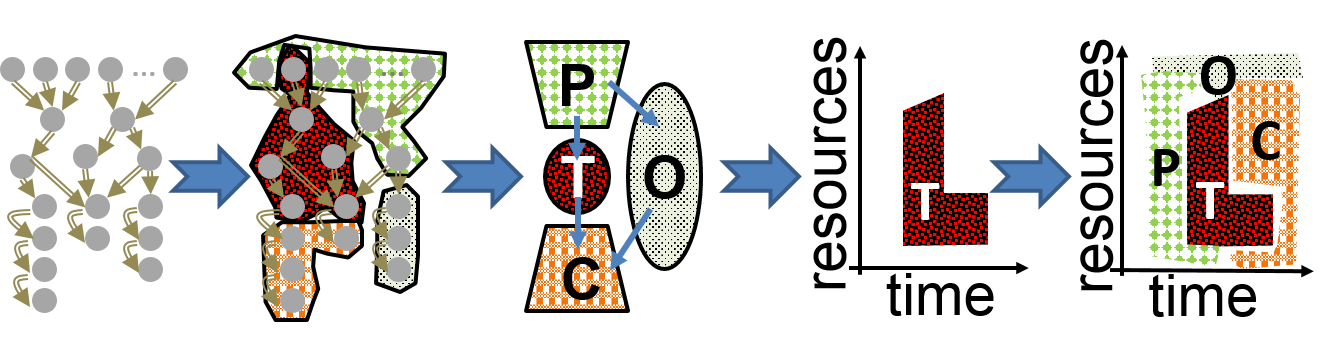}
\vskip -.05in
\ncaption{Shows steps taken by {\name} from a DAG on the left to its schedule on the
right. Troublesome tasks~${\tt T}$~(in red) are placed
first. The remaining tasks~(parents~${\tt P}$, children~${\tt C}$ and other~${\tt O}$) 
are placed {\em on top} of ${\tt T}$ in a careful order to ensure compactness and respect
dependencies.\label{fig:graphene_overview}}
\end{figure}

The schedules constructed for each DAG are passed on 
to a second online component of {\name} which coordinates
between the various DAGs running in the cluster and also reconciles between their multiple, potentially discordant, objectives. 
For example, a fairness scheme such as DRF may
require a certain job to get resources next,
but multi-resource packing--which we use to reduce resource fragmentation--or the preferred schedules above may indicate that some other
task should be picked next.
Our reconciliation heuristic, colloquially, attempts to {\em follow
the majority}; that is it can violate an objective, say fairness,
when multiple other objectives counterweight it.
However, to maintain predictable
performance, our reconciliation heuristic 
limits maximum unfairness to an operator-configured threshold.

We have implemented the two components of {\name} in Apache YARN and Tez
and have experimented with jobs from TPC-DS, TPC-H and other benchmarks on a 200 server
cluster. Further, we also evaluate {\name} in simulations on 20,000 DAGs
from a production cluster.

To summarize, we make theoretical as well as practical
contributions in this work.  Our key contributions are:
\begin{Itemize} 
\item A characterization of the DAGs seen in production at a large
  enterprise and an analysis of the performance of various DAG
  scheduling algorithms~(\xref{sec:packdeps}).
\item A novel DAG scheduler that combines multi-resource packing and
  dependency awareness~(\xref{sec:design}).
\item An online scheduler that mimics the preferred schedules for all
  the jobs on the cluster while bounding unfairness~(\xref{sec:runtime})
  for many models of fairness~\cite{yarnFS,kellyPF,drf}.
\item A new lower bound on the completion time of a DAG~(\xref{sec:betterlb}). Using this we show that the schedules built by {\name}'s offline component
  are within $1.04$ times ${\tt OPT}$ for half of the production DAGs; three quarters are within $1.13$ times
  and the worst is $1.75$ times ${\tt OPT}$.
\item An implementation that we intend to release as open source~(\xref{sec:system}).
\item
Our experiments show that 
{\name} improves the completion time of half of the DAGs by $19-31$\%; the
number varies across benchmarks.  The improvement for production DAGs
is at the high end of the range because these DAGs are more complex and have 
diverse resource demands.
\end{Itemize}

Lastly, while our work is presented in the context of cluster
scheduling, as noted above, similar DAG scheduling problems arise in
other domains.  We offer early results in~\autoref{subsec:other-domains}
from applying {\name} to scheduling DAGs arising in distributed build
systems~\cite{bazel,cloudmake} and in request-response
workflows~\cite{kwiken,c3}.

\section{Primer on Scheduling Job DAGs}
\label{sec:packdeps}


\subsection{Problem definition}
\label{subsec:perdagproblem}
Let each job be represented as a directed acyclic graph $\mathcal{G} = \{V, E\}$. Each node in $V$ is a task with demands for various resources.
Edges in $E$ encode precedence constraints between tasks. Many jobs can simultaneously run in a cluster. The cluster is a group of servers
organized as per some network topology.

{\name} considers task demands along four resource dimensions~(cores, memory, disk and network bandwidth). Depending on placement, tasks may need resources at more than one machine~(e.g., if input is remote) or along network paths. The network bottlenecks are near the edges~(at the source or destination servers and top-of-rack switches) in today's datacenter topologies~\cite{vl2,fattree,jupiterRising,fbdcn}. Some systems require users to specify the DAG $\mathcal{G}$ explicitly~\cite{hadoop,mapreduce,spark} whereas others use a query optimizer to generate $\mathcal{G}$~\cite{SCOPE}.  Production schedulers already allow users to specify task demands~(e.g., [1 core, 1~GB] is the default for tasks in Hadoop 2.6). Note that such annotation tends to be incomplete~(network and disk usage is not specifiable) and is in practice significantly overestimated since tasks that exceed their specified usage will be killed. Similar to other reports~\cite{rope,ccloud,optimus}, up to $40$\% of the jobs in the examined cluster are {\em recurring}. For such jobs, {\name} uses past job history to estimate task runtimes and resource needs. For the remaining ad-hoc jobs, {\name} uses profiles from similar jobs and adapts these profiles online~(see~\xref{sec:system}).


Given a set of concurrent jobs $\{\mathcal{G}\}$, the cluster scheduler maps tasks on to machines while meeting resource capacity limits and dependencies between tasks. 
\cut{
Tasks may be allocated a resource vector $\mathbf{a}$ that is less than their (peak) demands, i.e. $\mathbf{a} \leq \mathbf{r}$. Task duration $d$ is a convex function of resource allocation $\mathbf{a}$ defined for $\mathbf{a} \in [{\bf 0}, \mathbf{r}]$. That is, allocating less than peak resources causes the task to take longer to run. 
}
Improving {\bf performance}---measured in terms of the job throughput~(or
makespan) and the average job completion time---is crucial, while
also maintaining {\bf fairness}---measured in terms of how resources are divided amongst
groups of jobs per some requirement~(e.g., DRF or slot-fairness).  




\begin{figure}[t!]
\begin{subfigure}[b]{0.5\textwidth}
\centering
\includegraphics[width=2in]{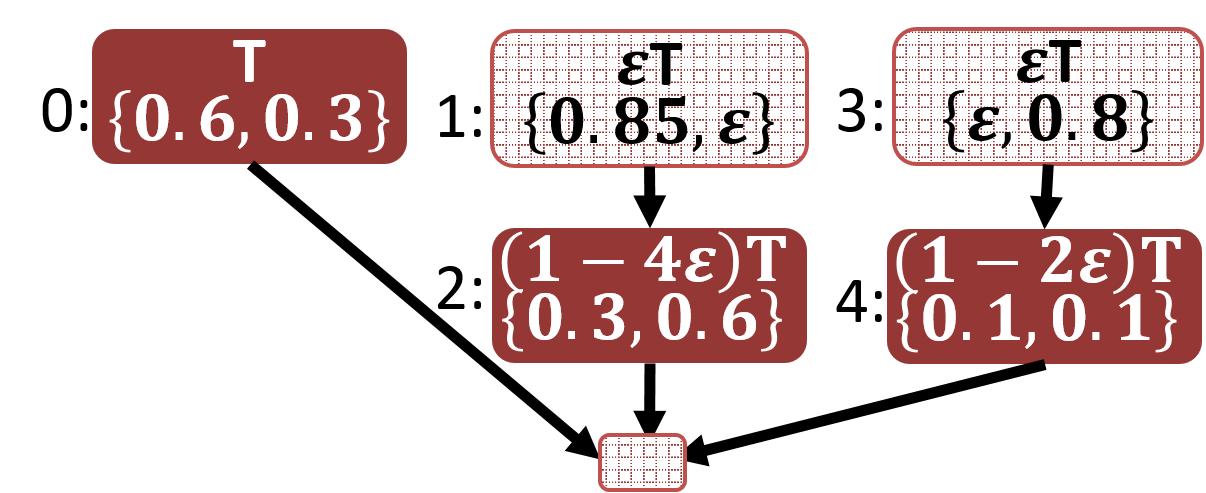}
\end{subfigure}
\begin{subfigure}[b]{\textwidth}
\begin{small}
\begin{tabular}{l|l|r|c}
{\bf Technique} & {\bf Execution Order} & {\bf Time} & {\bf Worst-case}\\\hline
${\tt OPT}$ & {$\{t_1, t_3\} \rightarrow \{t_0, t_2, t_4\} \rightarrow$} & {$T$} & {$-$}\\
${\tt CPSched}$ & {$t_0 \rightarrow t_3 \rightarrow t_4 \rightarrow t_1 \rightarrow t_2\rightarrow$} & {$3T$} & {$O(n)\times{\tt OPT}$}\\
{\bf Tetris} & {$t_0\rightarrow t_1\rightarrow t_2\rightarrow t_3\rightarrow t_4\rightarrow$} & {$3T$} & {$O(d)\times{\tt OPT}$}\\
\hline
\end{tabular}
\end{small}
\end{subfigure}
\vspace{0mm}\ncaption{An example DAG where Tetris~\cite{tetris} and Critical Path Scheduling 
take $3\times$ longer than the optimal algo~${\tt OPT}$. Here, {\name} equals ${\tt OPT}$. Details are in~\xref{subsec:problem}. Assume $\varepsilon\rightarrow0$.\label{fig:eg1}}
\end{figure}

\subsection{An illustrative example}
\label{subsec:problem}
We use the DAG shown in Figure~\ref{fig:eg1} to illustrate the scheduling issues.
Each node represents one task: the node labels represent
the task duration~(top) and the demands for two
resources~(bottom). Assume that the total resource available is $1$
for both resources and let $\varepsilon$ represent a small value. 

Intuitively, a good schedule would overlap the long-running tasks
shown with a dark background.  The resulting optimal schedule~(${\tt
  OPT}$) is shown in the table (see Figure~\ref{fig:eg1}). ${\tt OPT}$
overlaps the execution of all the long-running tasks-- $t_0, t_2$ and
$t_4$-- and finishes in $T$.  However, such long-running/resource
intensive tasks can be present anywhere in the DAG, and it is unlikely
that greedy local schedulers can overlap these tasks.  
To compare, the table also shows the schedules generated by a
typical DAG scheduler, and a state-of-the-art {\em packer} which
carefully packs tasks onto machines to maximize resource utilization. We discuss them next.

DAG schedulers such as critical path scheduling~(${\tt CPSched}$)
pick tasks along the critical path (CP) in the DAG.  The CP for a
task is the longest path from the task to the job output.  The figure
also shows the task execution order with ${\tt CPSched}$.\footnote{CP
  of $t_0, t_1, t_3$ is $T, T(1-3\varepsilon)$ and $T(1-\varepsilon)$
  respectively.  The demands of these tasks ensure that they cannot
  run simultaneously.}  ${\tt CPSched}$ ignores the resources needed
by tasks and does not pack. Consequently, for this example,
${\tt CPSched}$ performs poorly because it does not schedule tasks
that are not on the critical path first~(such as $t_1, t_3$) even
though doing so reduces resource fragmentation by overlapping the
long-running tasks.

On the other hand, packers such as, Tetris~\cite{tetris}, pack tasks to machines by matching along multiple resource dimensions. 
Tetris greedily picks the task with the highest value of the dot product between task's demand 
vector and the available resource vector. The figure also shows the task execution order with 
Tetris.\footnote{Tetris' packing score for each task, in descending order, is $t_0$=$t_2$=$0.9$, $t_1$=$0.85$, $t_3$=$0.8$ and $t_4$=$0.2$.} 
Tetris does not account for dependencies. Its packing heuristic only considers the tasks that are currently 
schedulable. In this example, Tetris performs poorly because it will not choose locally inferior 
packing options~(such as running $t_1$ instead of $t_0$) even when doing so can 
lead to a better global packing.

{\name} achieves the optimal schedule for this example. When searching
for troublesome subsets, it will consider the subset $\{t_0, t_2,
t_4\}$ because these tasks run for much longer. As shown in
Figure~\ref{fig:graphene_overview}, the troublesome tasks will be
placed first. Since there are no dependencies among them, they will
run at the same time. The parents~($\{t_1, t_3\}$) and any
children are then placed {\em on top}; i.e., compactly before and
after the troublesome tasks.

\subsection{Analyzing DAGs in Production}
\label{subsec:DAG-analysis}
\label{subsec:potential-gains}
We examined the production jobs from a cluster of tens of thousands of servers at a large enterprise. We also analyzed jobs from a $200$ server cluster that ran Hive~\cite{hive} jobs and jobs from a high performance computing cluster~\cite{condor}.

\begin{figure}[t!]
\centering
\includegraphics*[width=3.2in]{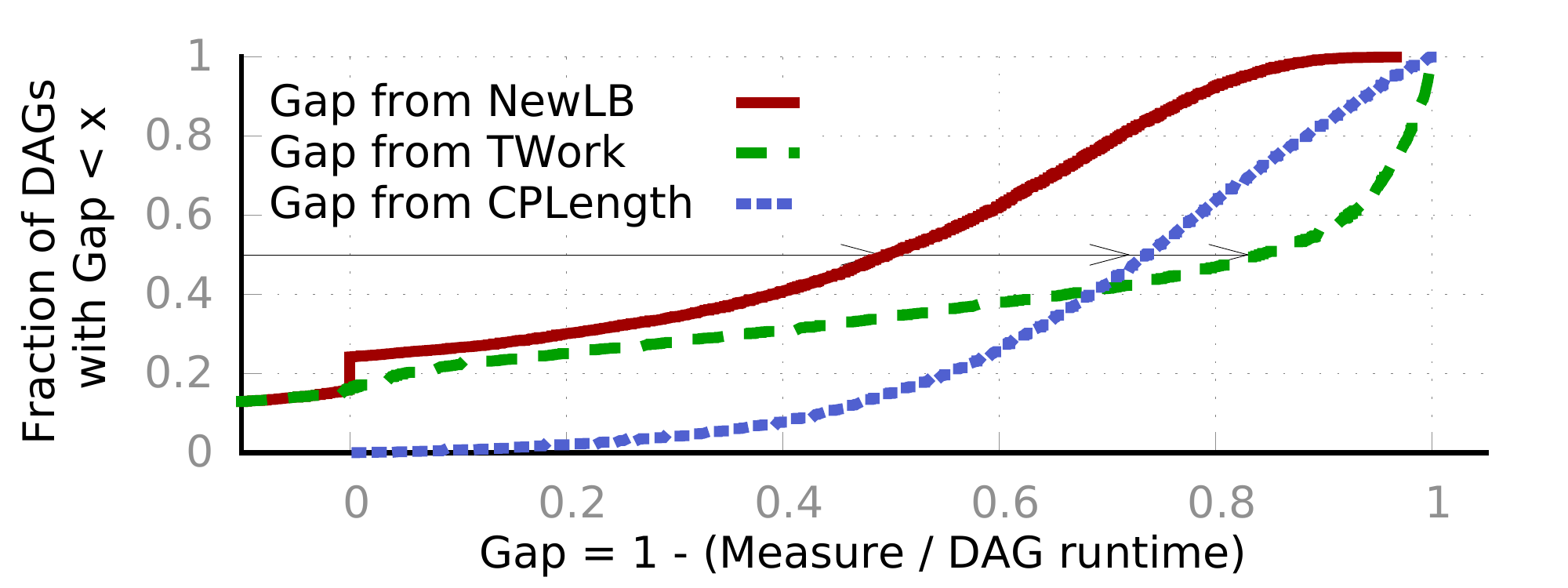}
\ncaption{CDF of \emph{gap} between DAG runtime and several measures.  Gap is computed as $1-\frac{\mbox{measure}}{\mbox{DAG runtime}}$.\label{fig:gap}}
\end{figure}
To quantify potential gains, we compare the runtime of the DAGs in production to three measures. The first measure, ${\tt CPLength}$ is the duration of the DAG's critical path. If the available parallelism is infinite, the DAG would finish within ${\tt CPLength}$. The second measure, ${\tt TWork}$, is the total work in the DAG normalized by the cluster share of that DAG. If there were no dependencies and perfect packing, a DAG would finish within ${\tt TWork}$. In practice, both of these measures are quite loose-- the first ignores all the
work off the critical path and the second ignores dependencies. Hence, our third measure is a new improved lower bound~${\tt NewLB}$ that uses the specific structure of data-parallel DAGs.  Further details are in~\xref{sec:betterlb} but intuitively ${\tt NewLB}$ leverages the fact that unlike random DAGs, all the tasks in a job stage~(e.g., a map or reduce or join) have similar dependencies, durations and resource needs. 

Figure~\ref{fig:gap} plots a CDF of the gap over all DAGs for these three measures. Observe that half of the jobs have a gap of over $70$\% for both ${\tt CPLength}$ and ${\tt TWork}$. The gap relative to ${\tt NewLB}$ is smaller, indicating that the newer bound is tighter, but the gap is still over $50$\% for half of the jobs. That is, they take over two times longer than they could.

A few issues are worth noting for this result. First, some DAGs finish faster than their ${\tt TWork}$ and ${\tt NewLB}$ measures. This is because our production scheduler is work conserving and can give jobs more than their fair share. Second, we know that jobs take longer in production because of runtime artifacts such as task failures or stragglers~\cite{mantri,late}. What fraction of the gap is explained due to these reasons? When computing the job completion times to use in this result, we attempted to explicitly avoid these issues as follows. First, we chose the fastest completion time from among groups of related recurring jobs. It is unlikely that every execution suffers from failures. Second, we shorten the completion time of a job by deducting all periods when the job has fewer than $10$ tasks running concurrently. This explicitly corrects for stragglers--one or a few tasks holding up job progress. Hence, we believe that the remaining gap is likely due to the scheduler's inability to pack tasks with dependencies.

\begin{table}[t!]
\centering
{\small
\begin{tabular}{p{.6in}||c|c|c|c|c|c|c}
& {CPU} & {Mem.} & \multicolumn{2}{c|}{Network} & \multicolumn{2}{c|}{Disk} \\\cline{4-7}
& & & Read & Write & Read & Write\\\hline
Enterprise: Private Stack& 0.76 & 1.01 & 1.69 & 7.08 & 1.39 & 1.94 \\\hline
Enterprise: Hive & 0.89 & 0.42 & 0.77 & 1.34 & 1.59 & 1.41  \\\hline
HPC: Condor & 0.53 & 0.80 & N/A & N/A & \multicolumn{2}{c|}{1.55 (R+W)} \\
\end{tabular}
}
\ncaption{Coefficient-of-variation~($={\mbox{stdev.}}/{\mbox{avg.}}$) of tasks' demands for various resource. Across three examined frameworks, tasks exhibit substantial variability~(CoV $\sim 1$) for many resources.\label{tab:resource_cov}}
\end{table}
\begin{table}[t!]
\centering
{\small
\begin{tabular}{p{.7in}|c|c|c|c|c}
\multirow{2}{*}{``Work'' that is \ldots}& \multicolumn{5}{c}{Percentage of total work in the DAG}\\
& [0-20) & [20-40) & [40-60) & [60-80) & [80-100]\\\hline
{on CriticalPath} & 14.6\% & 13.2\% & 15.2\% & 15.3\% & {\bf 41.6}\%\\
{``unconstrained''} & 15.6\% & 20.4\% & 14.2\% & 16.4\% & {\bf 33.3}\%\\
{``unordered''} & 0 & 3.6\% & 11.8\% & 27.9\% & {\bf 56.6}\% \\\hline
\end{tabular}
}\ncaption{Bucketed histogram of where the work lies in DAGs. Each entry denotes the fraction of all DAGs that have the metric labeled on the row in the range denoted by the column. For example, $14.6\%$ of DAGs have $[0, 20)\%$ of their total work on the critical path.\label{tab:whereIsMass}}
\end{table}

\vskip .05in
To understand the causes for the performance gap further, we characterize the DAGs along the following dimensions:

\textbf{What do the DAGs look like?} By depth, we refer to the number of tasks on the critical path. A map-reduce job has depth $2$. We find that the median DAG has depth 7. Further, we find that the median~(75th percentile) task in-degree and out-degree are 7~(48) and 1~(4) respectively. If DAGs are  chains of tasks, in- and out-degree's will be $1$. 
\iflongversion
\begin{figure}[t!]
\includegraphics[width=3in]{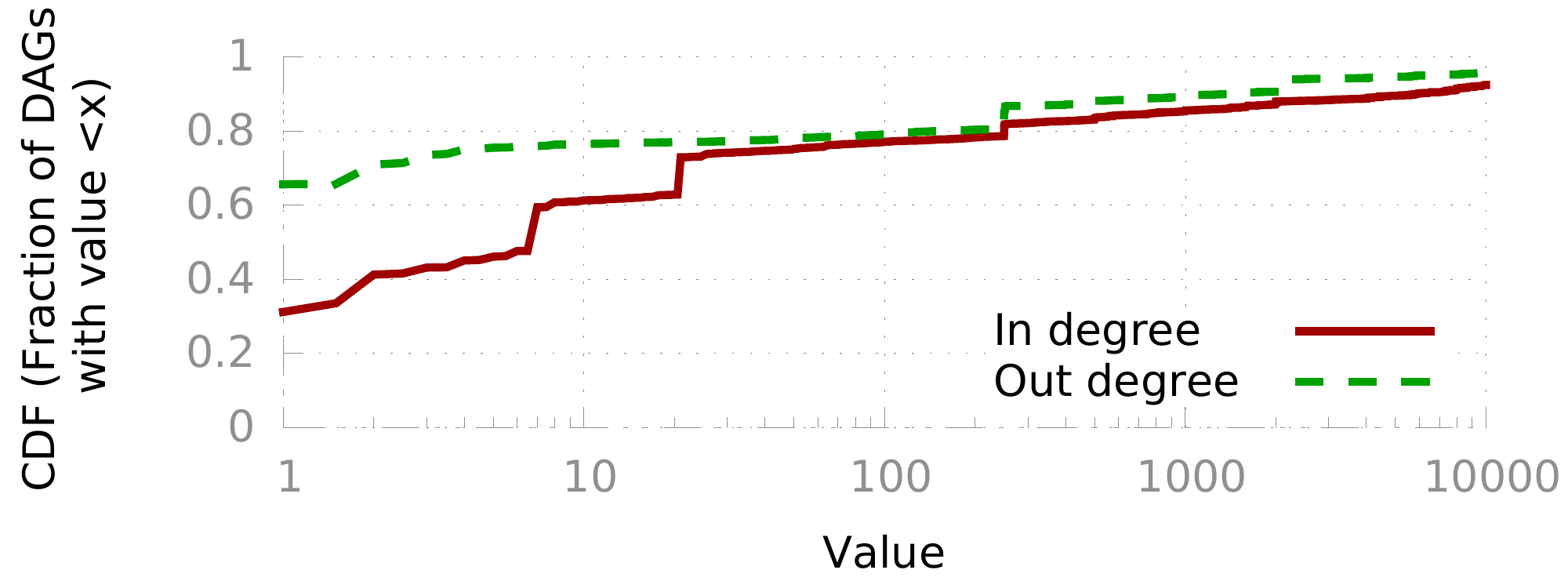}
\ncaption{Characterizing structural properties of the DAGs\label{fig:dag_more_detail}}
\end{figure}
Figure~\ref{fig:dag_more_detail} plots various structural properties of the DAGs including degree distributions.
\else
A more detailed characterization of DAGs including tree widths and path widths has been omitted for brevity.
\fi
Our summary is that the vast majority of DAGs have complex structures.

{\bf How diverse are the resource demands of tasks?}  Table~\ref{tab:resource_cov} shows the coefficient-of-variation~(CoV) across tasks for various resources.
We find that the resource demands vary substantially. 
The variability is possibly due to differences in work at each task: some are compute heavy~(e.g., user-defined code that processes videos) whereas other tasks are memory heavy~(e.g., in-memory sorts).

{\bf Where does the work lie in a DAG?}
We now focus on the more important parts of each DAG-- the tasks that do more work~(measured as the product of task duration and resource needs). Let ${\tt CPWork}$ be the total work in the tasks that lie on the critical path. From Table~\ref{tab:whereIsMass}, $42$\% of DAGs have ${\tt CPWork}$ above $80$\%. DAG-aware schedulers may do well for such DAGs. Let ${\tt UnconstrainedWork}$ be the total work in tasks with no parents (i.e., no dependencies). We see that roughly $33$\% of the DAGs have ${\tt UnconstrainedWork}$ above $80$\%. Such DAGs will benefit from packers. The above cases are not mutually exclusive and together account for $54$\% of DAGs. For the other $46$\% of DAGs, neither packers nor criticality-based schedulers may work well. 

Let ${\tt MaxUnorderedWork}$ be the largest work in a set of tasks that are neither parents nor children of each other. Table~\ref{tab:whereIsMass} shows that $57$\% of DAGs have ${\tt MaxUnorderedWork}$ above $80$\%. That is, if ancestors of the unordered tasks were scheduled appropriately, substantial gains can accrue from packing the tasks in the maximal unordered subset.

From the above analysis, we observe that (1) production jobs have
large DAGs that are neither a bunch of unrelated stages nor a chain of stages, 
and (2) a packing+dependency-aware scheduler can offer substantial improvements.

\subsection{Analytical Results}
\label{subsec:hits}
We take a step back to offer some more general comments.
First, DAG schedulers have to be aware of dependencies. 
That is, considering just the runnable tasks does not suffice.

\begin{lemma}
Any scheduling algorithm, deterministic or randomized, that does not account for the DAG structure
is at least $\Omega(d)$ times ${\tt OPT}$ where $d$ is the number of resources.
\label{lem:dag_awareness}
\end{lemma}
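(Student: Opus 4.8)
The plan is to exhibit an adversarial family of DAGs on $d$ resources on which any structure-oblivious scheduler pays a factor $\Omega(d)$, and then lift the bound to randomized algorithms via Yao's minimax principle. First I would make precise what it means to ``not account for the DAG structure'': such a scheduler, at every decision point, chooses which of the currently runnable tasks to place using only their per-task profiles (durations and $d$-dimensional demand vectors) and the current residual capacities, with no access to descendants, critical-path lengths, or how much downstream work a task unblocks. The immediate consequence I would use is an \emph{indistinguishability} principle: if two instances present the scheduler with the same time-ordered sequence of runnable-set profiles under its own choices, it must behave identically on both, so it cannot be optimal on both unless the two optima coincide.

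Next I would build the hard instance by generalizing the example of Figure~\ref{fig:eg1}. The core is a set of $d$ \emph{long} tasks $\ell_1,\dots,\ell_d$, each of duration $T$, with $\ell_j$ demanding a full unit of resource $j$ and nothing of the others; since their demands are supported on disjoint coordinates, an optimal schedule can run all $d$ of them \emph{simultaneously} and finish in time $\approx T$. The difficulty I must engineer is that these long tasks become runnable only after clearing short ``enabling'' tasks, arranged so that ${\tt OPT}$---which may use the structure---clears all enablers up front and then overlaps the long tasks, whereas a structure-oblivious scheduler is repeatedly induced to commit resources to the wrong runnable task and thereby serializes the long tasks into $\Omega(d)$ separate rounds, for a makespan $\Omega(dT)$. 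Concretely I would attach to each long task an enabling task that is \emph{profile-identical} to a decoy, so that the information distinguishing ``the choice that unblocks an overlap'' from ``the choice that does not'' lives entirely in the (hidden) DAG structure; by the indistinguishability principle the scheduler cannot tell them apart, and an adversarial labeling of which task is the true enabler forces a wrong move at each of the $d$ levels.

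To obtain the bound uniformly over all deterministic structure-oblivious schedulers I would fix the scheduler, read off its tie-breaking and priority rule, and then choose the labeling of enablers-versus-decoys so that its rule errs at every level; because the levels are laid out on the $d$ distinct resource dimensions, the accumulated penalty is $\Omega(d)$ rounds while ${\tt OPT}$ stays at $O(1)\cdot T$. For randomized schedulers I would instead place a uniform distribution over the $d!$ labelings and argue that, against this distribution, every deterministic scheduler errs at a constant fraction of levels in expectation, so $\mathbb{E}[\text{cost}]=\Omega(d)\cdot{\tt OPT}$; Yao's minimax principle then yields the same bound for every randomized scheduler, proving Lemma~\ref{lem:dag_awareness}.

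The main obstacle is the tension inherent in the second step: I need ${\tt OPT}$ to overlap all $d$ long tasks (which forces the enabling tasks to be cheap and quickly clearable, so that they do not inflate ${\tt OPT}$'s critical path) while simultaneously guaranteeing that a single structure-oblivious misstep cascades into full serialization rather than costing only a single extra round. Resolving this requires a careful, recursive placement of the enablers and decoys along the resource dimensions so that a premature commitment---made unavoidable by the scheduler's blindness to structure---withholds exactly the capacity needed to start the remaining long tasks, and I expect verifying this cascade (and that it never penalizes ${\tt OPT}$) to be the technically delicate part of the argument.
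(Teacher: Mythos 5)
Your high-level strategy---an adversarial family of DAGs exploiting profile-indistinguishability of tasks, followed by Yao's minimax principle for randomized schedulers---is the same as the paper's, but your concrete gadget has a gap that you flag yourself and do not close, and as sketched it cannot be closed. The tension is fatal: if each long task $\ell_j$ is gated by a cheap enabler that is profile-identical to a cheap decoy, a structure-oblivious scheduler that simply clears \emph{every} short runnable task (enablers and decoys alike, which by your own requirement must be quickly clearable so as not to inflate ${\tt OPT}$'s critical path) unblocks all $d$ long tasks after a short prefix; since the $\ell_j$ demand disjoint resources, they then overlap and the makespan is $T + o(T)$, not $\Omega(dT)$. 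A wrong guess between an enabler and its decoy costs only one short-task duration, so even a wrong guess at every level accrues $o(T)$ per level---unless the enabling layers themselves carry $\Theta(T)$-scale serial work, in which case you must explain why ${\tt OPT}$ escapes that cost. This is exactly the ``technically delicate part'' you defer, and it is the entire content of the lemma.

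The paper resolves the tension by collapsing the long tasks and the gating machinery into a single object, so that the indistinguishable tasks are \emph{not} cheap relative to their level. The DAG is a chain of $d$ groups; group $i$ consists of $k$ identical unit-duration tasks, each demanding all of resource $i$ (so a group inherently serializes on its own resource and takes time $k$ on its own), and a hidden ``red'' task in group $i$ is the parent of all tasks in group $i+1$. A structure-oblivious scheduler cannot distinguish the red task from its $k-1$ identical siblings, so the adversary places it last in whatever order the scheduler chooses; each level then contributes delay $k$, cascading to makespan $kdt$. ${\tt OPT}$ runs the red chain first in $(d-1)t$ and afterwards runs one leftover task from each group simultaneously (disjoint resources) for $kt$ more, totaling $(k+d-1)t$; choosing $k=\Theta(d)$ yields the $\Omega(d)$ ratio. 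The key point your construction misses is that each wrong guess must burn a full unit of time on a \emph{serialized} shared resource, making the per-level penalty $\Theta(k)$ rather than one decoy's duration. Your randomized step does match the paper's: randomize the red task uniformly within each group, so it completes in expectation halfway through its group, giving expected makespan about $k(d+1)t/2$, and Yao's max-min principle transfers the $\Omega(d)$ bound to randomized schedulers.
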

\vspace{-.1in}
\iflongversion
\begin{proof}
\begin{figure}[t!]
\centering
\includegraphics*[width=3in]{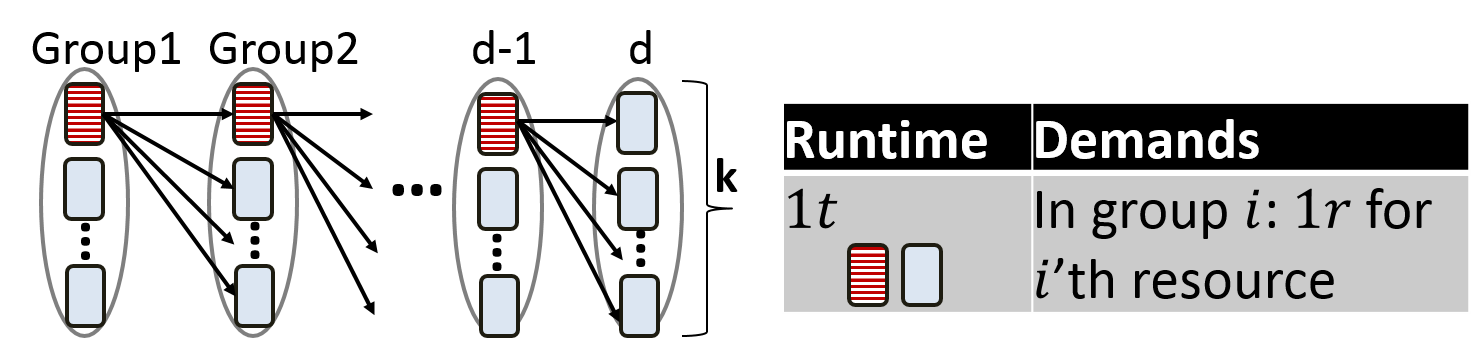}
\ncaption{A counter-example DAG that shows any scheduler not considering DAG structure
will be $\Omega(d)$ times ${\tt OPT}$.\label{fig:dag_ce}}
\end{figure}
\iflongversion
We first handle deterministic schedulers and then extend to randomized schedulers. Consider the DAG structure in Figure~\ref{fig:dag_ce} which is a chain of $d$ groups. All the tasks run for $1t$. Each group~(oval) has $k$ tasks and all of the tasks in the $i$'th group only use the $i$'th resource and require $1r$. Assume the capacity for all $d$ resources is $1r$. In each group there is a certain task colored red that is the parent of all tasks in the next level. This information is unavailable (and unused) by schedulers that do not consider the DAG structure. Hence, regardless of which order the scheduler picks tasks, an adversary can choose the last task in that group to be the red one. This leads to a schedule that takes $k d t$ time. Observe however that {$\tt OPT$} only requires $(k+d-1)t$ since it can schedule the red tasks with children one after the other~(in $(d-1)t$) and afterward one task from each group can run simultaneously~(hence $kt$ more steps). Thus, all deterministic schedulers are $\Omega(d)\times{\tt OPT}$.

To extend to randomized algorithms, we use Yao's max-min principle~\cite{Motwani:1995}. Recall that the principle states that the lower bound on any randomized algorithm is the same as lower bound on deterministic algorithms with randomized input. If we randomize the choice of the red task in each group,  the expected time at which the red task will finish is $kt/2$ and hence the expected schedule time is $k(d+1)t/2$ which is still $\Omega(d)\times{\tt OPT}$.
\else
Figure~\ref{fig:dag_ce} shows an adversarial DAG for which {\em any} scheduler that ignores dependencies will take $\Omega(d)$ times ${\tt OPT}$. Assume cluster capacity is $1r$ for each of the $d$ resources. The DAG has $d$ groups, each having a task filled with red dashes that is the parent of all the tasks in the next group. This information is unavailable (and unused) by schedulers that do not consider the DAG structure. Hence, regardless of which order the scheduler picks tasks, an adversary can choose the last task in a group to be the red one. Hence, such schedulers will take $k d t$ time. {$\tt OPT$} only requires $(k+d-1)t$ since it can schedule the red tasks first~(in $(d-1)t$) and afterwards one task from each group can run simultaneously~($kt$ more steps). We use Yao's max-min principle~\cite{Motwani:1995}~(the lower bound on any randomized algorithm is the same as lower bound on deterministic algorithms with randomized input) to extend the counter-example. If we randomize the choice of the red task, the expected time at which the red task will finish is $kt/2$ and hence the expected schedule time is $k(d+1)t/2$ which is still $\Omega(d)$ times ${\tt OPT}$.
\fi
\end{proof}
\else
For deterministic algorithms, the proof follows from designing an adversarial DAG for any scheduler. We extend to randomized algorithms by using Yao's max-min principle~(see~\autoref{subsec:dag_awareness}).
\fi
Lemma~\ref{lem:dag_awareness} applies to all multi-resource packers~\cite{tetris,vector,apx,onlineVectorPacking} since they ignore dependencies.

Second, and less formally, we note that schedulers have to be aware of resource heterogeneity. 
Many known scheduling algorithms have poor worst-case performance. In particular:
\begin{lemma}
Critical path scheduling can be $\Omega(n)$ times ${\tt OPT}$ where $n$ is the number of tasks in a DAG and Tetris can be $(2d-2)$ times ${\tt OPT}$.
\label{lem:worst_case}
\end{lemma}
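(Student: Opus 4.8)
The plan is to establish each bound separately by constructing adversarial DAGs, mirroring the style of the proof of \autoref{lem:dag_awareness}. For the critical path claim, I would build a DAG in which \texttt{CPSched} is forced to serialize tasks that \texttt{OPT} overlaps, so that the greedy critical-path choices ignore the resource packing that would otherwise let many tasks run concurrently. The illustrative example in \autoref{fig:eg1} already shows a $3\times$ gap for \texttt{CPSched}; I would generalize it to a family parameterized by $n$ where one long task lies on the critical path while $\Theta(n)$ short tasks off the critical path each require a disjoint resource profile. \texttt{CPSched} follows the critical path and runs the short tasks essentially one at a time (since each is only runnable after the critical-path task it hangs off, or because its greedy ordering never packs them), yielding a makespan of $\Theta(n)$ unit steps, whereas \texttt{OPT} packs all the short tasks simultaneously alongside the long task and finishes in $O(1)$ time. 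This gives the $\Omega(n)\times\texttt{OPT}$ separation.

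For the Tetris bound, the goal is to exhibit a runnable set of tasks on which Tetris's dot-product heuristic makes a choice that is a factor $(2d-2)$ worse than optimal packing. Here I would set up $d$ resources and a collection of tasks whose demand vectors are chosen so that the dot product with the available-resource vector is maximized by a task (or set of tasks) that packs poorly, while the globally optimal packing uses tasks with slightly lower dot-product scores but far better aggregate utilization. The footnote accompanying \autoref{fig:eg1} already records the specific packing scores Tetris computes, so I would lean on that structure: Tetris greedily commits to the high-score task, leaving resource fragmentation that forces the remaining tasks to be scheduled across $(2d-2)$ as many rounds as \texttt{OPT}. The $(2d-2)$ factor strongly suggests a worst-case instance built so that each of the $d$ dimensions can be ``blocked'' by a poorly chosen task, and the constant reflects the ratio between greedy and optimal vector bin-packing in $d$ dimensions.

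The main obstacle is pinning down the Tetris instance so that the constant comes out to exactly $2d-2$ rather than merely $\Theta(d)$. This requires carefully tuning the demand vectors and the available-resource vector so that (a) the dot-product heuristic provably prefers the bad task at every step, and (b) the induced schedule length is precisely $(2d-2)$ times \texttt{OPT} in the limit $\varepsilon\rightarrow0$. Getting the tie-breaking and the boundary cases right, and verifying that the adversary can maintain the bad configuration round after round, is the delicate part. The critical-path lower bound, by contrast, should follow more directly from scaling up the example already in the paper, since only an $\Omega(n)$ (not an exact constant) is claimed there.
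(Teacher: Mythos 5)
Your overall strategy (an adversarial DAG per scheduler, in the style of \autoref{lem:dag_awareness}) matches the paper, but both of your constructions have gaps as sketched. For the critical-path bound, giving the $\Theta(n)$ off-critical-path tasks \emph{disjoint} resource profiles defeats the purpose: tasks that fit together will be run concurrently by any work-conserving executor, including one that merely \emph{orders} tasks by critical-path length, so ${\tt CPSched}$ is not forced to serialize them. Your fallback---making each short task runnable only after a critical-path task---imposes the same precedence on ${\tt OPT}$, so you can no longer claim ${\tt OPT}$ finishes in $O(1)$. The paper's instance inverts your roles and uses resource \emph{conflict}, not disjointness: it has $n$ long \emph{thin} tasks (duration $\sim 1t$, demand $\frac{1}{n}r$) which ${\tt OPT}$ overlaps in a single wave, plus $n$ short \emph{wide} tasks (duration $\varepsilon t$, demand $(1-\varepsilon)r$) whose critical-path lengths are tuned so that ${\tt CPSched}$ alternates long, wide, long, wide. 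Since a wide task cannot co-run with any long task ($\frac{1}{n}+1-\varepsilon>1$), the alternation serializes all $n$ long tasks: $\sim nt$ versus $\sim 1t$ for ${\tt OPT}$. The ingredient you are missing is that the interleaved blockers must demand nearly the whole machine.

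For Tetris, your plan drops the one feature the paper's $(2d-2)$ construction actually exploits: dependencies. You pose it as a pure vector-packing adversary among simultaneously runnable tasks, but Tetris packs reasonably well in that setting; the paper instead builds $d$ groups where group $i$ has two long tasks ($1t$, demand $\frac{1}{2}r$ of resource $i$ and $\varepsilon r$ of the rest) and two \emph{chains} of wide tasks ($\varepsilon t$, demand $(\frac{1}{2}+\varepsilon)r$ of resource $i$). All $2d$ long tasks can co-run, so ${\tt OPT}$ finishes in $\sim 1t$; but the dot-product score always prefers a runnable long task over any wide task, and every not-yet-scheduled long task has a wide parent that cannot overlap a long task scheduled earlier, so Tetris's runnable-only greedy view serializes the long tasks across groups and takes $(2d-2)t$. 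The constant thus falls out of counting serialized long tasks, not from a greedy-versus-optimal vector-bin-packing ratio as you conjecture---which is why your stated obstacle (tuning demand vectors so a fragmentation argument yields exactly $2d-2$) is the wrong delicate part to be worried about. Reconstructing the wide-parent-chain gadget (the concrete DAGs appear in \autoref{sec:worst-case}) is what the proof actually requires, and it is the same dependency-blindness phenomenon formalized in \autoref{lem:dag_awareness}, instantiated for a specific packer.
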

\iflongversion
\begin{proof}
\begin{figure}[t!]
\centering
\includegraphics[width=3in]{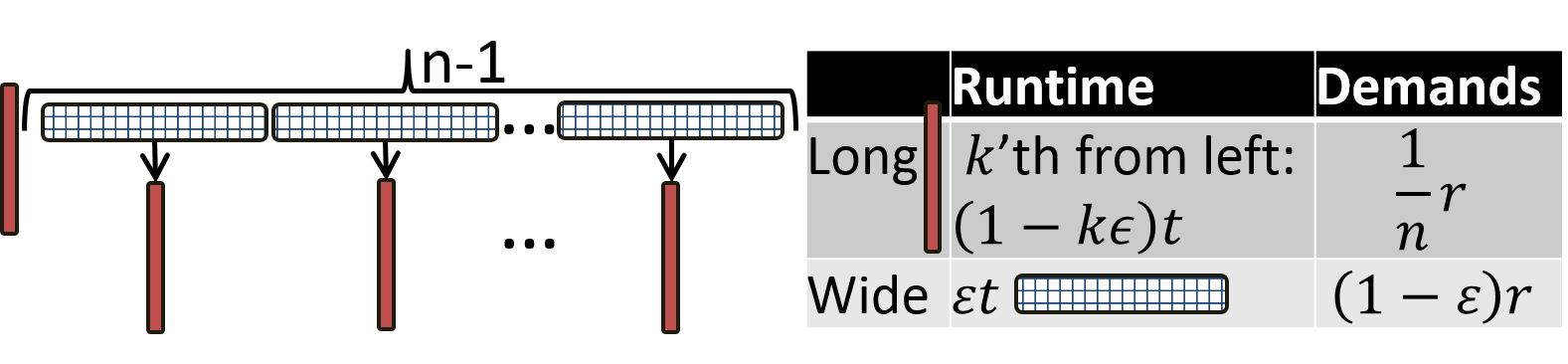}
\ncaption{An example DAG where critical path scheduling is $O(n)$ times ${\tt OPT}$ where $n$ is the number of nodes in the DAG.\label{fig:cpntimesworse}}
\end{figure}
\begin{figure}[t!]
\centering
\includegraphics[width=3in]{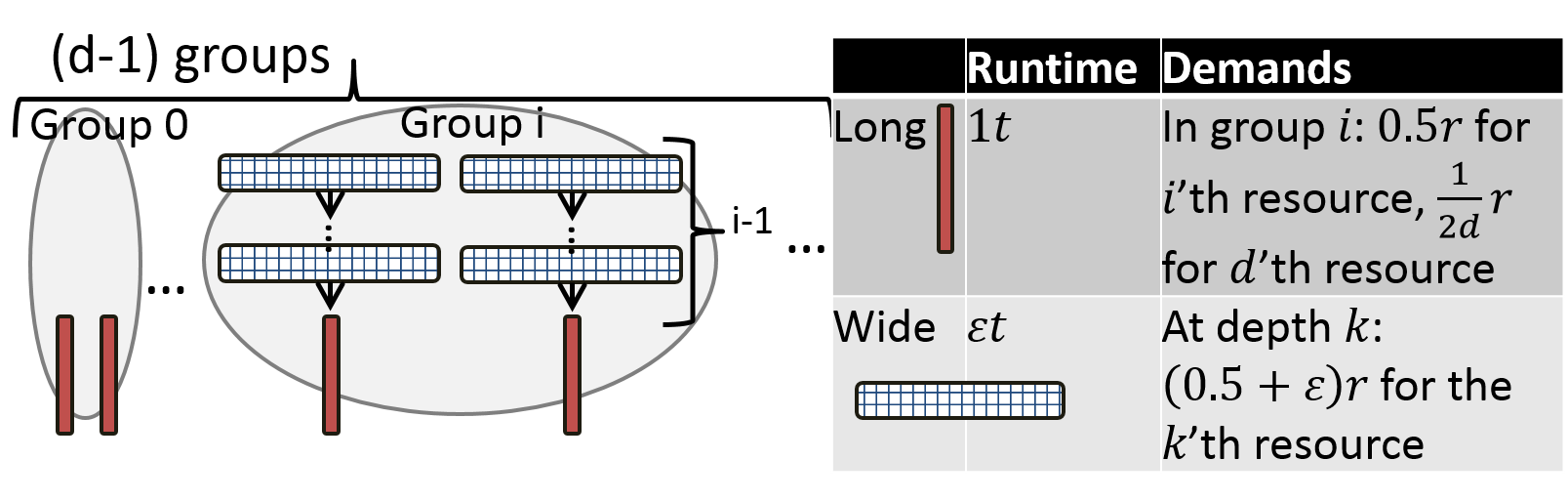}
\ncaption{An example DAG where Tetris~\cite{tetris} is $2d-2$ times ${\tt OPT}$ when tasks use $d$ kinds of resources.\label{fig:tetrisdtimesworse}}
\end{figure}
\iflongversion
\begin{proof}
\else
\noindent{\it Proof of Lemma~\ref{lem:worst_case}: }
\fi
\iflongversion
Figure~\ref{fig:cpntimesworse} shows an example DAG where CPSched takes $n$ times worse than ${\tt OPT}$ for a DAG with $2n$ tasks. As before, assume capacity is $1r$. The long tasks have duration $\sim 1t$ and demand $\frac{1}{n}r$ whereas the wide tasks have duration $\varepsilon t$ and require $(1-\varepsilon)r$. The critical path lengths of the various tasks are such that CPSched alternates between one long task and one wide task left to right. However, it is possible to overlap all of the long running tasks. This DAG completes at $\sim nt$ and $\sim 1t$ with CPSched and {\tt OPT} respectively.

Fig.~\ref{fig:tetrisdtimesworse} shows an example where Tetris~\cite{tetris} is $2d$ times ${\tt OPT}$. The task durations are $1t$ for long tasks and $\varepsilon t$ for wide tasks. There are two long tasks per group and those in the $i$'th group require $\frac{1}{2}r$ of the $i$'th resource and $\varepsilon r$ of other resources. Further, group $i$ has two chains of wide tasks. All wide tasks at depth $i$ require $(\frac{1}{2}+\varepsilon)r$ of the $i$'th resource. As in the above example, all long tasks can run together, hence ${\tt OPT}$ finishes in $1t$. However, Tetris greedily schedules the task with the highest dot-product between  task demands and available resources.  The DAG is constructed such that whenever a long task is runnable, it will have a higher score than any wide task. Further, for every long task that is not yet scheduled, there exists at least one wide parent that cannot overlap any long task that may be scheduled earlier. Hence, Tetris takes $2dt$ which is $2d$ times ${\tt OPT}$.

Combining these two principles, we conjecture that it is possible to find similar examples for any scheduler that ignores dependencies or is not resource-aware.
\else
Figure~\ref{fig:cpntimesworse} shows an example DAG where ${\tt CPSched}$ takes $n$ times worse than ${\tt OPT}$ for a DAG with $2n$ tasks. The critical path lengths of the various tasks are such that ${\tt CPSched}$ alternates between one long task and one wide task left to right. However, it is possible to overlap all of the long running tasks. This DAG completes at $\sktilde nt$ and $\sktilde 1t$ with ${\tt CPSched}$ and ${\tt OPT}$ respectively.
Fig.~\ref{fig:tetrisdtimesworse} shows an example where Tetris~\cite{tetris} is $2d-2$ times ${\tt OPT}$. As in the above example, all long tasks can run together, hence ${\tt OPT}$ finishes in $1t$. Tetris greedily schedules the task with the highest dot-product between task demands and available resources.  The DAG is constructed such that whenever a long task is runnable, it will have a higher score than any wide task. Further, for every long task that is not yet scheduled, there exists at least one wide parent that cannot overlap any long task that may be scheduled earlier. Hence, Tetris takes $(2d-2)t$ which is $(2d-2)$ times ${\tt OPT}$. Combining these two principles, we conjecture that it is possible to find similar examples for any scheduler that ignores dependencies or is not resource-aware.
\fi
\iflongversion
\end{proof}
\else
\qed
\fi

\end{proof}
\else
The proof is by designing adversarial DAGs for each scheduler~(see \autoref{sec:worst-case}).
\fi

To place these results in context, note that $d$ is about $4$~(cores, memory, network, disk) and can be larger when tasks require resources at other servers or on many network links. Further, the median DAG has hundreds of tasks~($n$). {\name} is close to ${\tt OPT}$ on all of the described examples. Furthermore, {\name} is within $1.04$ times optimal for half of the production DAGs~(estimated using our new lower bound).

Finally, we note the following:
\begin{lemma}
If there were no precedence constraints and tasks were malleable, ${\tt OPT}$ is achievable by a greedy algorithm.
\label{lem:simplicity}
\end{lemma}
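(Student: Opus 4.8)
The plan is to show that in this restricted regime the optimal makespan coincides with the total-work bound ${\tt TWork}$ already introduced, and that a natural water-filling schedule meets it. Write the cluster capacity as a vector $\mathbf{c}=(c_1,\dots,c_d)$ and let task $i$ have peak demand $\mathbf{r}_i$ and nominal duration $d_i$, so its total work in resource $j$ is $w_{ij}=d_i r_{ij}$ and the aggregate work is $W_j=\sum_i w_{ij}$. First I would establish the lower bound: at every instant the combined consumption of resource $j$ across all running tasks is at most $c_j$, so integrating over any schedule of makespan $T$ gives $T\,c_j \ge W_j$ for each $j$, whence ${\tt OPT}\ge L:=\max_j W_j/c_j = {\tt TWork}$.

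For the matching upper bound I would exhibit a greedy schedule that runs every task simultaneously at a constant fractional rate. Because the tasks are malleable, allocate task $i$ the resource vector $\mathbf{a}_i=\alpha_i\,\mathbf{r}_i$ with $\alpha_i = d_i/L$ held fixed over the whole interval $[0,L]$. Two checks make this feasible and exactly optimal: capacity is respected, since $\sum_i \alpha_i r_{ij}= \tfrac{1}{L}\sum_i d_i r_{ij}=W_j/L\le c_j$ by the definition of $L$ (with equality on the bottleneck resource); and each task completes precisely at time $L$, since its accumulated progress is $\alpha_i\cdot L = d_i$. This rate is feasible ($\alpha_i\le 1$) whenever no single task is itself the bottleneck, a point I return to below. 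The greedy schedule therefore has makespan $L\le {\tt OPT}$, and combined with the lower bound we conclude ${\tt OPT}=L$ is achieved greedily.

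Next I would invoke convexity to confirm that no cleverer allocation can beat $L$. Allocating a task resources out of proportion to $\mathbf{r}_i$ only wastes the over-allocated dimensions, so proportional allocation is without loss of generality; and because each task's duration is convex in its allocation, a steady rate over the interval is never worse than a bursty, time-varying one delivering the same average, so the fluid relaxation underlying the water-filling schedule is tight. Restricting attention to the constant-rate schedules above is thus justified, and $L$ is genuinely optimal.

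I expect the main obstacle to be the interaction between the per-task rate cap and the capacity lower bound. If a task's parallelism is capped at its peak demand (so $\alpha_i\le 1$ is binding), a single long but resource-light task can force ${\tt OPT}>{\tt TWork}$, and the lower bound must be strengthened to $L=\max\{\max_j W_j/c_j,\ \max_i d_i\}$. The water-filling argument then carries through verbatim, since $\alpha_i=d_i/L\le 1$ is guaranteed once $L$ dominates every $d_i$; the care is simply in pinning down which form of malleability the model assumes (capped at $\mathbf{r}_i$ versus scalable up to full cluster capacity), as this determines whether the second term is ever active.
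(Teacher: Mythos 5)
Your proposal is correct and matches the paper's own proof (Appendix~\ref{sec:simple_greedy}) essentially step for step: the paper also sets the proportional rates $x_j = p_j/P$ with $P=\max_d \sum_j a_{jd}\,p_j$ (i.e., your $\alpha_i = d_i/L$ with $L={\tt TWork}$), verifies feasibility by summing allocations per resource, and observes that every task finishes simultaneously at the lower bound $P$. The one difference is cosmetic: where you strengthen the bound to $\max\{\max_j W_j/c_j,\ \max_i d_i\}$ to keep $\alpha_i\le 1$, the paper simply assumes $P \geq \max_j p_j$ and calls the other case trivial, so your handling of that edge case is, if anything, slightly more careful.
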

We say a task is malleable if assigning any (non-negative) portion of its demand $p$ will cause it to make progress at rate $p$. In particular, tasks can be paused ($p=0$) at any time which is also referred to as tasks being preemptible.  
\iflongversion
The proof follows from our description of the greedy algorithm in~\autoref{sec:simple_greedy}.
\else
The proof follows by describing the simple greedy algorithm which we omit here for brevity.
\fi

Our summary is that practical DAGs are hard to schedule because of their complex structure as well as discretization issues when tasks need multiple resources~(fragmentation, task placement etc.)

\eat{
Today's big data schedulers~\cite{hadoop,tez,hive} will execute the task that finds a suitable 
placement first and when there is contention use fairness to arbitrate across jobs. Both these criteria are exogenous 
from the perspective of a DAG and hence unlikely to lead to good schedules. To ground these arguments, for illustration purposes,
we perform Monte Carlo simulations that (a) pick the next task to run at random 
from among those that are schedulable and (b) are work-conserving. The expected
runtime is $2.2$T and the likelihood of generating the optimal schedule is $0.22$.}

\cut{
To conclude, we are unaware of any work that schedules the problem on 
hand-- multiple DAGs of tasks, each task requires arbitrary amounts of 
multiple resources, allow malleable allocations.

A practical issue with CP is that it is non-trivial to reliably
estimate task durations. More fundamentally, however, since CP ignores
resource usage, it does an arbitrary job with packing leading to
resource fragmentation and delays. We illustrate the issues using the
DAG in Fig.~\ref{fig:p_eg2}. Critical path for tasks $0, 1$ is $4t,
8t$ respectively. The schedules shown in that figure illustrate that
ignoring the critical path and scheduling task $0$ first leads to an
almost 50\% speed-up. This is because it enables task $2$ and task $3$
to run alongside task $1$, and thereby avoids resource fragmentation.
In fact, the performance loss (such as, loss in throughput, job
slowdown, etc.) induced by focusing strictly on critical path
based-scheduling can be arbitrarily worse. To illustrate, consider the
(slightly larger) DAG in Fig.~\ref{fig:p_eg3}, CP would schedule tasks
left-to-right; an optimal packer could instead schedule the tasks with
$\{\epsilon t, .9r\}$ first such that all the tasks with non-trivial
durations run simultaneously. The end-result is that, due to resource
fragmentation, CP-based scheduling is over $4\times$ slower when
compared to an optimal packing based scheduler.
}

\cut{
the amount of
cluster resources available impacts the choice of schedulers. The more
constrained the resources, the larger the incentive to search for a
good packing in the DAG. Moreover, the trade-off also depends on the
nature of DAG since the set of runnable tasks in the future depends on
which tasks were chosen by prior scheduling decisions, and a poor
choice can have long impact on the overall DAG execution time.
}

\cut{
We take care to point out three aspects. First, the simple example above can be generalized to the case where each node in the graph corresponds to a {\it stage} (such as map or reduce)  consisting of many tasks. Second, suppose task $0$ does get a suitable placement first, there is an opportunity cost for the job in ignoring this and continuing to wait for a suitable placement for task $1$. Indeed, this is a crucial point, and one that we will build on in {\name}. Finally, if there were no resource constraints, that is cluster capacity is $\infty$, then all of the schedules have the same runtime.
}

\cut{
Again, we note a couple broader points. First, the cluster often has many tasks and DAGs to run simultaneously. So, the resource fragmentation in the above example can impact cluster goodput but the impact on DAG runtime is often much larger.  Second, the performance penalty for critical path scheduling due to its inability to pack can be arbitrarily worse. }

\cut{ Recently, some packing schedulers have been
  proposed~\cite{XXX}. They explicitly lower resource
  fragmentation. Because they do not consider task dependencies, they
  are more appropriate for DAGs of small depth (e.g., one map stage
  followed by one reduce). In particular, on the above examples,
  Tetris~\cite{tetris} takes $22t, 8t, 8t$ to execute
  Figures~\ref{fig:p_eg1}, ~\ref{fig:p_eg2}, ~\ref{fig:p_eg3}
  respectively. These results are mixed. Optimal finishes in $15t, 8t,
  8t$. Because it packs, Tetris can improve upon CP. But because it
  ignores the DAG, it can do worse as well. We emphasize that Tetris
  does not look to enhance packability across nodes in the DAG; so,
  even in cases where critical path length is not a deciding factor,
  Tetris can do (much) worse than the optimal packer.  }

\cut{
{\bf think: } What happens with multiple resource dimensions? Greater depth of DAG?
}

\section{Novel ideas in {\namesec}}
\label{sec:ideas}

Cluster scheduling is the problem of matching tasks to machines.
Every practical scheduler today does so in an online manner but
has very tight timing constraints since clusters have 
thousands of servers, many jobs each having many pending tasks and 
tasks that finish in seconds or less~\cite{spark,yarn}. 
Given such stringent time budget, carefully considering large DAGs
seems hopeless. 

As noted in \xref{sec:intro}, a key design decision in {\name} is to divide this problem
into two parts. An offline component constructs careful schedules for a 
single DAG. We call these the {\em preferred schedules}. A second online
component enforces the preferred  schedules of the various
jobs running in the cluster. We elaborate on each of these parts below.  
Figure~\ref{fig:g_overview} shows an example of how the two parts 
may inter-operate in a YARN-style architecture.
Dividing a complex problem into parts and independently solving each part
often leads to a sub-optimal solution. Unfortunately, we have no guarantees 
for our particular division. However, it can scale to large clusters 
and outperforms the state-of-art in experiments.

\vskip 0.15in
To find a compact schedule for a single DAG, our idea is to place the troublesome tasks, i.e., those that can lead to a poor schedule, first onto a virtual space. Intuitively, this maximizes the likelihood that any {\em holes}, un-used parts of the resource-time space, can be filled by other tasks. However, finding the best choice of troublesome tasks is as hard as finding a good schedule for the DAG. We use an efficient search strategy that mimics dynamic programming: it picks subsets that are more likely to be useful and avoids redundant exploration. Further, placing troublesome tasks first can lead to {\em dead-ends}. We define dead-end to be an arrangement of a subset of the DAG in the virtual space on which the remaining tasks cannot be placed without violating dependencies. Our strategy is to divide the DAG into subsets of tasks and place one subset at a time. 
While intra-subset dependencies are trivially handled by schedule construction, inter-subset dependencies are handled by restricting the order in which the various subsets are placed. We prove that the resultant placement has no {\em dead-ends}.

\vskip 0.15in
The online component has to co-ordinate between some potentially discordant directives. Each job running in the cluster offers a preferred schedule for its tasks~(constructed as above). Fairness models such as DRF may dictate which job~(or queue) should be served next. The set of tasks that is advantageous for packing~(e.g., maximal use of multiple resources) can be different from both the above choices. We offer a simple method to reconcile these various directives. Our idea is to compute a real-valued score for each pending task that incorporates the above aspects {\em softly}. That is, the score trades-off violations on some directives if the other directives weigh strongly against it. For example, we can pick a task that is less useful from a packing perspective if it appears much earlier on the preferred schedule. Two key novel aspects are  judiciously overbooking resources and bounding the extent of unfairness. Overbooking allows schedules that overload a machine or a network link if the cost of doing so~(slowing-down of all tasks using that resource) is less than the benefit~(can finish more tasks).  

The offline component of {\name} is described next; the online component is described in Section~\ref{sec:runtime}.

\cut{
We are not
aware of any such online scheduler. In particular,
Tetris~\cite{tetris} cannot enforce a desired DAG schedule, does not
overbook, and can result in arbitrary amounts of unfairness. Our
experiments will show that schedule composition substantially speeds
up DAGs with only a small amount of short-term unfairness.

Below, we highlight some key aspects of our scheduler. 
\begin{enumerate}
\item Place the troublesome tasks, those that can lead to a poor schedule, first onto a virtual space~(\xref{sec:design}).
\item Since finding the best troublesome tasks is intractable, use an efficient
search strategy that cuts a DAG into simplified parts~(\xref{sec:simplify}), pick subsets that are more likely to be useful and avoid
redundant exploration~(\xref{sec:searchspace}).
\item Dependencies can lead to dead-ends. That is, since placing troublesome
tasks first requires out-of-order placement, ensure that a feasible schedule can still
be constructed by carefully ordering how to proceed with the remaining tasks~(\xref{sec:greedypacking}).
\item After breaking tasks into subsets and deciding careful order, use greedy 
heuristics to compactly pack tasks onto space~(\xref{sec:eachset}).
\item Pick the best schedule from among the various searched alternatives. Translate this schedule into a preference ordering over tasks~$t_{\mbox{\scriptsize priScore}}$
for task $t$.
\item To coordinate among jobs running in the cluster: at runtime, compute a ${\tt perfScore}_t$ for all 
queued tasks $t$ that combines desired schedule~($t_{\mbox{\scriptsize priScore}}$) with packing and overbooking.
\item To bound unfairness, maintain deficit counters: the group being treated most unfairly will have highest deficit.
If all groups have deficit below the bound $\kappa C$, pick the task with highest ${\tt perfScore}_t$ overall.
Else, pick only from the group with the highest deficit~(\xref{sec:runtime}).
\item To decide whether to overbook or wait, do a what-if analysis to estimate future task completion times.
Overbook if the benefit from finishing the to-be-scheduled-task early
outweighs the cost of delaying the already running tasks~
\iflongversion
(\xref{sec:overbooking})
\else
\cite{gtr}
\fi
\end{enumerate}
}
\begin{figure}[t!]
\centering
\includegraphics[width=3.2in]{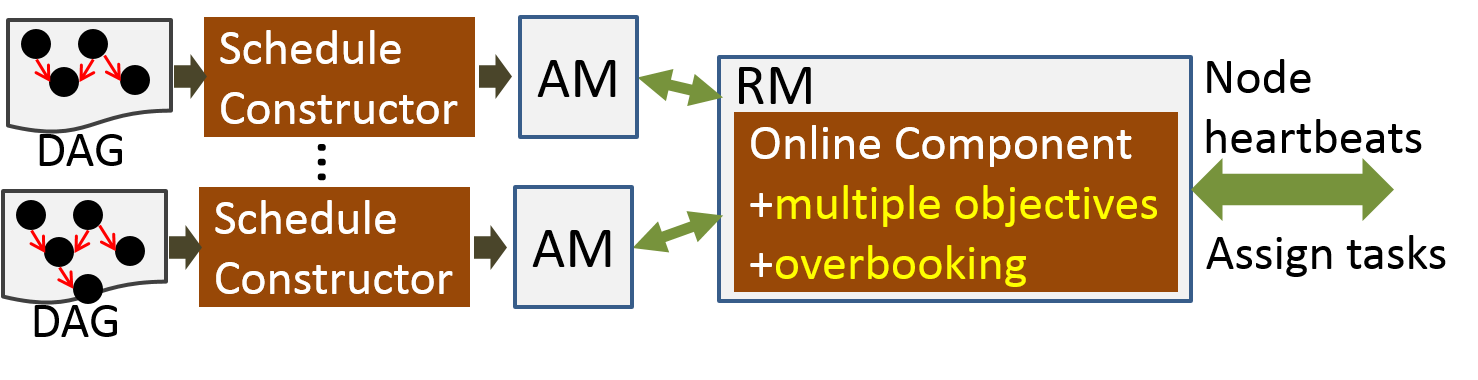}
\vspace{-1.5mm}\ncaption{{\name} builds schedules per DAG at job submission. The runtime 
component handles online aspects. AM and RM refer to the YARN's application 
and resource manager components.\label{fig:g_overview}}
\end{figure}

\eat{
A key new idea in {\name} is to build the schedule by looking at the
whole DAG in a manner akin to dynamic programming. 
\sriram{From hereon, it is a repeat of what is in the Intro.  The next
3 paras should be in 3 sentences?}
Imagine the
schedule to be some layout of the tasks on a $d+1$ dimension space,
where the first $d$ dimensions correspond to resources and the last
corresponds to time. {\name} binds tasks to resources by starting with
the {\em troublesome} tasks. That is, tasks with a long duration or
those with large or peculiarly-shaped resource requirements are laid
out first in this space. Since the space is sparsely populated
initially, doing so allows the troublesome tasks to be
placed compactly. The remaining tasks are subsequently packed to
fill any remaining {\em holes} in the $d+1$ dimension space leading to an
overall compact layout.

We next refine the above idea to account for dependencies between
tasks. {\em Within} the set of troublesome tasks, {\name} uses two
different orders. Let a task be {\em ready} if all its {\em
  ancestor} tasks have already been mapped on to this space. Iteratively picking a
ready task and laying it out {\em after} its last ancestor finishes
will satisfy dependencies. We call this the {\em forward} order.
Alternately, tasks can also be laid out in an analogous {\em backward}
order. That is, a task is ready after all its descendants have been
laid out and iteratively place a ready task before its earliest
descendant begins. These two orders lead to very different layouts.
A few other decisions also affect the schedule: which ready task to place next when 
many tasks are ready and where to place that task among the many possible 
locations in space. {\name} offers a strategy to quickly search among these
various alternatives while pruning the search space as needed~(\xref{sec:greedypacking}).

A final complication is to avoid deadlocks. 
Consider tasks $t_1 \rightarrow t_2 \rightarrow t_3$. If $t_1$ and
$t_3$ are troublesome, they will be placed in space first, and $t_2$
is deferred for later. But the placement of $t_1$ and $t_3$ may not
leave enough space~(time and resources) for $t_2$ to be placed in
between. In the general case, $t_2$ may be some subset of the DAG
and it is apriori unknown how much space should be set aside when
placing the troublesome tasks. Setting aside more space will avoid
deadlocks but leads to less compact schedules. {\name} places tasks in
a carefully chosen order so as to build compact schedules without
deadlocks~(\xref{sec:greedypacking}).
}

\section{Scheduling one DAG}
\label{sec:design} 

\begin{algorithm}[t!]
{\small
\nonl	\chline\\
\nonl	\textbf{Definitions:} In DAG $\mathcal{G}$, $t$ denotes a task and $s$ denotes a stage, i.e., a collection of similar tasks.\\ 
\nonl	Let $\mathcal{V}$ denote all the stages (and hence the tasks) in $\mathcal{G}$.\\
\nonl	Let $\mathcal{C}(s, \mathcal{G})$, $\mathcal{P}(s, \mathcal{G})$, $\mathcal{D}(s, \mathcal{G})$, $\mathcal{A}(s, \mathcal{G})$, $\mathcal{U}(s, \mathcal{G})$ denote the children, parents, descendants, ancestors and unordered neighbors of $s$ in $\mathcal{G}$.\\
\nonl	For clarity, $\mathcal{U}(s, \mathcal{G}) =\mathcal{V} - \mathcal{A}(s, \mathcal{G}) - \mathcal{D}(s, \mathcal{G}) - \{s\}$\\
\nonl	\chline\\
		\textbf{Func:} \textsf{BuildSchedule}:\\
		\textbf{Input:} $\mathcal{G}$: a DAG, $m$: number of machines\\
		\textbf{Output:} An ordered list of tasks $t \in \mathcal{G}$ \\

		\iflongversion
			${\tt Ans} \leftarrow \{\}$\\
			\renewcommand{\gg}{\mathcal{G}'}
			\ForEach{\mbox{dag} $\mathcal{G}' \in {\tt CutDAGs}(\mathcal{G})$}
			{
		\else
			\renewcommand{\gg}{\mathcal{G}}
		\fi
		$\mathcal{S}_{\mbox{best}} \leftarrow \varnothing$\codecomment{// best schedule for $\gg$ thus far}\\
		\ForEach{\mbox{sets } $\{{\tt T, O, P, C}\} \in {\tt CandidateTroublesomeTasks}(\gg)$}
		{
		Space $\mathcal{S} \leftarrow$ CreateSpace$(m)$ \codecomment{//resource-time space}\\
		$\mathcal{S} \leftarrow {\tt PlaceTasks}({\tt T}, \mathcal{S}, \gg)$\codecomment{// trouble goes first}\\
		$\mathcal{S} \leftarrow {\tt TrySubsetOrders}(\{{\tt OCP}, {\tt OPC}, {\tt COP}, {\tt POC}\}, \mathcal{S}, \gg)$
		\\
		\lIf{$\mathcal{S} < \mathcal{S}_{\mbox{best}}$} {$\mathcal{S}_{\mbox{best}} \leftarrow \mathcal{S}$ \codecomment{//keep the best schedule}}
		}
		\iflongversion
			${\tt Ans} \leftarrow {\tt Ans} \cup {\tt OrderTasks}(\gg, \mathcal{S}_{\mbox{best}})$\codecomment{// concatenate schedules}\\
			}
		\else
			\Return ${\tt OrderTasks}(\gg, \mathcal{S}_{\mbox{best}})$
		\fi
		
\nonl	\chline\\		
}
\ncaption{Pseudocode for constructing the schedule for a DAG. Helper methods are in Figure~\ref{fig:ps_sched_helpers}. 
\label{fig:ps_sched}
}
\end{algorithm}
{\name} builds the schedule for a DAG in three steps.
Figure~\ref{fig:graphene_overview} illustrates these steps and Figure~\ref{fig:ps_sched} has a simplified pseudocode.
First, {\name} identifies some troublesome tasks and divides the DAG into four subsets~(\xref{sec:searchspace}). 
Second, tasks in a subset are packed greedily onto the virtual space while respecting dependencies~(\xref{sec:eachset}).
Third, {\name} carefully restricts the order in which different subsets are placed such that  the
troublesome tasks go first and there are no dead-ends~(\xref{sec:greedypacking}). 
{\name} picks the most compact schedule after iterating over many choices for troublesome tasks.
We discuss some enhancements in~\xref{sec:enhancements}.
The resulting schedule is passed on to the online component~(\xref{sec:runtime}). 




\subsection{Searching for troublesome tasks}
\label{sec:searchspace} 
To identify {\em troublesome} tasks, {\name} computes two scores per task. The first, ${\tt LongScore}$, divides the task duration by the maximum value across all tasks.  Tasks with a higher score are more likely to be on the critical path and can benefit from being placed first because other work can overlap with them. The second, ${\tt FragScore}$, reflects the packability of tasks in a stage~(e.g., a map or a reduce). It is computed by dividing the total work in a stage~($\tt TWork$ defined in~\xref{subsec:problem}) by how long a greedy packer will take to schedule that stage. Tasks that are more difficult to pack would have a lower ${\tt FragScore}$. Given thresholds $l$ and $f$, {\name} picks tasks with ${\tt LongScore} \geq l$ or ${\tt FragScore} \leq f$. Intuitively, doing so biases towards selecting tasks that are more likely to hurt the schedule because they are too long or too difficult to pack.
 {\name} iterates over different values for the $l$ and $f$ thresholds to find a compact schedule. 

To speed up this search, (1) rather than choose the threshold values arbitrarily, {\name} picks values that are discriminative, i.e., those that allow different subsets of tasks to be considered as troublesome and (2) {\name} remembers the set of troublesome tasks that were already explored~(by previous settings of the thresholds) so that it will construct a schedule only once per unique troublesome set. 

As shown in Figure~\ref{fig:ps_sched_helpers},
the set~${\small\sf T}$ is a closure over the chosen troublesome tasks. 
That is, ${\small\sf T}$ contains the troublesome tasks and all tasks that lie on a path in the DAG between two troublesome tasks. 
The parent and child subsets ${\small\sf P}$, ${\small\sf C}$ consist of tasks that are not in ${\small\sf T}$ 
but have a descendant or ancestor in ${\small\sf T}$ respectively. 
The subset ${\small\sf O}$ consists of the remaining tasks.

   \begin{algorithm}[t!]
{\small
\nonl		See Definitions atop Fig.~\ref{fig:ps_sched}.\\
\nonl 		\chline\\
\iflongversion
		\textbf{Func:} \textsf{CutDAGs}:\\
		\textbf{Input:} $\mathcal{G}$: input DAG
		\textbf{Output:} $\mathcal{L}$: ordered list of DAGs\\
		$\mathcal{L} \leftarrow \{ \mathcal{G}\}$\\
		toProcess.push($\mathcal{G}$)\\
		\While{\mbox{! toProcess}.\mbox{empty}()}
		{
		$\mathcal{G'} \leftarrow$ toProcess.pop()\\

		\ForEach{$\mbox{stage}\ \  s \in \mathcal{G}'$}
		{
		\If{$\mathcal{U}(s, \mathcal{G}') == \varnothing$\codecomment{// no unordered neighbors}}
		{
		$\{\mathcal{G}_1, \mathcal{G}_2\} \leftarrow$ \{$\mathcal{A}(s, \mathcal{G}') \cup s$, $\mathcal{D}(s, \mathcal{G}')$\}\codecomment{// cut at $s$}\\
		Replace $\mathcal{G}'$ with $\{\mathcal{G}_1, \mathcal{G}_2\}$ in $\mathcal{L}$ \\
		toProcess.push($\mathcal{G}_1$)\\
		toProcess.push($\mathcal{G}_2$)\\
		\textbf{break}\\
		}
		}
		}
\nonl 	\chline\\
\else
\fi
 	\textbf{Func}: \textsf{CandidateTroublesomeTasks}: \\
 	\textbf{Input}: DAG $\mathcal{G}$; \textbf{Output}: list $\mathcal{L}$ of sets ${\tt T, O, P, C}$\\
\nonl 	\codecomment{// choose a candidate set of troublesome tasks; per choice, divide $\mathcal{G}$ into four sets}\\
 	$\mathcal{L} \leftarrow \varnothing$\\
 	$\forall v \in \mathcal{G}, {\tt LongScore}(v) \leftarrow {v.\mbox{duration}} / {\max_{v' \in \mathcal{G}} v'.\mbox{duration}}$\\
 	$\forall v \in \mathcal{G}, v \mbox{ in stage } s, {\tt FragScore}(v) \leftarrow {\tt TWork}(s) / {\tt ExecutionTime}(s)$\\
 	\ForEach{$l \in \delta, 2\delta, \ldots 1$}
 	{
 	\ForEach{$f \in \delta, 2\delta, \ldots 1$}
 	{
 	${\tt T} \leftarrow \{v \in \mathcal{G} | {\tt LongScore}(v) \geq l \mbox{ or } {\tt FragScore}(v) \leq f\}$\\
 	${\tt T} \leftarrow {\tt Closure}({\tt T})$\\
 	\lIf{${\tt T} \in \mathcal{L}$}{{\bf continue}  \codecomment{ // ignore duplicates}}
 	${\tt P} \leftarrow \bigcup_{v \in {\tt T}}{\mathcal{A}(v, \mathcal{G})}$; 
 	${\tt C} \leftarrow \bigcup_{v \in {\tt T}}{\mathcal{D}(v, \mathcal{G})}$;\\
 	$\mathcal{L} \leftarrow \mathcal{L} \cup \{{\tt T, \mathcal{V} - {\tt T} - {\tt P} - {\tt C}, P, C}\}$\\
 	}
 	}
 \nonl	\chline\\
 	}
 	\ncaption{Identifying various candidates for troublesome tasks and dividing the DAG into four subsets.\label{fig:ps_sched_helpers}
	}
\end{algorithm}

\subsection{Compactly placing tasks}
\label{sec:eachset} 
Given a subset of tasks and a partially occupied space, how best 
to pack the tasks while respecting dependencies? 
One can choose to place the parents first or the children first. We call these
the {\em forward} and {\em backward} placements respectively. More formally, 
the forward placement recursively picks a task all of whose ancestors 
have already been placed on the space and puts it at the earliest possible time after its latest finishing ancestor.
The backward placement is analogously defined.
Intuitively, both placements respect dependencies but can lead to very different schedules
since greedy packing yields different results based on 
which tasks are placed first. Figure~\ref{fig:ps_sched_helpers1}:${\tt PlaceTasksF}$ 
shows one way to do this. Traversing the tasks in either placement has $n\log n$
complexity for a subset of $n$ tasks and if there are $m$ machines, placing tasks 
greedily has $n \log(mn)$ complexity.

\subsection{Subset orders that guarantee feasibility}
\label{sec:greedypacking} 
For each division of DAG into subsets~${\tt T, O, P, C}$, 
{\name} considers these four orders:~$\mbox{\small\sf TOCP}, \mbox{\small\sf TOPC}, \mbox{\small\sf TPOC}$ or $\mbox{\small\sf TCOP}$. 
That is, in the {\small\sf TOCP} order, it first places all tasks in {\small\sf T}, then tasks in {\small\sf O}, then tasks in {\small\sf C} and finally all tasks in {\small\sf P}.
Intuitively, this helps because the troublesome subset ${\small\sf T}$ is always placed first. 
Further, we will shortly prove that these are the only orders beginning with ${\small\sf T}$ that will avoid {\em dead-ends}.

A subtle issue is worth discussing. 
Only one of the forwards or backwards placements~(described above in~\xref{sec:eachset}) are appropriate for some subsets of tasks.
For example, tasks in ${\tt P}$ cannot be placed {\em forwards} since some descendants of these tasks may already have been placed~(such as those in ${\tt T}$).
As we saw above, the forwards placement places a task after its last finishing ancestor but ignores descendants and can hence violate dependencies if used for ${\tt P}$.
Analogously, tasks in ${\tt C}$ cannot be placed {\em backwards}.
Tasks in ${\tt O}$ can be placed in one or both placements, depending on the inter-subset order. 
Finally, since the tasks in ${\tt T}$ are placed onto an empty space they can be placed either forwards or backwards.
Formally, this logic is encoded in Figure~\ref{fig:ps_sched_helpers1}:${\tt TrySubsetOrders}$. We prove the following lemma.

\vskip -.3in
\begin{lemma} 
(Correctness) The method described in~\xref{sec:searchspace}--\xref{sec:greedypacking} 
satisfies all dependencies and is free of {\em dead-ends}.
(Completeness) Further, the method explores every order that places troublesome tasks first and is free of {\em dead-ends}.
\end{lemma}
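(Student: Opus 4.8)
The plan is to reduce the statement to a finite check over the orderings of the four subsets, after first pinning down exactly which inter-subset dependencies can occur. First I would exploit the fact that ${\tt T}$ is a \emph{closure}: any task lying on a directed path between two ${\tt T}$-tasks is itself in ${\tt T}$. Combined with the definitions (${\tt P}$ = ancestors of ${\tt T}$ outside ${\tt T}$, ${\tt C}$ = descendants of ${\tt T}$ outside ${\tt T}$, ${\tt O}=\mathcal{V}-{\tt T}-{\tt P}-{\tt C}$), this lets me classify every ordered pair of distinct subsets. I would show that out-edges of ${\tt C}$ can only land in ${\tt C}$, in-edges of ${\tt P}$ can only come from ${\tt P}$, and there are no edges at all between ${\tt T}$ and ${\tt O}$; the upshot is that the only possible cross edges are $\,{\tt P}\to{\tt T}$, ${\tt P}\to{\tt C}$, ${\tt P}\to{\tt O}$, ${\tt T}\to{\tt C}$, and ${\tt O}\to{\tt C}$. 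Each impossibility (e.g. ${\tt C}\to{\tt O}$, ${\tt O}\to{\tt P}$, ${\tt T}\to{\tt P}$) follows because the forbidden edge would either create a cycle or force a task onto a path between two ${\tt T}$-tasks, contradicting membership in ${\tt O}$/${\tt P}$/${\tt C}$. Establishing this edge structure is the crux of the whole argument.

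Next I would record the invariant driving both halves. Intra-subset edges are always respected because each subset is laid out in (reverse) topological order, so only cross edges matter. For a cross edge $u\to v$, the greedy rule guarantees $\mathrm{finish}(u)\le\mathrm{start}(v)$ whenever either (i) $u$ is placed before $v$ and $v$ is placed \emph{forward} (so $v$ lands after its already-placed ancestor $u$), or (ii) $v$ is placed before $u$ and $u$ is placed \emph{backward} (so $u$ lands before its already-placed descendant $v$). I would also invoke the hard direction constraints from \xref{sec:greedypacking}: ${\tt P}$ may only be placed backward and ${\tt C}$ only forward, while ${\tt T}$ (on empty space) and ${\tt O}$ are free. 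Since the virtual time axis is unbounded, a slot always exists, so a dead-end can only be forced when some cross edge matches neither (i) nor (ii); proving freedom from dead-ends is thus the same as checking that every cross edge is handled.

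For correctness I would run this check on the four explored orders ${\tt TOCP},{\tt TOPC},{\tt TPOC},{\tt TCOP}$ with ${\tt P}$ backward and ${\tt C}$ forward fixed. Because ${\tt T}$ is first, ${\tt P}\to{\tt T}$ is always case (ii) via ${\tt P}$ backward; ${\tt T}\to{\tt C}$ and ${\tt P}\to{\tt C}$ are always handled by ${\tt C}$ forward or ${\tt P}$ backward; so only the two edges touching ${\tt O}$ constrain its direction. The edge ${\tt P}\to{\tt O}$ demands ${\tt O}$ forward when ${\tt P}$ precedes ${\tt O}$, and ${\tt O}\to{\tt C}$ demands ${\tt O}$ backward when ${\tt C}$ precedes ${\tt O}$. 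A short case split shows these never collide: in ${\tt TOCP}$ and ${\tt TOPC}$, ${\tt O}$ precedes both ${\tt P}$ and ${\tt C}$ so its direction is free; in ${\tt TPOC}$ only ${\tt O}$-forward is demanded; in ${\tt TCOP}$ only ${\tt O}$-backward is demanded. Hence every cross edge is handled, the schedule respects all dependencies, and no dead-end arises.

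For completeness I would note that an order beginning with ${\tt T}$ is ${\tt T}$ followed by a permutation of the remaining three subsets, giving six candidates; the four above are explored, leaving ${\tt TPCO}$ and ${\tt TCPO}$. In both, ${\tt P}$ precedes ${\tt O}$ (forcing ${\tt O}$ forward for ${\tt P}\to{\tt O}$) while ${\tt C}$ precedes ${\tt O}$ (forcing ${\tt O}$ backward for ${\tt O}\to{\tt C}$), a contradiction. To turn this into a genuine dead-end I would exhibit one DAG realizing both edges at once: ${\tt T}=\{t\}$, ${\tt P}=\{p\}$ with $p\to t$, a chain ${\tt O}=\{o_1\to o_2\}$ with $p\to o_1$, and ${\tt C}=\{c\}$ with $t\to c$ and $o_2\to c$; making the chain long forces a violation under either choice of ${\tt O}$'s direction (the edge $o_2\to c$ fails when ${\tt O}$ is forward, and $p\to o_1$ fails when ${\tt O}$ is backward). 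Thus neither unexplored order is dead-end-free, so the four explored orders are exactly the dead-end-free orders starting with ${\tt T}$, which is completeness. I expect the edge-structure classification of the first paragraph to be the main obstacle, since the finite case analysis—and in particular the very existence of the conflicting ${\tt P}\to{\tt O}$ and ${\tt O}\to{\tt C}$ edges—rests entirely on translating the closure property of ${\tt T}$ correctly into constraints on admissible edges.
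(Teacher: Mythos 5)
Your proposal is correct and follows essentially the same route as the paper's (deliberately abbreviated) argument: closure of ${\tt T}$ yields the intra-subset and cross-edge structure, the forward/backward direction restrictions cover every cross edge in the four explored orders, and the two remaining orders ${\tt TPCO}$, ${\tt TCPO}$ are ruled out by a dead-end instance. You in fact supply details the paper only gestures at---the exhaustive classification of admissible cross edges ($\,{\tt P}\to{\tt T}$, ${\tt P}\to{\tt C}$, ${\tt P}\to{\tt O}$, ${\tt T}\to{\tt C}$, ${\tt O}\to{\tt C}$) and the explicit counterexample DAG---so no gap remains.
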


\noindent
We omit a detailed proof due to space constraints.
Intuitively however, the proof follows from (1) all four subsets are closed and hence intra-subset dependencies
are respected by both the placements in~\xref{sec:eachset}, (2) the inter-subset orders and the corresponding
restrictions to only use forwards and/or backwards placements specified in~\xref{sec:greedypacking} ensure
dependencies across subsets are respected and finally, (3) every other order that begins with ${\tt T}$ either violates dependencies
or leads to a dead-end~(e.g., in ${\tt TPCO}$, placing tasks in ${\tt O}$ can dead-end because some ancestors 
and descendants have already been placed).
\cut{
\begin{proof}
\vskip -.05in
The proof follows by construction. Observe that all four subsets ${\tt T}, {\tt O}, {\tt P}, {\tt C}$
are closed. That is, if two tasks belong to the same subset every task on path between these tasks
also belongs to that subset. Further, our choice of placement for each subset ensures that a 
task is placed forwards if and only if no descendants have been placed already~(and vice-versa for backwards).
Finally, for every other order of placing tasks, we can find a counter-example DAG that either
violates dependencies or leads to a dead-end.
\end{proof}
}
\begin{algorithm}[t!]
{\scriptsize
 	 \textbf{Func:} \textsf{PlaceTasksF}: \codecomment{// forward placement}\\
 	 \textbf{Inputs}: $\mathcal{V}$: subset of tasks to be placed, $\mathcal{S}$: space (partially filled), $\mathcal{G}$: a DAG\\
 	 \textbf{Output}: a new space with tasks in $\mathcal{V}$ placed atop $\mathcal{S}$\\
 	 $\mathcal{S} \leftarrow {\tt Clone}(\mathcal{S})$\\
 	 finished placement set $\mathcal{F} \leftarrow \{ v \in \mathcal{G} | v \mbox{ already placed in } \mathcal{S}\}$\\

 	 \While{\mbox{\bf true}}
 	 {
 	 ready set $\mathcal{R} \leftarrow \{ v \in \mathcal{V} - \mathcal{F} | \mathcal{P}(v, \mathcal{G}) \mbox{ already placed in } \mathcal{S}\}$\\
 	  	\lIf{$\mathcal{R} = \varnothing$}{{\bf break}  \codecomment{ // all done}}
 	 $v' \leftarrow \mbox{task in } \mathcal{R} \mbox{ with longest runtime}$\\
 	 $t \leftarrow \max_{v \in \mathcal{P}(v, \mathcal{G})} {\tt EndTime}(v, \mathcal{S})$\\
 	 \codecomment{// place $v'$ at earliest time $\geq t$ when its resource needs can be met}\\
 	 $\mathcal{F} \leftarrow \mathcal{F} \cup v'$\\ 
 	 }
\nonl 	 \chline\\
 	  	\textbf{Func:} \textsf{PlaceTasks}($\mathcal{V}, \mathcal{S}, \mathcal{G}$):\codecomment{// inputs and output are same as PlaceTasksF}\\
 	\Return{$\min\left({\tt PlaceTasksF}(\mathcal{V}, \mathcal{S}, \mathcal{G}), {\tt PlaceTasksB}(\mathcal{V}, \mathcal{S}, \mathcal{G})\right)$}\\
\nonl 	\chline\\

\iflongversion 	 
 	 \textbf{Func:} \textsf{PlaceTasksB}: \codecomment{// only backwards, inputs and outputs same as ${\tt PlaceTasks}$}\\
 	 \textbf{Input}: $\mathcal{V}$, $\mathcal{S}$, $\mathcal{G}$; \textbf{Output}: $\mathcal{S}'$\\
 	 $\mathcal{S} \leftarrow \mbox{\tt Clone}(\mathcal{S}_{\mbox{in}})$\\
 	 finished placement set $\mathcal{F} \leftarrow \{ v \in \mathcal{G} | v \mbox{ already placed in } \mathcal{S}\}$\\
 	 ready set $\mathcal{R} \leftarrow \{ v \in \mathcal{V} - \mathcal{F} | \mathcal{C}(v, \mathcal{G}) \mbox{ already placed in } \mathcal{S}\}$\\
 	 \While{$R \neq \varnothing$}
 	 {
 	 $v' \leftarrow \mbox{task in } \mathcal{R} \mbox{ with longest runtime}$\\
 	 $t \leftarrow \min_{v \in \mathcal{C}(v, \mathcal{G})} {\tt BeginTime}(v, \mathcal{S})$\\
 	 \codecomment{// place $v'$ at the latest time $\leq t - v'.\mbox{duration}$ when its resource demands can be met}\\
 	 $\mathcal{F} \leftarrow \mathcal{F} \cup v'$\\ 
 	 $\mathcal{R} \leftarrow \mathcal{R} \cup \{v \in \mathcal{P}(v', \mathcal{G}) -\mathcal{F}  | \mathcal{C}(v, \mathcal{G}) \mbox{ placed in } \mathcal{S}\}$\\
 	 }
\else
 	 \textbf{Func:} \textsf{PlaceTasksB}: \codecomment{// only backwards, analogous to ${\tt PlaceTasksF}$.}\\
\fi
\nonl 	 \chline\\
 	 
 	 \textbf{Func:} \textsf{TrySubsetOrders}:\\
 	 \textbf{Input}: $\mathcal{G}$: a DAG, $\mathcal{S}_{\mbox{in}}$: space with tasks in ${\tt T}$ already placed\\
 	 \textbf{Output:} Most compact placement of all tasks.\\

 	 $\mathcal{S}_1, \mathcal{S}_2, \mathcal{S}_3, \mathcal{S}_4 \leftarrow {\tt Clone}(\mathcal{S}_{\mbox{in}})$\\

	\Return $\min$( \codecomment{// pick the most compact among all feasible orders}\\
	${\tt PlaceTasksF}({\tt C}, {\tt PlaceTasksB}({\tt P}, ({\tt PlaceTasks}({\tt O}, \mathcal{S}_1, \mathcal{G})), \mathcal{G}), \mathcal{G})$,\codecomment{// ${\tt OPC}$}\\
	${\tt PlaceTasksB}({\tt P}, {\tt PlaceTasksF}({\tt C}, ({\tt PlaceTasks}({\tt O}, \mathcal{S}_2, \mathcal{G})), \mathcal{G}), \mathcal{G})$,\codecomment{// ${\tt OCP}$}\\
	${\tt PlaceTasksB}({\tt P}, {\tt PlaceTasksB}({\tt O}, ({\tt PlaceTasksF}({\tt C}, \mathcal{S}_3, \mathcal{G})), \mathcal{G}), \mathcal{G})$,\codecomment{// ${\tt COP}$}\\
	${\tt PlaceTasksF}({\tt C}, {\tt PlaceTasksF}({\tt O}, ({\tt PlaceTasksB}({\tt P}, \mathcal{S}_4, \mathcal{G})), \mathcal{G}), \mathcal{G})$\codecomment{// ${\tt POC}$}\\
	);\\
 \nonl	 \chline\\

%
 	}
	\ncaption{Pseudocode for the functions described in~\xref{sec:eachset} and~\xref{sec:greedypacking}.\label{fig:ps_sched_helpers1}
	}
\end{algorithm}
\subsection{Enhancements}
\label{sec:enhancements}
\label{sec:simplify} 
We note a few enhancements. 
First, due to barriers it is possible to partition a DAG into parts that are totally ordered.
Hence, any  schedule for the DAG is a concatenation of per-partition schedules.
This lowers complexity because one execution of ${\tt BuildSchedule}$ will 
be replaced by several executions each having fewer tasks. $24$\% of the production DAGs can be split into four or more
parts.  Second, and along similar lines, whenever possible we reduce complexity by 
reasoning over stages. Stages are collections of tasks and are $10$ to $10^3$ times fewer in number than tasks. 
Finally, 
we carefully choose our data-structures~(e.g., a time and resource indexed hash map of free regions 
in space) so that the most frequent operation, picking a region in resource-time space where a task will fit as described in~\xref{sec:eachset}, can be executed efficiently. 
\cut{
We recursively cut until no more cuts are possible.
Figure~\ref{fig:ps_sched_helpers}:${\tt CutDAGs}$ shows how to do this in linear time.  
Cuts are made possible by barriers in the computation.
Cutting helps because the complexity of the rest of schedule construction is $O(n^2)$
for $n$ tasks~(it is typically $n\log n$ except for some corner cases).
In our traces, $27$\% of DAGs break into two parts, $13$\% into three parts, and 
}

\eat{
\subsection{Placing the troublesome tasks in space}
Recall that the troublesome tasks~(i.e., those in {\small\sf T}) are placed in space first, in either
the forward or the backward order~(\xref{sec:ideas}). 

We greedily pack these tasks without any unnecessary gaps
and within the least amount of total time.  Given a set of tasks to
place, {\name} greedily picks the task with the longest duration and
places it at the best possible time that satisfies both resource
capacity and dependencies. To respect dependencies, the best time to place is after the last ancestor finishes (for the forward ordering) or before the first descendant begins (for the backward ordering). To reduce
fragmentation, the best time is the one wherein the task's demands are
below available capacity.

\subsection{Packing the remaining tasks}
After placing the tasks in {\small\sf T},
recall that {\name} follows one of the four orders:  $\mbox{\small\sf TOCP}, \mbox{\small\sf TOPC}, \mbox{\small\sf TPOC}$ or $\mbox{\small\sf TCOP}$. 
That is, with {\small\sf TOCP}, it first places all
tasks in {\small\sf O}, then all tasks in {\small\sf C} and finally
all tasks in {\small\sf P}.  In
practice, we find that binding the subset with more work first leads to the more compact
schedule.  We believe this is because it avoids the case where binding a few tasks
constrains the binding of many others. Furthermore, tasks 
in subset {\small\sf C} can be placed in only the forward order.  
This is because in all sequences only the ancestors 
of tasks in {\small\sf C} happen to be placed already. Analogously, tasks in {\small\sf P} 
can only be placed in the backward order. Tasks in {\small\sf O} can be placed in either or both orders, depending on the sequence. For example, in ${\tt TOPC}$, tasks in {\small\sf O} can be placed in either order since none of their parents or children 
have been placed. However, in ${\tt TPOC}$, they can only go forward. With these added constraints, {\name} 
places tasks in each subset in the same manner as described in~\xref{sec:eachset}. 
See Fig.~\ref{fig:ps_sched_helpers}:${\tt BuildSchedule}$. Notice that the best-possible 
check above greedily places these tasks to fill holes.
}


\begin{algorithm}[t!]
{\scriptsize
\nonl	\chline\\	
		\textbf{Func:} \textsf{FindAppropriateTasksForMachine}:\\
		\textbf{Input:} ${\bf m}$: vector of available resources at machine; 
		$\mathcal{J}$: set of jobs with task details$\{t_{\mbox{duration}}, {\bf t}_{\mbox{demands}}, t_{\mbox{priScore}}\}$; ${\tt deficit}$: counters for fairness;\\
		\textbf{Parameters:} $\kappa$: unfairness bound; ${\tt rp}$: remote penalty\\
		\textbf{Output:} $\mathcal{S}$, the set of tasks  to be allocated on the machine\\
		
		$\mathcal{S} \leftarrow \varnothing$


		\While{ true }
		{
		\ForEach{$\mbox{task } t$}
		{
		$\{{\tt pScore}_t, {\tt oScore}_t\} \leftarrow \{0, 0\}$\\
		${\tt rPenalty}_t \leftarrow t$ is locality sensitive ? ${\tt rp}$ : $1$\\
		
		\If{${\bf t_{\mbox{demands}}} \leq {\bf m}$ \codecomment{// fits?}}
		{
		${\tt pScore}_t \leftarrow \left( {\bf m} \cdot {\bf t_{\mbox{demands}}} \right) {\tt rPenalty}_t$\codecomment{// dot product}\\
		}
		\Else
		{
		\iflongversion
		\codecomment{// what-if analysis: ``overbook or wait''.}\\
		$\forall \mbox{tasks }t'\mbox{ affected by }t\mbox{ running at } m, \mbox{let } {\tt before}(t'), $\\
		${\tt after}(t') \mbox{ be expected completion times before and}$\\
		$\mbox{after placing }t\mbox{ at }m$\\
		${\tt benefit} = {\tt nextSchedOpp} + t_{\mbox{duration}} - {\tt after}(t)$\\
		${\tt cost} = \sum_{\mbox{aff. tasks }t'} \left({\tt after}(t') - {\tt before}(t')\right)$\\
		\lIf{${\tt benefit} > {\tt cost}$}{${\tt oScore}_t = {\tt benefit}-{\tt cost}$}
		\else
		compute ${\tt oScore}_t$ \codecomment{// overbooking score omitted for brevity.}\\
		\fi
		}
		$\mbox{ job } j \ni t, {\tt srpt}_j \leftarrow \sum_{\mbox{pending u}  \in j} u_{\mbox{duration}} * \left|{\bf u_{\mbox{demands}}}\right|$\\
		
		\fbox{${\tt perfScore}_t \leftarrow t_{\mbox{priScore}}\left\{{\tt pScore}_t, {\tt oScore}_{t}\right\} - \eta {\tt srpt}_j$}
		}
		
		$t^{\mbox{best}} \leftarrow \arg \max \{{\tt perfScore}_t | t\}$\codecomment{// task with highest perf score}\\
		\lIf{$t^{\mbox{best}} = \varnothing$}{{\bf break} \codecomment{// no new task can be scheduled on this machine}}
		$g' \leftarrow \mbox{jobgroup with highest deficit counter}$\\
		\fbox{
		\lIf{${\tt deficit}_{g'} \geq \kappa C$}{ $t^{\mbox{best}} \leftarrow \arg\max\{ {\tt perfScore}_t | t \in g'\}$}}
		
		$\mathcal{S} \leftarrow \mathcal{S} \cup t^{\mbox{best}}$\\
		
		\iflongversion
		${\bf m} \leftarrow \left[{\bf m} - {\bf t^{\mbox{best}}_{\mbox{demands}}}\right]_{0+}$ \\
		\else
		\codecomment{// detail: reduce available resources ${\bf m}$.}
		\fi

		${\tt deficit}_g \leftarrow {\tt deficit}_g +$
\fbox{
		$f({\bf t^{\mbox{best}}_{\mbox{demands}}}) *\left\{\begin{array}{ll} {\tt fairShare}_g - 1 & t\in\mbox{ jobgroup }g\\ {\tt fairShare}_g & \mbox{otherwise}\end{array} \right.$
		}\\

		}
\nonl		\chline\\
}
	\ncaption{Simplified pseudocode for the online component.\label{fig:ps_online}
	}
\end{algorithm}

\section{Scheduling many DAGs}
\label{sec:runtime}
\label{subsec:runtime}
We describe our online algorithm that matches tasks to machines while co-ordinating discordant objectives: fairness, packing and enforcing the per-DAG schedules built by~\xref{sec:design}. We offer the pseudocode in Figure~\ref{fig:ps_online} for completeness but focus only on (1) how the various objectives are co-ordinated and (2) how unfairness is bounded.

The pseudocode shows how various individual objectives are estimated. Packing score per task ${\tt pScore}_t$ is a dot product between task demands and available resources~\cite{tetris}. Using remote resources un-necessarily, for example by scheduling a locality-sensitive task~\cite{locality} at another machine, is penalized by the value ${\tt rPenalty}_t$. The value ${\tt srpt}_j$ estimates the remaining work in a job and is used to prefer short jobs which lowers average job completion time. We claim no novelty thus far.  Suppose that $t_{\tt priScore}$ is the order over tasks required by the schedule from~\xref{sec:design}; $t_{\tt priScore}$ is computed by ranking tasks in increasing order of their begin time and then dividing the rank by the number of tasks in the DAG so that the value is between $1$ (task that begins first) and $0$ (for the last task).

An initial combination of the above goals happens in the computation
of ${\tt perfScore}_t$. See the first box in Figure~\ref{fig:ps_online}. 
A task will have non-zero ${\tt pScore}_t$
only if its demands fit within available resources. 
Else, it can have a non-zero ${\tt oScore}_t$ if it is worth overbooking. 
We use a lexicographic ordering between these two values. That is, tasks
with non-zero ${\tt pScore}$ beat any value of ${\tt oScore}$.
Multiplying with $t_{\tt priScore}$ steers the search towards tasks earlier in the constructed schedule.
Finally, $\eta$ is a parameter that is automatically updated based on the 
average ${\tt srpt}$ and ${\tt pScore}$. Subtracting $\eta\cdot {\tt srpt}_j$ prefers shorter jobs.
Intuitively, the combined value ${\tt perfScore}_t$ softly enforces the various objectives. For example, if some 
task is preferred by all individual objectives~(belongs to shortest job, is most packable, is next in the preferred schedule), then it will have the highest ${\tt perfScore}$. When the objectives are discordant, colloquially, the task preferred by a majority of objectives will have the highest ${\tt perfScore}$. 

To bound unfairness, we use one additional step. 
We explicitly measure unfairness using deficit counters~\cite{drr}. 
When the maximum unfairness~(across jobgroups or queues) is above the specified threshold~$\kappa C$, where $C$ is the cluster capacity, {\name} picks only among tasks belonging to the most unfairly treated jobgroup. This is shown in the second box in Figure~\ref{fig:ps_online}. Otherwise {\name} picks the task with the highest ${\tt perfScore}$. It is easy to see that this bounds unfairness by $\kappa C$. 
Further, we can support a variety of fairness schemes by choosing how to change the deficit counter.
For example, choosing $f() = 1$ mimics slot fairness~(see third box in Figure~\ref{fig:ps_online}), and $f() =$ demand of the dominant resource mimics DRF~\cite{drf}. 

\cut{
Whenever resources become available at a machine, 
the online algorithm identifies the most suitable subset of pending tasks to 
assign to that machine.
The algorithm has a few goals described next.

First, it prefers {\bf short jobs} to lower average job completion time.
That is jobs with less remaining work. The value ${\tt srpt}_j$ estimates this in 
the pseudocode. 

Second, it prefers {\bf local placement} of tasks~\cite{locality}.
Only some tasks are locality sensitive. For example, a task with all input at one server
will slow down if run elsewhere. But tasks having inputs on many servers would not.
The parameter ${\tt rp}$ controls how much to penalize loss of locality.

Third, the algorithm performs {\bf multi-resource packing} with
the intent of maximizing the number of tasks that can be run simultaneously.
To do so, the online component prefers tasks with a high ${\tt pScore}$
computed as the dot product between task's resource needs and available resources on machine.
That is, tasks that use more resources or more of the types of resources currently available at that machine
get preference.

Fourth, the algorithm performs {\bf overbooking}.
Even when the multi-dimensional resource needs of a task cannot be met, it can be
worthwhile to schedule the task.  This heuristic tries to maximize resource utilization.

Fifth, it {\bf enforces priority ordering} over tasks.
Let $t_{\mbox{priScore}}$ be the priority of task $t$ 
determined based on the task's start time in the schedule built in~\xref{sec:design}.
Tasks with higher priScore are scheduled preferentially.

The combination of all of the above goals happens in the computation
of ${\tt perfScore}_t$. Observe that a task will have a non-zero ${\tt pScore}_t$
only if its resources fit. Else, it can have a non-zero ${\tt oScore}_t$ if it is worth overbooking. 
We use a lexicographic ordering between these two values. That is, tasks
with non-zero ${\tt pScore}$ beat any value of ${\tt oScore}$.
Multiplying with $t_{\mbox{priScore}}$ steers the search towards tasks earlier in the constructed schedule.
Finally, $\eta$ is a parameter that is automatically computed by the algorithm based on the 
average ${\tt srpt}$ and ${\tt pScore}$. Subtracting $\eta\cdot {\tt srpt}_j$ prefers shorter jobs.

Sixth, we offer {\bf bounded unfairness}.
We use deficit counters, ${\tt defct}_g$ per jobgroup, as shown,
to measure how far a jobgroup is from its fair share~\cite{DRR}.
By choosing $f()$ appropriately, different fairness schemes can be implemented.
For example, $f() = 1$ mimics slot fairness, and to mimic DRF~\cite{drf}, $f()$ picks the 
demand of the dominant resource.
When the most unfairly treated jobgroup~(say $g'$) has deficit above the specified threshold~$\kappa C$,
where $C$ is the cluster capacity, {\name} restricts task choice to be one of the tasks
in $g'$. This bounds unfairness by $\kappa C$.
}

\eat{

The online scheduler tries to schedule tasks in this priority order.
Overall, the runtime component has the following role: whenever resources become available, choose which pending task to assign these resources to, while respecting these requirements.


\vskip .05in
\noindent{\bf Priority: }
The priority of a task is its start-time based order in the constructed schedule. Tasks earlier in the schedule have higher priority. Priority is in $[0,1]$, higher is better. Priorities are normalized across DAGs. \sriram{How?}

\vskip .05in
\noindent{\bf Overbooking \& Packing: }
A packability score is computed as the dot product between the tasks' demands and server's available resources. The score is in $[0, 1]$. Higher is better. Remote tasks, that have inputs elsewhere in the cluster, are penalized a fraction of their score ({\em How much?}). For tasks that do not fit, {\name} computes an overbooking score (see~\xref{sec:overbooking}). This o-score is also in $[0, 1]$ and higher is better. And, multiples the packability score with the o-score. Doing so, intuitively picks online tasks that lower fragmentation~(high dot product), prefer those that use only local resources~(remote penalty), prefer those that fit and among the rest prefer those that benefit more from overbooking~(multiplying with o-score). \sriram{Need to be precise here.}

\vskip .05in
\noindent{\bf Shortness: }
Per running job, {\name} computes a score that reflects the amount of work remaining in that job. The intuition is to prefer jobs with less remaining work. Similar to SRPT~\cite{srpt}, we use this to lower average job completion time.

Combining the above measures, the {\em performance-score} of a task is a weighted sum of the packability score and shortness score, multiplied by task's priority.

\vskip .05in
\noindent{\bf Bounded Unfairness: }
Per queue or job group, {\name} maintains deficit counters to reflect how far the resources allocated to that group are from their desired share. The counters are per resource.  As with DRR~\cite{drr}, negative counter value indicates allocation is above-share and positive below-share. Per queue, the resource with the smallest counter value is the dominant resource. The more negative that value is the greater the amount of resource that is unfairly allocated to the queue. Using these counters, {\name} operates in one of two modes: it either (a) picks the best performance-scoring task from among those that keep maximal unfairness below a threshold or (b) if no such task is available picks the best performance-scoring task belonging to the queue that is maximally below share at that time.
Intuitively, (b) is a fair allocation and (a) picks best performing choice within bounded unfairness.
}

\iflongversion
\subsection{Judicious overbooking}
\label{sec:overbooking}

The key tussle in overbooking is as follows.
Overbooking improves throughput. For example, suppose tasks with duration $t$ require $0.6r$.
Running two tasks instead of one improves 
throughput by $60\%$ from $1$ task/$t$ to $2$ tasks/$1.2t$. 
Here, we assume task runtime increases linearly with overbooking amount.
Note that the potential gains from overbooking depend on the amount of idle resources that are 
reclaimed by overbooking~($.4r$ in the above case).
Conversely, overbooking is counter productive if resources will become free at some other machine soon. 
Suppose another machine can run the second task at time $+\varepsilon$. Without overbooking, 
tasks will finish at $\{t, t+\varepsilon\}$ and with overbooking both finish at $1.2t$. Even in simple settings, optimal overbooking is NP-hard~\cite{icalp14}.

{\name} offers a heuristic for overbooking.
First, it uses micro-benchmarks to determine how  task runtimes will be delayed when each resource is overbooked.
These functions are concave and vary across resource types.
Next, per potential task to overbook, {\name} runs a what-if analysis to decide between overbooking and 
waiting.
We compute the expected completion times of all affected tasks after overbooking. 
Note that resource overbooking delays these tasks
but the extent of delay can vary.
The ${\tt benefit}$ of overbooking is how much earlier would the new task finish with overbooking versus having to wait for next-free-resource.
The ${\tt cost}$ is the increase in runtime of all the other tasks due to overbooking.
Thus, overbooking score equals ${\tt benefit} - {\tt cost}$.
\else
\fi

\cut{
\subsection{Cut}
{\name} decouples constructing a "good" schedule per DAG from the actual execution of the schedule at runtime as shown in Fig.~\ref{fig:g_overview}.  By doing so, {\name} allows for sophisticated schedule construction during the compilation phase of the job or soon thereafter. The actual execution is substantially simpler and is geared to handle multiple DAGs and the online effects of arrivals, failures and stragglers.

The input to schedule construction is a DAG of tasks with resource demands and dependencies and the number of machines available to run that DAG. The goal of schedule construction is to minimize the completion time of the DAG; we intuitively call such schedules \emph{compact}. We construct the schedule independently for each DAG and convert it to a simple priority ordering of tasks for the purposes of job execution on the cluster execution. \name{} runtime component considers all currently running jobs and uses the ordered tasks to approximate the per-DAG schedule.

The key new idea in \name{} is to construct a complete schedule of the DAG "on paper" before starting the job. Rather than greedily selecting only among the currently runnable tasks -- such as in most cluster schedulers -- considering the whole DAG allows \name{} to process the tasks in arbitrary order.
Any task can be ``bound'' to a particular machine and time, as long as the assignment respects the capacity and dependency constraints.


Such "lookahead" construction helps {\name} yield compact schedules.
Recall {\em chokepoints} and {\em fragmentation} which happen because too few tasks are runnable or the runnable tasks do not fit well together. {\name} identifies the troublesome tasks -- the ones that are more likely to lead to chokepoints or fragmentation -- and binds them first. Doing so has two advantages. First, the troublesome tasks are bound onto a relatively sparsely populated machine x time and the resulting "green field" lets them be compactly arranged. For example, tasks requiring a lot of resource may otherwise be harder to place. Second, the remaining tasks whose binding has been delayed are bound "on top" of the troublesome tasks allowing them to use up any "resource holes". For example, if long running tasks are bound first, the resources that are left unusable beside these tasks can be used by the tasks that are bound later.

The above approach results in two challenges: first, how to identify the troublesome tasks in a DAG, and second, how to create a schedule that first binds the troublesome tasks and then fills in the rest.
As described below in detail, we first identify several sets of potential troublesome tasks using an efficient search procedure and for each such set, we use a custom DAG packer to create a DAG schedule.
We use the schedule with the lowest makespan as the final schedule.

How to identify the troublesome tasks in order to bind them early? The benefits of lookahead construction crucially depend on doing this well. However, there is no one answer for all DAGs. The long running tasks may be troublesome as in the case of Fig.~\ref{fig:p_eg3} where each leads to a chokepoint with substantial resource wastage. Tasks with large or odd shaped demands may be troublesome as in Fig.~\ref{fig:p_eg6} where they cause fragmentation. {\name} offers a strategic way to search for troublesome tasks, the details are in~\xref{sec:searchspace}. {\name} explores a few different options returned by this search strategy and picks the most compact schedule.

Schedule construction, if not done carefully, can lead to deadlock. Consider three tasks $t_1 \rightarrow t_2 \rightarrow t_3$. If $t_1$ and $t_3$ happen to be picked as troublesome, then they will be bound first and $t_2$ is deferred to later.  But the binding of $t_1$ and $t_3$ may not leave enough time and resources for $t_2$ to run in between leading to a deadlock. In the general case, when $t_2$ may be some subset of the DAG, it is apriori unknown how much resources and time are needed to be set aside. Setting aside more can prevent deadlocks but does not lead to a compact schedule.

Notice that we can easily schedule any DAG without deadlocks by topologically traversing it in \emph{forward or backward direction} and assigning tasks to machines with available resources. However, such schedule does not bind the troublesome tasks first and thus does not avoid chokepoints and
fragmentation.


\name{} decomposes the DAG into several connected components, see Fig.~\ref{fig:g_greedypack} left, and schedules them in a particular order using topological schedule to avoid deadlocks.
First, we compute a set of tasks $M$ that include all troublesome tasks as well as any other task which lies on a directed path between any two troublesome tasks. Scheduling $M$ in a topological order is safe, since no in-between tasks, such as $t_2$ in the above example, are left aside.
Second, we divide the remaining tasks into $A$, ancestors of $M$ (not in $M$), $D$, descendants of $M$ (not in $M$), and $O$, which includes all the other tasks.
By construction, notice that there are no edges between $M$ and $O$, no edges from $O$ to $A$, and no edges from $D$ to $O$.

Given these partitions, we explore several ways to schedule them to avoid deadlocks.
We always schedule $M$ first, either forward or backward, to schedule troublesome tasks first. Then we can schedule $D$ forward, $O$ backward, and $A$ backward. Or, after $M$, we can schedule $A$ backward, $O$ forward, and $D$ forward.
Notice that scheduling $O$ last is not safe, since we might not have enough space between $A$ and $D$.
Further details are in~\xref{sec:greedypacking}.
As mentioned earlier, we use this scheduling strategy on each candidate set of troublesome tasks identified by our search and select the most compact schedule.


{\name}'s runtime component distributes resources among the multiple DAGs running in a cluster. The schedules constructed above are translated into a start-time based priority order for the tasks in each DAG. Whenever a node heartbeats indicating the availability of free resources, the runtime component assigns resources to one of the pending tasks. This assignment process respects schedule priorities and enhances packability. Details are in~\xref{subsec:runtime} where we also point out how {\name} prefers data-local placement, prefers scheduling shorter jobs and offers a knob to trade-off a small loss in fairness for higher gains in performance (throughput and avg. completion time).
}


\section{A new lower bound}
\label{sec:betterlb}
We develop a new lower bound on the completion time of a DAG of tasks.
As we saw in~\xref{subsec:potential-gains}, previously known lower bounds are very loose.
Since the optimal solution is intractable to compute, without a good lower bound,
it is hard to assess the quality of a heuristic solution such as {\name}.

Equations~\ref{eqn:dag_cplength} and \ref{eqn:dag_twork}
describe the known bounds: critical path length ${\tt CPLen}$ and total work ${\tt TWork}$. 
Equation~\ref{eqn:newlb} is (a simpler form of) our new lower bound. 
At a high level, the new lower bound uses some structural properties of these job DAGs.
Recall that DAGs can be split into parts that are totally ordered~(\xref{sec:enhancements}). 
This lets us pick the best lower bound for each part independently. For a DAG that 
splits into a chain of tasks followed by a group of independent tasks, we could use ${\tt CPLen}$ of the chain plus the ${\tt TWork}$ of the group. 
A second idea is that on a path through the DAG, at least one stage has to complete entirely.
That is, all of the tasks in some stage and at least one task in each other stage on the path
have to complete entirely. This leads us to the ${\tt ModCP}_{\mathcal{G}}$ formula in Equation~\ref{eqn:dag_modcp}
where one stage $s$ along any path $p$ is replaced with the total work in that stage.
\iflongversion
A third idea is that some stages have all-to-all dependencies to all parents and children.
That is all its tasks have to finish {\em after} the last parent task finishes and {\em before} the first child task can start. 
For such stages, we can replace them with their total work.
To see why this helps, consider a stage of $n$ tasks with duration $d$ and demand vector ${\bf r}$.
This stage will now contribute $\max(nd\frac{{\bf r}}{C}, d)$ instead of $d$. When $n$ or ${\bf r}$ are large, this leads to a larger CPLength. 
$34$\% of the  stages in our production DAGs have this property.
Finally, we group tasks having identical parents and children even though their functions differ. E.g., two map stages 
preceding the same reduce stage~(in a join). Larger groups let us cumulatively account for their tasks
which helps the second and third changes above.
\else
A few other ideas are omitted for brevity.
\fi

The take-away is that the new lower bound ${\tt NewLB}$ is much tighter and allows
us to show that {\name} is close to ${\tt OPT}$; since by definition of a lower bound {\name} $\geq {\tt OPT} \geq {\tt NewLB}$.

\begin{figure}[t!]
{\scriptsize
\begin{subequations}
\begin{align}
{\tt CPLen}_{\mathcal{G}}&= \max_{\mbox{path } p \in \mathcal{G}} \sum_{\mbox{task } t \in p} t_{\mbox{duration}}\label{eqn:dag_cplength}\\
{\tt TWork}_{\mathcal{G}} & =  \max_{\mbox{resource } r} \frac{1}{C_r} \sum_{t \in \mathcal{G}} t_{\mbox{duration}} t^r_{\mbox{demands}}\label{eqn:dag_twork}\\
\iflongversion
{\tt ModCP}_{\mathcal{G}} & = \max_{\mbox{path } p \in \mathcal{G}} \max_{\mbox{stage } s \in p} \left( \max({\tt TWork}_s, {\tt CPLen}_s) + \sum_{g \in p-\{s\}} \min_{\mbox{task } t \in g} t_{\mbox{duration}}\right) \label{eqn:dag_modcp}\\
\else
{\tt ModCP}_{\mathcal{G}} & = \max_{p \in \mathcal{G}} \max_{s \in p} ( \max({\tt TWork}_s, {\tt CPLen}_s) + \sum_{s' \in p-\{s\}} \min_{t \in s'} t_{\mbox{dur.}}) \label{eqn:dag_modcp}\\
\fi
{\tt NewLB}_{\mathcal{G}} & = \sum_{\mathcal{G'} \in {\tt Partitions}(\mathcal{G})}
\max({\tt CPLen}_{\mathcal{G'}}, {\tt TWork}_{\mathcal{G'}}, {\tt ModCP}_{\mathcal{G'}})\label{eqn:newlb}
\end{align}
\end{subequations}

}
\vspace{-2.5mm}\ncaption{
Lower bound formulas for DAG $\mathcal{G}$; $p$, $s$, $t$ denote a path through the DAG, a stage and a task respectively. 
$C,$ here, is the capacity available for this job.
We developed ${\tt ModCP}$ and ${\tt NewLB}$.
\label{fig:eqns}}
\end{figure}

\iflongversion
We have the following lemma:
\vskip -.25in
\begin{lemma} \label{thm:lb_tight}
${\tt NewLB}$~(Eqn.~\ref{eqn:newlb}) is a valid lower bound for DAG runtime,
and ${\tt NewLB} \geq \max({\tt CPLen}, {\tt TWork})$.
\end{lemma}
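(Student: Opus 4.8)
The plan is to exploit the totally-ordered partition structure to reduce the statement to three per-part sub-bounds, verify each sub-bound for a single DAG, and then argue domination. First I would record the structural fact from~\xref{sec:enhancements}: because of barriers, $\mathcal{G}$ splits into partitions $\mathcal{G}_1,\ldots,\mathcal{G}_k$ that are totally ordered, so that in \emph{any} valid schedule every task of $\mathcal{G}_{i+1}$ begins only after every task of $\mathcal{G}_i$ has finished. Hence the makespan of $\mathcal{G}$ is the sum of the makespans of the parts. Since ${\tt NewLB}_{\mathcal{G}}=\sum_i \max({\tt CPLen}_{\mathcal{G}_i},{\tt TWork}_{\mathcal{G}_i},{\tt ModCP}_{\mathcal{G}_i})$ is itself a sum over the same parts, it suffices to show that each of ${\tt CPLen}$, ${\tt TWork}$, and ${\tt ModCP}$ individually lower-bounds the makespan of a single (sub-)DAG; the per-part maximum is then valid for each part and the sum is valid for $\mathcal{G}$.

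Two of the three sub-bounds are classical. For ${\tt CPLen}$ (Eqn.~\ref{eqn:dag_cplength}) I would invoke the precedence constraints: the tasks on any path must run one after another, so the makespan is at least the total duration of the longest path. For ${\tt TWork}$ (Eqn.~\ref{eqn:dag_twork}) I would use a volume/capacity argument: for every resource $r$ the instantaneous consumption never exceeds $C_r$, so the makespan is at least $\frac{1}{C_r}\sum_t t_{\mbox{duration}} t^r_{\mbox{demands}}$, and taking the maximum over $r$ yields the bound. Both hold even for malleable tasks, since a task allocated at most its demand makes progress at most proportionally.

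The crux is ${\tt ModCP}$ (Eqn.~\ref{eqn:dag_modcp}), and this is where I expect the real work. Fixing a path $p$ of stages and a distinguished stage $s\in p$, I would argue that the makespan is at least $\max({\tt TWork}_s,{\tt CPLen}_s)+\sum_{s'\in p-\{s\}}\min_{t\in s'}t_{\mbox{duration}}$, and that the two summands must be \emph{serialized}. Using the data-parallel property that all tasks of a stage share the same stage-level dependencies, every task of $s$ has an ancestor in each earlier stage of $p$ and a descendant in each later stage, so a representative task from each $s'\neq s$ on the path contributes at least $\min_{t\in s'}t_{\mbox{duration}}$ to a dependency chain that threads $s$, while the entire stage $s$ must be processed within its own window, costing at least its internal critical path ${\tt CPLen}_s$ and at least its capacity-bounded work ${\tt TWork}_s$. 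The delicate point, and the main obstacle, is justifying the ${\tt TWork}_s$ branch: it requires that $s$ occupy a window in which it essentially monopolizes the bottleneck resource, which is exactly the case for barrier / all-to-all stages. The outer maximization over $p$ and $s$ means we only need one configuration for which the argument is tight, so I would isolate the claim ``an all-to-all stage on a path forces its full work to add to the path's chain length'' as the key lemma and obtain the general formula by domination with this special case.

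Finally I would prove ${\tt NewLB}\geq\max({\tt CPLen}_{\mathcal{G}},{\tt TWork}_{\mathcal{G}})$. Because the partitions are totally ordered, the global critical path visits them in increasing index order and its intersection with each $\mathcal{G}_i$ is a contiguous subpath, hence a path of $\mathcal{G}_i$; summing gives ${\tt CPLen}_{\mathcal{G}}\leq\sum_i{\tt CPLen}_{\mathcal{G}_i}\leq{\tt NewLB}_{\mathcal{G}}$. For total work, evaluating at the maximizing resource $r^*$ and splitting the sum over parts gives ${\tt TWork}_{\mathcal{G}}=\frac{1}{C_{r^*}}\sum_i\sum_{t\in\mathcal{G}_i}t_{\mbox{duration}}t^{r^*}_{\mbox{demands}}\leq\sum_i{\tt TWork}_{\mathcal{G}_i}\leq{\tt NewLB}_{\mathcal{G}}$, where each inner term is dominated by ${\tt TWork}_{\mathcal{G}_i}$ since the latter maximizes over resources. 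Together these two inequalities establish the claimed domination.
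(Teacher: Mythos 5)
Your reduction to per-part bounds via the totally-ordered partition, your classical arguments for ${\tt CPLen}$ and ${\tt TWork}$, and your closing domination paragraph are all sound and consistent with the paper's (terse) proof. The genuine gap is in your treatment of ${\tt ModCP}$ (Eqn.~\ref{eqn:dag_modcp}), and it is a quantifier error: the sentence ``the outer maximization over $p$ and $s$ means we only need one configuration for which the argument is tight'' is backwards. A maximum of candidate quantities is a valid lower bound only if \emph{every} term in the maximum is one, so you must justify the expression for \emph{all} pairs $(p,s)$ --- including stages $s$ whose incident edges are one-to-one --- and no ``domination with the all-to-all special case'' can supply that. If the ${\tt TWork}_s$ branch genuinely failed for non-all-to-all stages, Eqn.~\ref{eqn:dag_modcp} itself would be false, not merely hard to prove; tightness at one configuration is irrelevant to validity.

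The obstacle you flag is, fortunately, illusory: the ${\tt TWork}_s$ branch does not require $s$ to ``monopolize the bottleneck resource.'' Fix a schedule and let $S$ and $F$ be the earliest start and latest finish among tasks of $s$. All of $s$'s work executes inside $[S,F]$, so the capacity constraint alone gives $F-S \geq \max({\tt TWork}_s, {\tt CPLen}_s)$, regardless of what else shares the machines --- competing tasks can only lengthen this window, never shorten it. Serialization with the rest of the path then comes from the stage-level dependency patterns of data-parallel DAGs (one-to-one \emph{or} all-to-all both suffice): every task of a path-stage has a parent in the preceding path-stage and a child in the succeeding one, so the task attaining $S$ carries an ancestor chain forcing $S \geq \sum_{\mbox{\scriptsize earlier } s'} \min_{t \in s'} t_{\mbox{\scriptsize duration}}$, and the task attaining $F$ carries a descendant chain contributing $\sum_{\mbox{\scriptsize later } s'} \min_{t \in s'} t_{\mbox{\scriptsize duration}}$ after $F$. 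This proves the bound for every $(p,s)$ with no all-to-all hypothesis. You appear to have conflated ${\tt ModCP}$ with the paper's separate ``third idea'' --- replacing an all-to-all stage's duration by its total work inside the critical-path computation --- which is the only place where all-to-all edges are genuinely needed, precisely because there the full stage work must sit strictly between the last parent's finish and the first child's start.
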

\begin{proof}
First, observe that the maximum of the lower bounds is also a lower bound.
Second, observe that by definition the DAG is cut into parts that have no overlap.
Hence, the lower bound of the DAG is equal to the sum of the lower bounds of the parts.
This supports Eqn.~\ref{eqn:newlb}.

Next, grouping stages with identical parent and child stages is appropriate because (a) there are no dependencies
between these tasks and (b) the definition of a stage has been a group of independent tasks that can run in parallel
and adhere to a specific dependence pattern with tasks in parent and child stages.

Using the total work to be done in a stage instead of the duration of a single task is appropriate.
This holds because either all of the work has to be done in line (when all parents and children have all-to-all edges)
or at least one stage on a path through the DAG will have to finish all of its work. 
This supports Eqns.~\ref{eqn:stage_moddur},~\ref{eqn:dag_modcp}.

Stepping back, this new lower bound was possible because of the abundance of groups of independent tasks that is common in data-parallel DAGs.
As we did not relax either dependence satisfaction or resource capacity limits, this lower bound
is much tighter than other bounds based on linear programs that relax those aspects~\cite{ec15,spaa13,icalp14}.
\end{proof}
\begin{figure}[t!]
\centering
\includegraphics[width=1.5in]{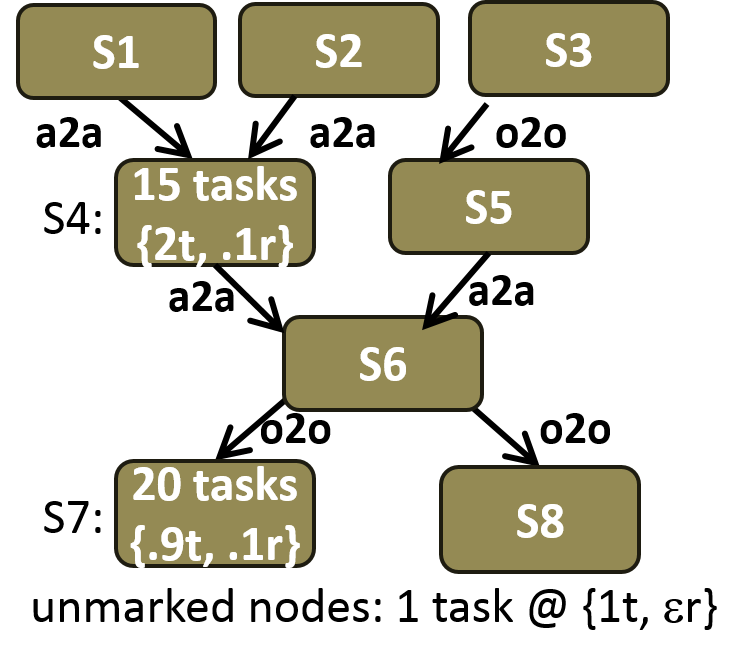}\qquad
\includegraphics[width=1.12in]{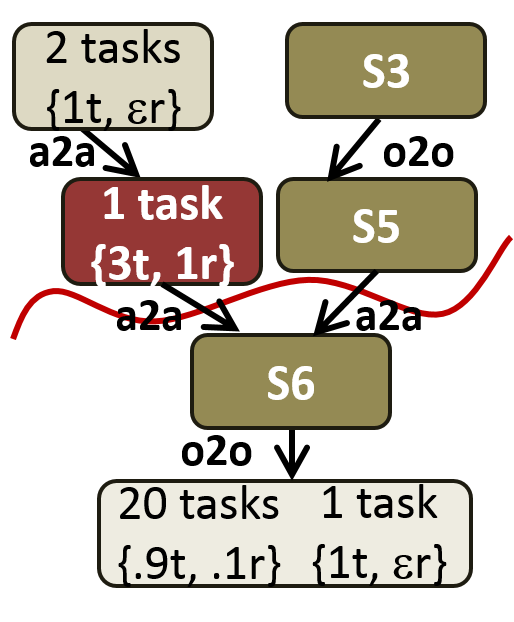}
\ncaption{Original DAG on left. Unmarked nodes have $1$ task each with duration $1t$ and demand $\varepsilon r$. 
The modified DAG used by our lower bound is on the right. Lower bound
improves by $40$\% from $5t$ to $6.8t$. Edge labels ${\tt o2o}, {\tt a2a}$ denote one-to-one and all-to-all 
dependencies between the tasks in the corresponding stages.\label{fig:lb_example}}
\end{figure}
Fig.~\ref{fig:lb_example} shows an example DAG and the various lower bounds. 
Note that, ${\tt CPLen}$ is $5t$ and ${\tt TWork}$ is $4.8t$.  
The longest critical path is $\{S_1, S_2\} \rightarrow S_3 \rightarrow S_6 \rightarrow S_8.$
The only stages with non-trivial work are $S_4$ at $3t$ and $S_7$ at $1.8t$. We assume the cluster capacity $C$ is $1r$.

The DAG on the right shows the modifications used by {\name}.
We can cut at $S_6$ per logic in~\xref{sec:simplify}.
We can replace $S_4$ with its $s_{\mbox{modDur}}$ of $3t$~(see Eqn.~\ref{eqn:stage_moddur}).
We can group $S_7$ and $S_8$ and replace them with their $s_{\mbox{modDur}}$ of $1.8t$.
We can also merge $S_1$ and $S_2$ but their $s_{\mbox{modDur}}$ remains $1t$ because of their 
small resource requirement.
The overall lower bound is now $6.8t$; $4t$ from ${\tt CPLen}$ of the DAG part
above the cut and $2.8t$ from the ${\tt ModCP}$ of the part below the cut.
\else
\fi

\section{{\namesec} System}
\label{sec:system}
\label{sec:app_system}
We have implemented the runtime component~(\xref{sec:runtime}) in the
Apache YARN resource manager~(RM) and the schedule
constructor~(\xref{sec:design}) in the Apache Tez application
master~(AM).  Our schedule constructor implementation finishes in tens of
seconds on all of the DAGs used in experiments; this is in the same ballpark 
as the time to compile and query-optimize these DAGs. Further, recurring
jobs use previously constructed schedules.
\cut{
The YARN RM has an online scheduler that assigns
tasks to machines; the tasks can belong to jobs from many
frameworks~\cite{hadoop,spark}. 
}
Each DAG is managed by 
an instance of the Tez AM which closely resembles other popular frameworks such as FlumeJava~\cite{flumejava} and Dryad~\cite{dryad}. 
The per-job AMs negotiate with the YARN RM for containers to run the job's tasks;
each container is a fixed amount of various resources.  
As part of implementing {\name}, we expanded
the interface between the AM and RM to pass additional
information, such as the job's pending work and tasks'
demands, duration and preferred order.  Due to anonymity
considerations, we are unable to share full details of our code
release.  Here, we describe two key implementation challenges: (a) constructing
profiles of tasks' resource demands and
duration~(\xref{subsec:res_demands}), and (b) efficiently implementing
the new online task matching logic~(\xref{subsec:bundling}).

\cut{Due to anonymity considerations we are unable to share full
details of our code release; however, we have filed JIRAs~(a JIRA is a
design document and a code patch) for the various components. Some
aspects have shipped but other patches are currently under
review. Here, we describe two key challenges: (a) constructing
profiles of tasks' resource demands and
duration~(\xref{subsec:res_demands}), and (b) efficiently implementing
the new online task matching logic~(\xref{subsec:bundling}).
}

\subsection{Profiling Tasks' Requirements}
\label{subsec:res_demands}
We estimate and update the tasks' resource demands and durations as follows.
Recurring jobs are fairly common in production clusters~(up to 40\%~\cite{rope,ccloud,optimus}),
executing periodically on newly arriving data (e.g., updating metrics
for a dashboard). For these jobs, {\name} extracts statistics from
prior runs. In the absence of prior history, we rely on two aspects of data
analytics computations that make it amenable to learn profiles at runtime.
(1) Tasks in a stage~(e.g., map or reduce) have similar profiles and (2) tasks often
run in multiple waves due to capacity limits.  {\name}
measures the progress and resource usage of tasks at runtime.
Using the measurements from in-progress and completed tasks, {\name} refines estimates for the
remaining tasks.
Our evaluation will demonstrate the effectiveness of this approach.
\cut{
That is, starting from ball-park estimates of profiles and then
refining them with runtime measurements allows any mistakes that
the offline schedule constructor may make due to imprecise estimates
to be fixable at runtime.}



\subsection{Efficient Online Matching: Bundling}
\label{subsec:bundling}
We have redesigned the online scheduler in YARN that
matches machines to tasks. 
\cut{
The YARN RM's scheduler performs resource allocation by matching machines
to tasks that have expressed a preference to run on them.  As
part of this work, we significantly enhanced this matching
logic.}
From conversations with Hadoop committers, these code-changes 
help improve matching efficiency and code readability.  

Some background: The matching logic is heartbeat based.
When a machine heartbeats to the RM, the allocator (1) picks an appropriate task to allocate to that machine, (2) adjusts its data structures
(such as, resorting/rescoring) and (3) repeats these steps until all resources on the node have
been allocated or all allocation requests have been satisfied.

As part of this work, we support
{\em bundling} allocations. That is, rather than breaking
the loop after finding the first schedulable task, we maintain a set of
tasks that can all be potentially scheduled on the machine. This
so-called {\em bundle} allows us to schedule multiple tasks in one
iteration, admitting non-greedy choices over multiple tasks. 
For example, if tasks $t_1, t_2, t_3$ are
discovered in that order, it may be better to schedule
$t_2$ and $t_3$ together rather than schedule $t_1$ by itself.
We refactored the scheduler to support bundling; with configurable
choices for (1) which tasks to add to the bundle, (2) when to terminate bundling~(e.g. the bundle has a good set of tasks) and (3) which tasks
to pick from the bundle. 
\cut{
We have implemented our runtime component as
well as the current scheduler for backward compatibility as bundling
policies.
}
\cut{
There are some nuances in the design of the RM that complicate our
implementation.  Our target, the more widely used Capacity Scheduler
implementation of the RM, offers hierarchical fair queueing: each
queue has a share of resources relative to its siblings and each {\em
  leaf} queue has many jobs. Further, {\em priorities} are used to
preferentially schedule tasks that have failed or are being
starved. The core scheduler has three loops to iterate over queues,
jobs within the queue and priorities within a job. When a first task
that fits in the machine is found, all three loops break and the
queues, jobs etc. are resorted. This design is slow since only one
task is picked per iteration over the three {\em loops} and is
cumbersome because any new logic has to be implemented as a sort order
over queues or jobs or priorities.
}


\section{Evaluation}
\label{sec:eval}
Here, we report results from experiments on a $200$
server cluster and extensive simulations using $20,000$ DAGs
from production clusters. Our key findings are:
\cut{
We evaluated our implementation on a $200$ server
cluster. Each experiment ran $200$ jobs
and lasted about an hour. The jobs were randomly
chosen from publicly available benchmarks as 
well as copies of jobs that ran in Microsoft's production clusters.
Our key results are:
}

\noindent
{\bf (1) } 
In experiments on a large server cluster, relative to ${\tt Tez}$ jobs running on ${\tt YARN}$, 
{\name} improves completion time of half of the jobs by $19$\% to $31$\% across
various benchmarks. A quarter of the jobs improve by $30$\% to $49$\%.
\cut{
Relative to Tez, {\name} improves completion time
of at least $50$\% of the jobs by $19-31$\% across the various benchmarks. 
A quarter of the jobs speed up by as much as $30-49$\%~(\xref{subsubsec:jct}).
}

\vskip .05in
\noindent
{\bf (2) }
On the DAGs from production clusters, schedules constructed by {\name}
are faster by $25$\% for half of the DAGs. A quarter
of the DAGs improve by $57$\%. Further,
by comparing with our new lower bound, these schedules are optimal for
$40$\% of the jobs and within $13$\% of optimal for $75$\% of the
jobs.


As part of the evaluation, we offer detailed comparisons with many alternative
schedulers and sensitivity analysis to cluster load and parameter
choices. We also provide early results on applying {\name} to
DAGs from other domains~(\xref{subsec:other-domains}).

\cut{
that ran in production on Cosmos. These DAGs are more complex~(higher depth, various shapes) relative to those in the benchmarks~(\xref{subsec:eval_other_metrics}). 
\item
The above constructed schedules equal our new lower bound~(${\tt NewLB}$ in~\xref{sec:betterlb})
for $40$\% of the jobs. The gap is below $13$\% for over $75$\% of the jobs. 
Hence, {\name} is close to ${\tt OPT}$ for most production jobs.
\item
We report detailed parameter sensitivity analysis, vary the cluster load, and experiment
with scheduling other DAGs beyond data-parallel jobs~(\xref{subsec:eval_sa}).
}

\subsection{Setup}
\label{subsec:eval_setup}
\noindent{\bf Our experimental cluster} has $200$ servers with
two quad-core Intel E2550 processors~(hyperthreading enabled), 
$128$ GB RAM, $10$
drives, and a $10$Gbps network interface. 
The network has a congestion-free core~\cite{vl2}.

\vskip .05in
\noindent{\bf Workload: } 
Our workload mix consists of jobs from public 
benchmarks---TPC-H~\cite{tpc-h}, TPC-DS~\cite{tpc-ds}, BigBench~\cite{bigbench}, 
and jobs from a production cluster that runs Hive jobs~(E-Hive). 
We also use $20$K DAGs from a private production system in our 
simulations. In each experimental run, jobs arrival is modeled via a Poisson process
with average inter-arrival time of $25$s for $50$ minutes.
Each job is picked at random from the corresponding benchmark.
We built representative inputs and 
varied input size from GBs to tens of TBs such that the average query
completes in a few minutes and the longest finishes in under $10$
minutes on the idle cluster.
A typical experiment run thus has about $200$ jobs and lasts until the last
job finishes. The results presented are the median over three runs.

\vskip .05in
\noindent{\bf Compared Schemes: } 
We experimentally compare {\name} against the following baselines: (1)
${\tt Tez}:$ breadth-first order of tasks in the DAG running atop YARN's Capacity Scheduler
(CS), (2) ${\tt Tez + CP}:$ critical path length based order of tasks
in the DAG atop CS and (3) ${\tt Tez + Tetris}:$ breadth-first order of
tasks in the DAG atop Tetris~\cite{tetris}.

Using simulations, we compare {\name} against the following schemes: (4) ${\tt BFS}:$
breadth first order, (5) ${\tt CP}:$ critical path order, (6) ${\tt
  Random}$ order, (7) ${\tt StripPart}$~\cite{StripPart}, (8) ${\tt
  Tetris}$~\cite{tetris}, and (9) ${\tt
  Coffman-Graham}$~\cite{coffman-graham}.

All of the above schemes except (7) are work-conserving.
(4)--(6) and (8) pick greedily from among the runnable tasks but vary in the specific heuristic.
(7) and (9) require more complex schedule construction,
as we will discuss later.
\cut{
In particular, StripPart~\cite{StripPart} has the tightest
bound among algorithms that pack dependent groups of tasks.
Several of these schemes require fitting on all the resources. 
We also equip each with an overbooking logic akin to the one used by {\name}~(see~\xref{sec:overbooking}).
}
\vskip .05in
\noindent{\bf Metrics: } Improvement in job completion time is our key metric. 
Between two schemes, we measure the {\em normalized gap}
in job completion time. That is, the difference in the runtime achieved for the
same job divided by the runtime of the job with some scheme; the normalization lets us  
compare across jobs with very different runtimes. Other metrics of
interest are makespan, i.e., the time to finish a given set of jobs,
and Jain's fairness index~\cite{jain_fairness} to measure how close
the cluster scheduler comes to the desired allocations.

\cut{We primarily consider the completion time of
each job in the workload. We measure the ``normalized gap" between two
schemes. Per job, ``gap" is the difference in the runtime achieved for
that job by the two schemes.  Also, we divide the gap by the runtime
of one of the two schemes to allow comparing across jobs with very
different runtimes. We also report makespan, i.e., the time to finish
a given set of jobs and use Jain's fairness index~\cite{jain_fairness} to measure how close
the cluster scheduler comes to desired allocations.}

\iflongversion
\begin{figure*}[!]
\begin{subfigure}[b]{0.35\textwidth}
\centering
\includegraphics[width=1.85in, height=1.1in]{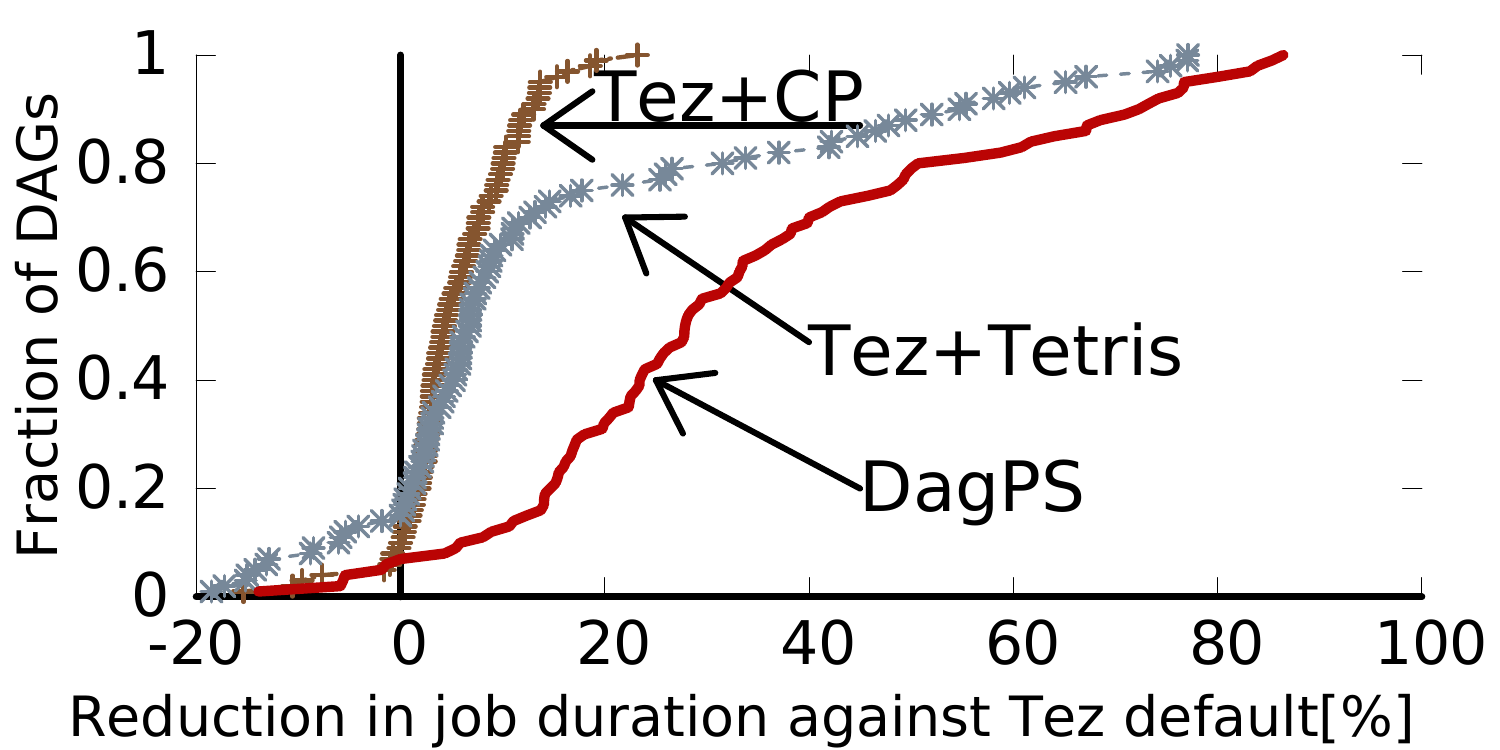}
\ncaption{CDF jobs on TPC-DS workload\label{fig:eval_cdf_tpcds}}
\end{subfigure}
\begin{subfigure}[b]{0.60\textwidth}
\begin{scriptsize}
\begin{tabular}[b]{l|rrr|rrr|rrr}
 & \multicolumn{3}{c|}{$25^{th}$ \%} & \multicolumn{3}{c|}{$50^{th}$ \%} & \multicolumn{3}{c}{$75^{th}$ \%}\\
{Workload} & {T+CP} & {T+T} & {\bf G} & {T+CP} & {T+T} & {\bf G} & {T+CP} & {T+T} & {\bf G} \\
\hline
{TPC-DS} & 2.0 & 1.9 & {\bf 16.0} & 4.1 & 6.5 & {\bf 27.8} & 8.9 & 16.6 & {\bf 45.7} \\
\hline
{TPC-H} & 1.8 & 1.5 & 7.6 & 3.8 & 8.9 & 30.5 & 7.7 & 15.0 & 48.3  \\
\hline
{BigBench} & 4.1 & 2.0 & 5.6 & 6.4 & 6.2 & 25.0 & 21.7 & 18.5 & 33.3 \\
\hline
{MS-Prod} & -3.0 & 3.2 & 4.4 & 1.0 & 5.8 & 19.0 & 4.5 & 14.2 & 29.7\\\hline
\multicolumn{10}{p{3.6in}}{\rule{0pt}{2ex}{\bf G} is {\name}, {\bf T+T} is Tez + Tetris and {\bf T+CP} is Tez + CP. The improvements are relative to ${\tt Tez}$. Each group of columns reads out the gaps at the percentile in the label of that group.}\\
\end{tabular}
\end{scriptsize}
\ncaption{Improvements in job completion time across all the workloads \label{tab:eval_lat_datasets}}
\end{subfigure}
\vspace{3mm}\ncaption{Comparing completion time improvements of various schemes relative to ${\tt Tez}$.\label{fig:eval_compl_times}}
\end{figure*}
\else
\begin{figure}
\begin{subfigure}[b]{\textwidth}
\centering
\includegraphics[width=3in]{figures/cluster/cdf_job_compl_time.pdf}
\ncaption{CDF of gains for jobs on TPC-DS workload\label{fig:eval_cdf_tpcds}}
\end{subfigure}
\begin{subfigure}[b]{\textwidth}
\begin{small}
\vspace{.08in}
\begin{center}
\begin{tabular}[b]{l|rrr|rrr}
& \multicolumn{3}{c|}{$50^{th}$ percentile} & \multicolumn{3}{c}{$75^{th}$ percentile}\\
{Workload} & {\bf D} &{T+C} & {T+T}   & {\bf D} & {T+C} & {T+T}   \\
\hline
{TPC-DS} & {\bf 27.8} & 4.1 & 6.5 & {\bf 45.7} & 8.9 & 16.6  \\
\hline
{TPC-H} & {\bf 30.5} & 3.8 & 8.9  & {\bf 48.3} & 7.7 & 15.0 \\
\hline
{BigBench} & {\bf 25.0} & 6.4 & 6.2  & {\bf 33.3} & 21.7 & 18.5 \\
\hline
{E-Hive} & {\bf 19.0} & 1.0 & 5.8 & {\bf 29.7} & 4.5 & 14.2 \\\hline
\end{tabular}
\end{center}
\vspace{-.08in}
{
{\bf D} stands for {\name}. {T+C} and {T+T} denote ${\tt Tez+CP}$ and ${\tt Tez+Tetris}$ respectively~(see~\xref{subsec:eval_setup}).
The improvements are relative to ${\tt Tez}$. 

}
\end{small}
\ncaption{Improvements in job completion time across all the workloads \label{tab:eval_lat_datasets}}
\end{subfigure}
\ncaption{Comparing completion time improvements of various schemes relative to ${\tt Tez}$.\label{fig:eval_compl_times}}
\end{figure}
\fi
\iflongversion
\begin{figure*}[t!]
\begin{subfigure}[b]{0.19\textwidth}
\centering
\includegraphics*[width=1.41in, height=0.94in]{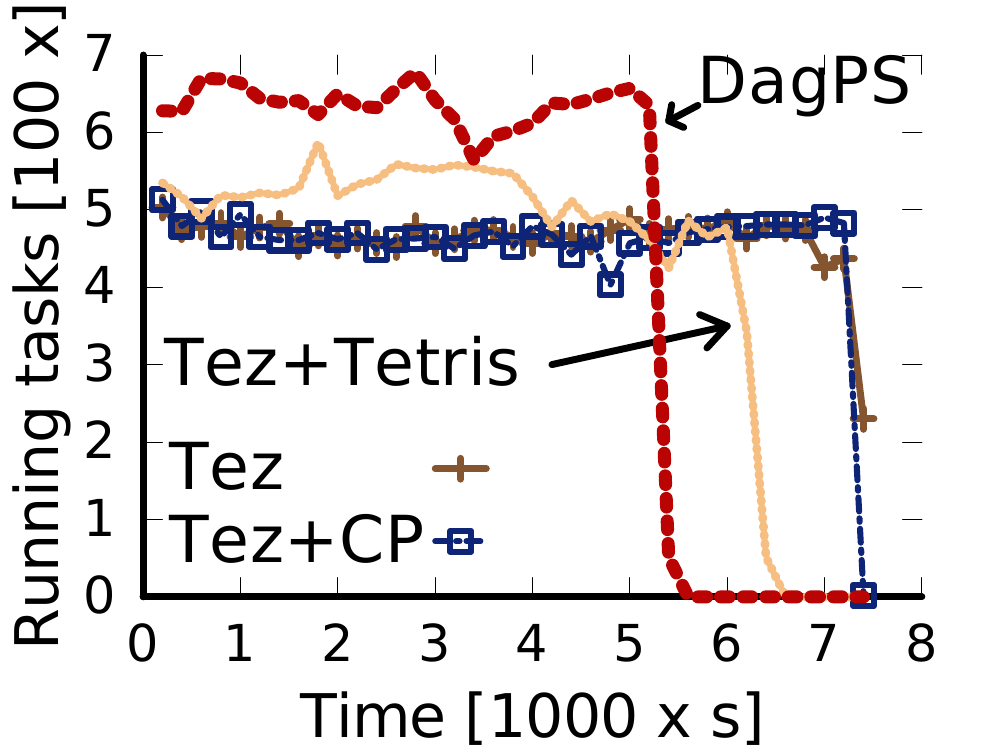}
\vspace{-5mm}\ncaption{Running tasks\label{fig:eval_runningtasks}}
\end{subfigure}
\begin{subfigure}[b]{0.19\textwidth}
\centering
\includegraphics*[width=1.41in, height=0.94in]{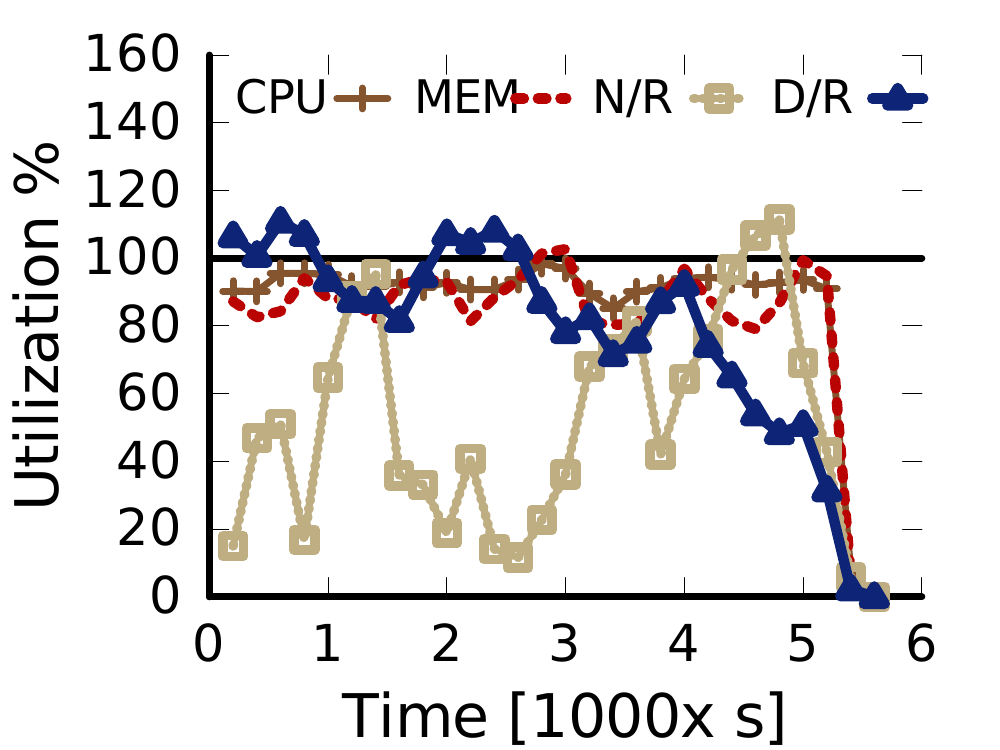}
\vspace{-5mm}\ncaption{{\name}\label{fig:eval_g_resources}}
\end{subfigure}
\begin{subfigure}[b]{0.19\textwidth}
\centering
\includegraphics*[width=1.41in, height=0.94in]{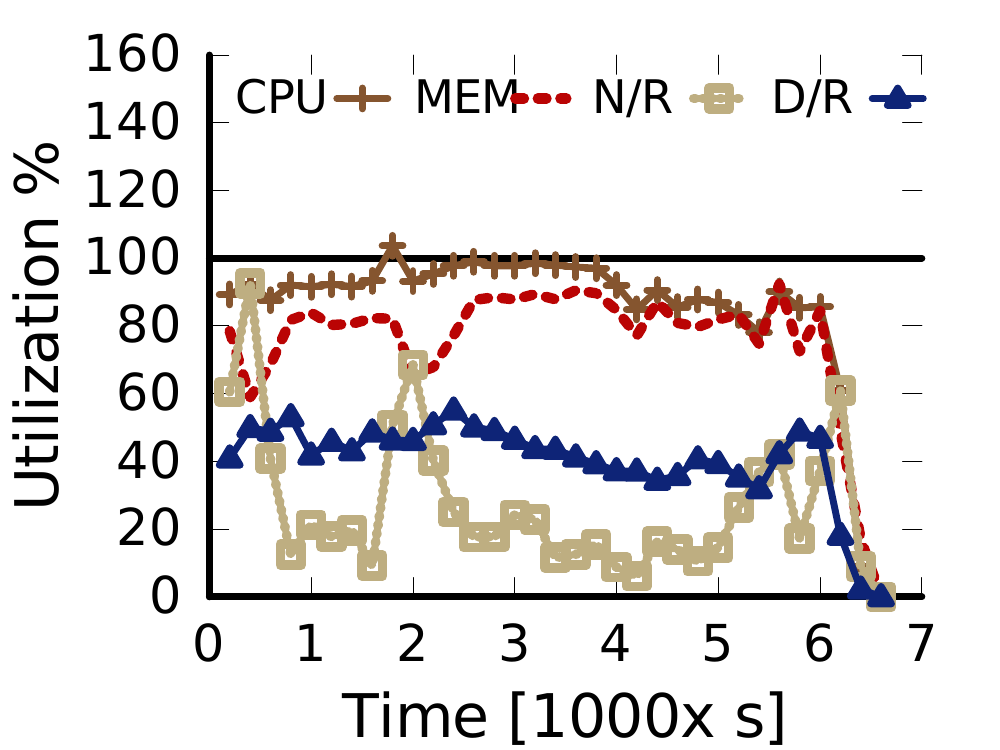}
\vspace{-5mm}\ncaption{${\tt Tez+Tetris}$\label{fig:eval_t_resources}}
\end{subfigure}
\begin{subfigure}[b]{0.19\textwidth}
\centering
\includegraphics*[width=1.41in, height=0.94in]{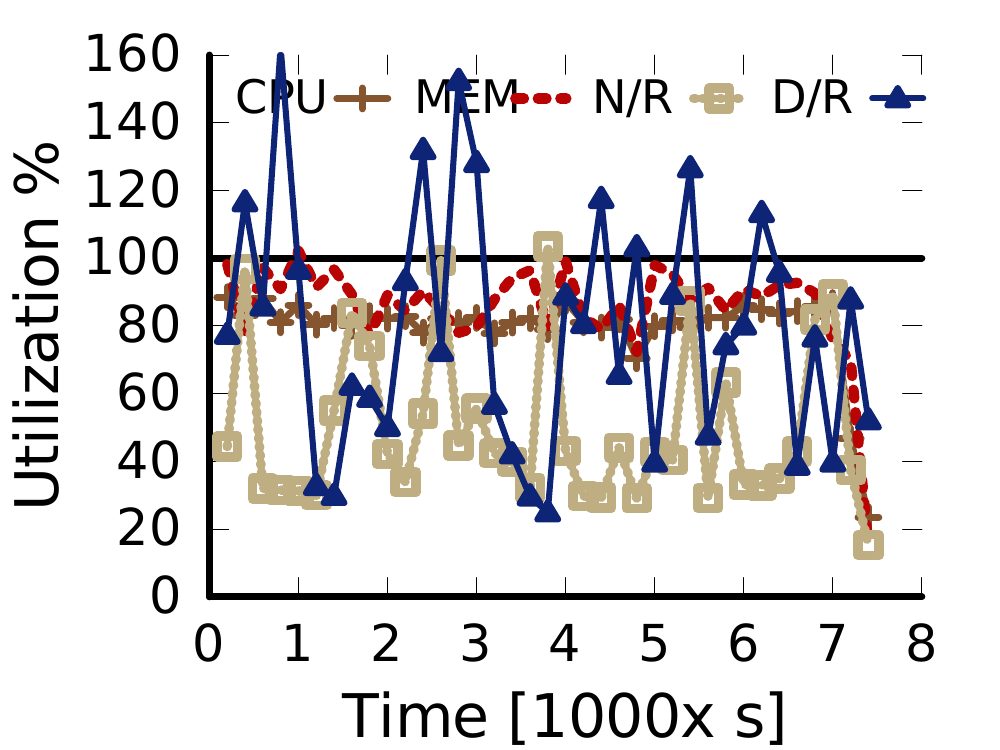}
\vspace{-5mm}\ncaption{${\tt Tez+CP}$\label{fig:eval_cp_resources}}
\end{subfigure}
\begin{subfigure}[b]{0.19\textwidth}
\centering
\includegraphics*[width=1.41in, height=0.94in]{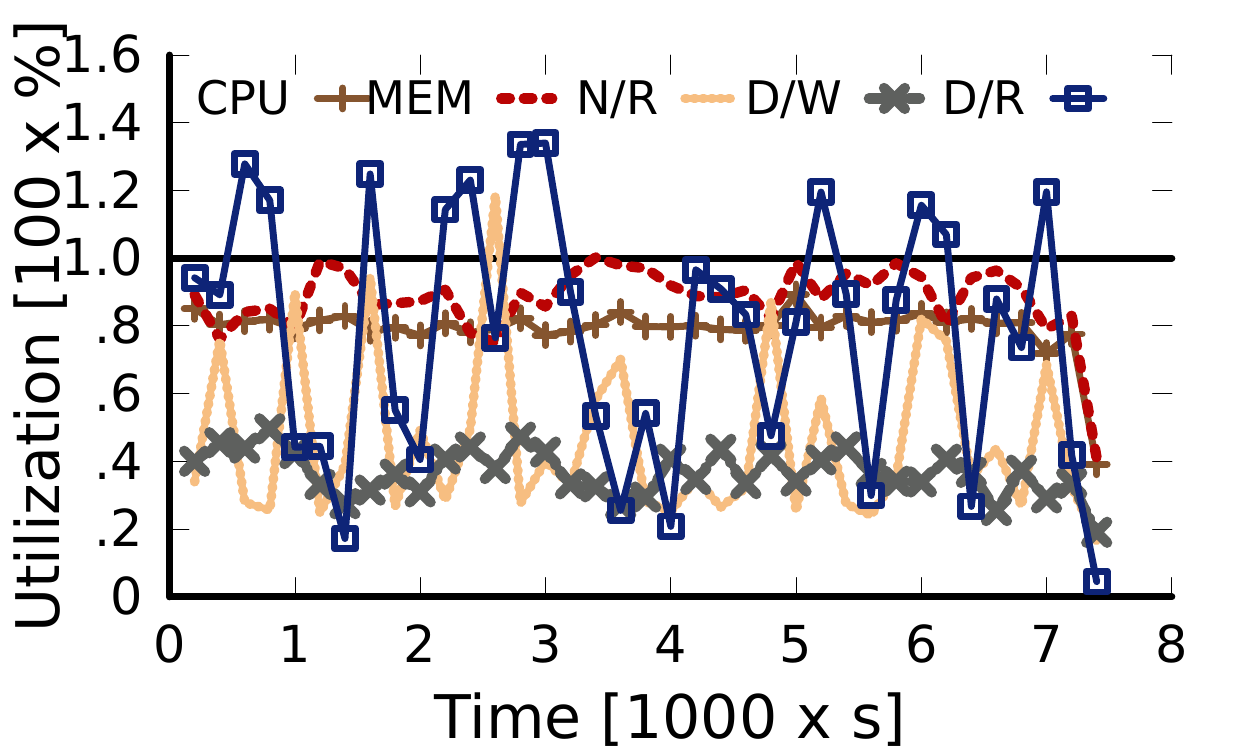}
\vspace{-5mm}\ncaption{${\tt Tez}$\label{fig:eval_default_resources}}
\end{subfigure}
\vspace{3mm}\ncaption{For a cluster run with $200$ jobs, a time lapse of how many tasks are running~(leftmost) and how
many resources are allocated by each scheme. $N/R$ represents the amount of network read, $D/R$ the disk read and $D/W$ the corresponding disk write.\label{fig:eval_why_gains}}
\end{figure*}
\else
\begin{figure}[t!]
\centering
\begin{subfigure}[b]{0.45\textwidth}
\centering
\includegraphics[width=1.6in]{figures/cluster/running_tasks.pdf}
\vspace{-5mm}
\ncaption{Running tasks\label{fig:eval_runningtasks}}
\end{subfigure}
\begin{subfigure}[b]{0.45\textwidth}
\centering
\includegraphics[width=1.6in]{figures/cluster/graphene_resources.pdf}
\vspace{-5mm}
\ncaption{{\name}\label{fig:eval_g_resources}}
\end{subfigure}
\\[0.1in]
\begin{subfigure}[b]{0.45\textwidth}
\centering
\includegraphics[width=1.6in]{figures/cluster/tetris_resources.pdf}
\vspace{-5mm}
\ncaption{${\tt Tez+Tetris}$\label{fig:eval_t_resources}}
\end{subfigure}
\begin{subfigure}[b]{0.45\textwidth}
\centering
\includegraphics[width=1.6in]{figures/cluster/cp_resources.pdf}
\vspace{-5mm}
\ncaption{${\tt Tez+CP}$\label{fig:eval_cp_resources}}
\end{subfigure}
\vspace{3mm}\ncaption{For a cluster run with $200$ jobs, a time lapse of how many tasks are running~(leftmost) and how
many resources are allocated by each scheme. $N/R$ represents the amount of network read, $D/R$ the disk read and $D/W$ the corresponding disk write.\label{fig:eval_why_gains}}
\end{figure}
\fi

\subsection{How does {\namesec} do in experiments?}
\label{subsec:eval_cluster}
\subsubsection{Job Completion Time}
\label{subsubsec:jct}

Relative to ${\tt Tez}$, Figure~\ref{fig:eval_compl_times} shows that {\name}
improves half of the DAGs by $19$--$31$\% across various benchmarks.  One quarter of the DAGs
improve by $30$--$49$\%. We see occasional regressions. Up to $5$\% of
the jobs slow down with {\name}; the maximum slowdown is $16$\%. 
We found this to be due to two reasons.
(a) Noise from runtime artifacts such as stragglers and task failures and
(b) Imprecise profiles: in all of our experiments, we use a single profile (the average)
for all tasks in a stage but due to reasons such as data-skew, tasks in a stage
can have different resource needs and durations. 
The table in Fig.~\ref{fig:eval_compl_times} shows results for other benchmarks; 
we see that DAGs from E-Hive 
see the smallest improvement~($19$\% at median) because the DAGs here are mostly two stage
map-reduce jobs. The other benchmarks have more complex DAGs and hence receive sizable
gains. 

\cut{Fig.~\ref{fig:eval_compl_times} depicts the gap in completion time for jobs from cluster runs. Relative to the baseline, default Tez, {\name} offers substantial gains.
At least 50\% of the DAGs improve by $19-31$\% across various workloads.
And, at least 25\% of the DAGs improve by $30-49$\%. 
TPC-H workload has the most gains.
Gains for TPC-DS and BigBench workloads are about the same.
However, the {\tt Prod-B} dataset is dominated by queries  with few tasks and so {\name}'s gains, as well as the potential for gains, are limited on that workload.}

Relative to the alternatives, Figure~\ref{fig:eval_compl_times} shows that {\name} is $15$\% to $34$\% better. 
${\tt Tez+CP}$ achieves only marginal gains over ${\tt Tez}$, hinting 
that critical path scheduling
does not suffice. 
The exception is the BigBench dataset where about half the queries are dominated by work on the critical path.
${\tt Tez+Tetris}$ comes closest to 
{\name} because Tetris' packing logic reduces fragmentation.
But, the gap is still substantial, since Tetris ignores dependencies. In fact, we see that ${\tt Tez+Tetris}$ does not 
consistently beat ${\tt Tez+CP}$. 
Our takeaway 
is that considering both dependencies and packing can substantially improve DAG completion time. 

\cut{Across the board, {\name} is $15$\% to $34$\% better than the closest alternative.  Tez+CPSched is only a marginal improvement over Tez default hinting that preferentially scheduling tasks with more downstream work first does not suffice. The exception is the case of queries that have a dominant critical path. CPSched helps here by preventing work off the critical path from delaying tasks on the critical path. Such queries appear in the BigBench dataset and to a lesser extent in the {\tt Prod-B} dataset. On those datasets, Tez+CPSched does well for some DAGs (see 75th and 90th percentiles for BigBench and 90th for {\tt Prod-B}). Tez+Tetris comes closest to {\name} because Tetris' packing logic at the cluster-wide resource manager reduces fragmentation. But, it is not consistently better than Tez+CPSched. We conclude that for DAGs, considering both dependencies and packing, as {\name} does offers substantial improvement beyond considering them individually.}

Where do the gains come from? 
Figure~\ref{fig:eval_why_gains} offers more detail on an example experimental run. 
{\name} keeps more tasks running on the cluster and hence finishes faster~(Fig.~\ref{fig:eval_runningtasks}). 
The other schemes take over 20\% longer. 
To run more tasks, {\name} gains by reducing fragmentation and by overbooking fungible resources. 
Comparing Fig.~\ref{fig:eval_g_resources} with Figs.~\ref{fig:eval_t_resources}--\ref{fig:eval_cp_resources}, the average allocation of all resources is higher with {\name}. 
Occasionally, {\name} allocates over 100\% of the network and disk. 
${\tt Tez+Tetris}$, the closest alternative, has fewer tasks running at all times
because (a) it does not overbook~(all resource usages are below $100$\% in Fig.~\ref{fig:eval_t_resources})
and (b) it ignores dependencies and packs greedily leading to a worse packing of the entire DAG.
${\tt Tez+CP}$ is impacted negatively by two effects: (a) ignoring disk and network usage leads to 
arbitrary over-allocation~(the ``total'' resource usage is higher because, due to saturation, tasks hold on to allocations for longer)
and (b) due to fragmentation, many fewer tasks run on average. Together these lead to low task throughput and job delays. 

\begin{table}[t!]
{\small
\begin{tabular}{c|c|c|c}
{\bf Workload} & {\tt Tez+CP} & {\tt Tez+Tetris} & {\name}\\\hline
TPC-DS & $+2.1$\% & $+8.2$\% & $+30.9$\%\\
TPC-H & $+4.3$\% & $+9.6$\% & $+27.5$\%\\
\end{tabular}
}
\ncaption{Makespan, gap from ${\tt Tez}$.\label{tab:eval_makespan}}
\end{table}
\begin{table}[t!]
{\small
\scalebox{0.95}{
\begin{tabular}{c|c|c|ccc}
\multirow{2}{*}{\bf Workload} &
\multirow{2}{*}{\bf Scheme} & {\bf 2Q vs. 1Q} & \multicolumn{3}{c}{\bf Jain's fairness index} \\
&& {\bf Perf. Gap} & {$10$s} & {$60$s} & {$240$s}\\\hline
\multirow{3}{*}{TPC-DS}&{\bf Tez} & $-13$\%& $0.82$ & $0.86$ & $0.88$\\
&{\bf Tez+DRF} & $-12$\% & $0.85$ & $0.89$ & $0.90$\\
&{\bf Tez+Tetris} & $-10$\% & $0.77$ & $0.81$ & $0.92$\\
&{\bf {\name}} & $+2$\% & $0.72$ & $0.83$ & $0.89$\\
\end{tabular}}
}
\ncaption{Fairness: Shows the performance gap and Jain's fairness index when used with 2 queues~(even share) versus 1 queue. Here, a fairness score of $1$ indicates perfect fairness.\label{tab:eval_fairness}}
\end{table}

\subsubsection{Makespan}
\label{subsec:eval_other_metrics}
To evaluate makespan, we make one change to experiment setup-- all jobs arrive within the first few minutes. 
Everything else remains the same. 
Table~\ref{tab:eval_makespan} shows the gap in makespan for different
cases.  Due to careful
packing, {\name} sustains high cluster resource utilization, which in turn enables
individual jobs to finish quickly: makespan improves $31$\% relative to ${\tt Tez}$ and over $20$\% relative to 
alternatives.

\subsubsection{Fairness}
Can we improve performance while also being fair?
Intuitively, fairness may hurt performance 
since the task scheduling order needed for high performance (e.g.,
packability or dependencies) differs from the order that ensures
fairness.  To evaluate fairness, we make one change to the experiment
set up.  The jobs are evenly and randomly distributed among two queues
and the scheduler has to divide resources evenly.

Table~\ref{tab:eval_fairness} reports the gap in performance~(median job completion time) for each scheme when run with two queues 
vs.  one queue.
We see that
${\tt Tez}$, ${\tt Tez+DRF}$ and ${\tt Tez+Tetris}$ lose over $10$\% in performance relative to their one queue counterparts. 
The table shows that with two queues, {\name} has a small gain
(perhaps due to experimental noise).  Hence, relatively, {\name}
performs even better than the alternatives if given more
queues~($30$\% gap at one queue in Fig.~\ref{fig:eval_cdf_tpcds}
translates to a $40$\% gap at two queues). But why?
Table~\ref{tab:eval_fairness} also shows Jain's fairness index
computed over 10s, 60s and 240s time windows. We see that {\name} is
less fair at short timescales but is indistinguishable at larger time
windows.  This is because {\name} is able to bound
unfairness~(\xref{subsec:runtime}); it leverages some
short-term {\em slack} from precise fairness to make scheduling choices that improve performance.

\subsection{Comparing with alternatives}
\label{subsec:eval_compare_with_alt}
We use simulations to compare a much wider set of best-of-breed algorithms~(\xref{subsec:eval_setup}) on the much larger DAGs that ran in the production clusters. We mimic the actual dependencies, task durations 
and resource needs from the cluster.

Figure~\ref{fig:eval_compare_alt} compares the schedules constructed by 
{\name} with that from other algorithms. Table~\ref{tab:eval_complete_alt} 
reads out the gaps at various percentiles. We observe that {\name}'s gains at the end of schedule construction are about the same as those obtained at runtime~(Figure~\ref{fig:eval_compl_times}).  
This is interesting because the runtime component only
softly enforces the desired schedules from all the jobs running simultaneously 
in the cluster. It appears that any loss in performance from not adhering to the desired schedule
are made up by the gains from 
better packing and trading off some short-term unfairness.

\begin{figure}[t!]
\centering
\begin{subfigure}[b]{0.45\textwidth}
\centering
\includegraphics[width=1.55in]{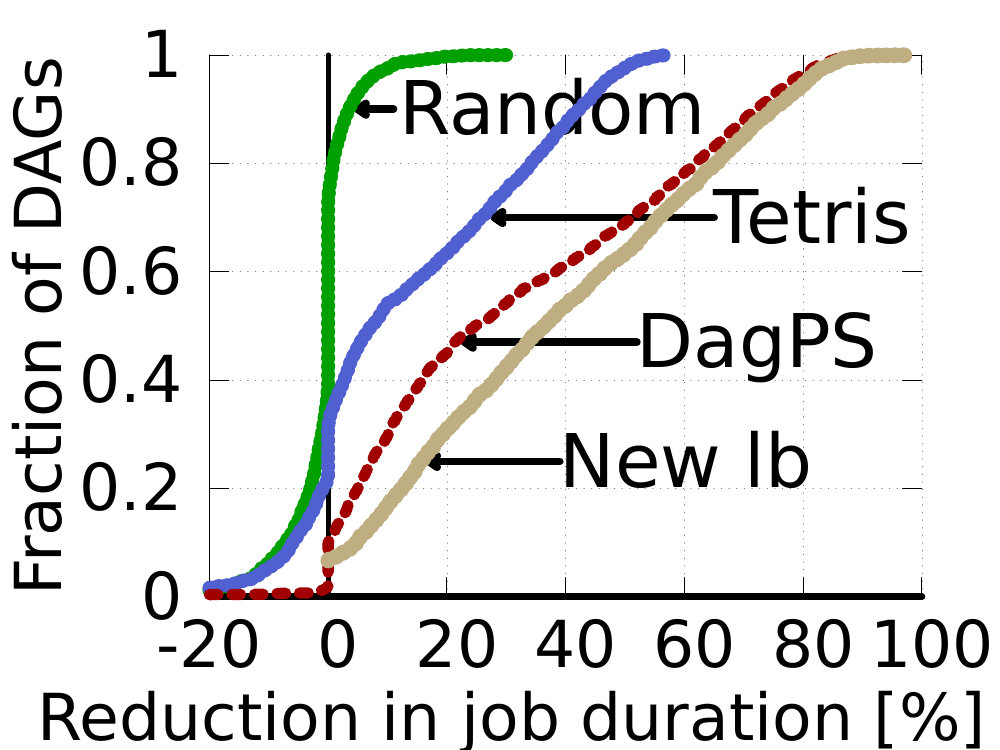}
\vspace{-5mm}\ncaption{{\name} vs. Baselines\label{fig:eval_alt}}
\end{subfigure}
\begin{subfigure}[b]{0.45\textwidth}
\centering
\includegraphics[width=1.55in]{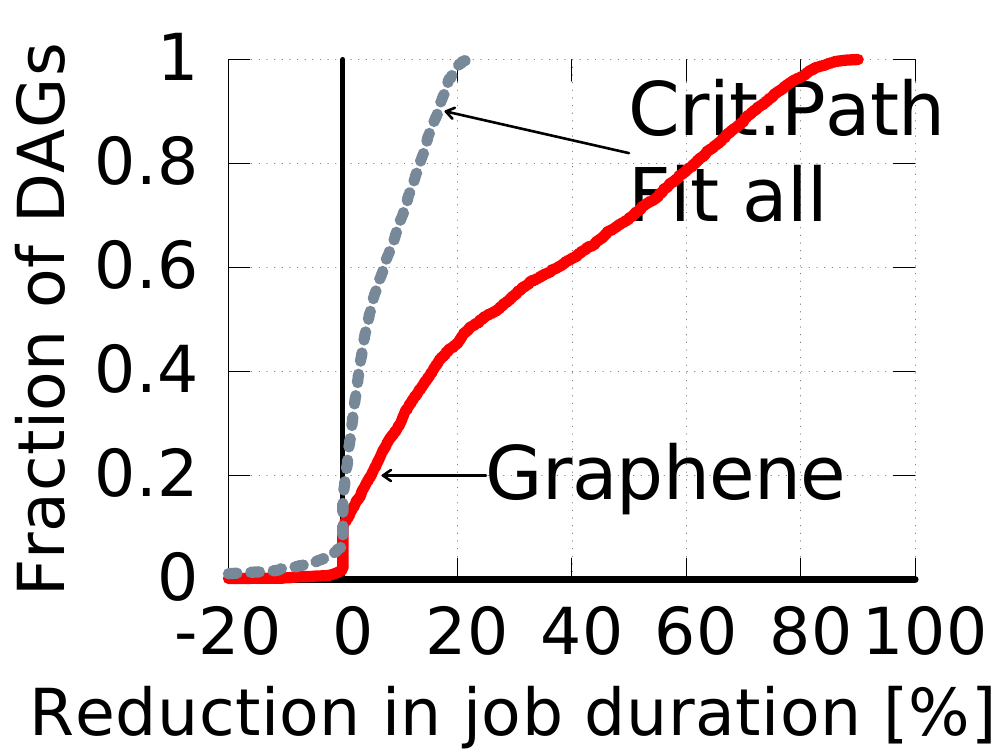}
\vspace{-5mm}\ncaption{{\name} vs. Alternatives\label{fig:eval_cp_g}}
\end{subfigure}
\vspace{4mm}\ncaption{Comparing {\name} with other schemes. We removed the lines from {\tt CG} and {\tt StripPart} because they hug $x=0$; see~\autoref{tab:eval_complete_alt}.} \label{fig:eval_compare_alt}
\end{figure}
\iflongversion
\begin{table}[t!]
\begin{small}
\scalebox{0.95}{
\begin{tabular}{l|l||ccccc}
\multicolumn{2}{l||}{} & {$25^{th}$} & {$50^{th}$} & {$75^{th}$} & {$90^{th}$}\\\hline\hline
\multicolumn{2}{l||}{\bf\name} & $ {\bf 7} $ & ${\bf 25}$ & ${\bf 57}$ & ${\bf 74}$ \\\hline
\multicolumn{2}{l||}{Random} & $ {-2} $ & ${0}$ & $ {1} $ & ${4}$ \\\hline
\multirow{3}{*}{Crit.Path} 
& Fit cpu/mem & $ {-2} $ & ${0}$ & $ {2} $ & ${1}$ \\
& Fit all & $ {1} $ & ${4}$ & $ {13} $ & ${16}$ \\
& \textcolor{red}{Overbooking} & $ \textcolor{red}{2} $ & $\textcolor{red}{9}$ & $ \textcolor{red}{24} $ & $\textcolor{red}{31}$ \\
\hline
\multirow{2}{*}{Tetris} & {Fit all} & $ {0} $ & ${7}$ & $ {29} $ & ${42}$ \\
& \textcolor{red}{Overbooking} & $ \textcolor{red}{0} $ & $\textcolor{red}{11}$ & $ \textcolor{red}{33} $ & $\textcolor{red}{49}$ \\\hline
\multirow{2}{*}{Strip Part.} & {Fit all} & $ {0} $ & ${1}$ & $ {12} $ & ${27}$ \\
& \textcolor{red}{Overbooking} & $ \textcolor{red}{0} $ & $\textcolor{red}{2}$ & $ \textcolor{red}{16} $ & $\textcolor{red}{33}$ \\\hline
\multirow{2}{*}{Coffman-Graham.} & {Fit all} & $ {0} $ & ${1}$ & $ {12} $ & ${26}$ \\
& Fit cpu/mem & $ {-2} $ & ${0}$ & $ {0} $ & ${2}$ \\
\end{tabular}}
\end{small}
\ncaption{Reading out the gaps from Fig.~\ref{fig:eval_compare_alt}; comparing {\name} vs. Alternatives. The improvements are relative to {\tt BreadthFirst} - standard approach used in 
${\tt Tez}$.}
\label{tab:eval_complete_alt}
\end{table}
\else
\begin{table}[t!]
\begin{small}
\scalebox{0.95}{
\begin{tabular}{l|l||ccccc}
\multicolumn{2}{l||}{} & {$25^{th}$} & {$50^{th}$} & {$75^{th}$} & {$90^{th}$}\\\hline\hline
\multicolumn{2}{l||}{\bf\name} & $ {\bf 7} $ & ${\bf 25}$ & ${\bf 57}$ & ${\bf 74}$ \\\hline
\multicolumn{2}{l||}{Random} & $ {-2} $ & ${0}$ & $ {1} $ & ${4}$ \\\hline
\multirow{2}{*}{Crit.Path} 
& Fit cpu/mem & $ {-2} $ & ${0}$ & $ {2} $ & ${1}$ \\
& Fit all & $ {1} $ & ${4}$ & $ {13} $ & ${16}$ \\
\hline
\multirow{1}{*}{Tetris} & {Fit all} & $ {0} $ & ${7}$ & $ {29} $ & ${42}$ \\
\hline
\multirow{1}{*}{Strip Part.} & {Fit all} & $ {0} $ & ${1}$ & $ {12} $ & ${27}$ \\
\hline
\multirow{2}{*}{Coffman-Graham.} & {Fit all} & $ {0} $ & ${1}$ & $ {12} $ & ${26}$ \\
& Fit cpu/mem & $ {-2} $ & ${0}$ & $ {0} $ & ${2}$ \\\hline
\end{tabular}}
\end{small}
\vspace{2mm}
\ncaption{Reading out the gaps from Figure~\ref{fig:eval_compare_alt}; comparing {\name} vs. Alternatives. Each entry is the improvement relative to ${\tt BFS}$.}
\label{tab:eval_complete_alt}
\end{table}
\fi

Second, {\name}'s gains are considerable compared to the alternatives. 
${\tt CP}$ and ${\tt Tetris}$ are the closest. The reason is that {\name} looks
at the entire DAG and places the troublesome tasks first, leading to a more compact schedule
overall.

Third, when tasks have unit durations and nicely shaped demands, ${\tt
  CG}$~(Coffman-Graham~\cite{coffman-graham}) is at most 2 times optimal.  
However, it does not
perform well on production DAGs that have diverse
demands for resources and varying durations.  Some recent extensions
to ${\tt CG}$ handle heterogeneity but ignore fragmentation
issues when resources are divided across many machines~\cite{kwok1999static}.

Fourth, ${\tt StripPart}$~\cite{StripPart} is the best known algorithm that 
combines resource packing and task dependencies. It yields an $O(\log n)$-approx 
ratio on a DAG with $n$ tasks~\cite{StripPart}. The key idea is to partition tasks 
into {\em levels} such that all dependencies go across levels. The primary drawback with ${\tt StripPart}$
is that it prevents overlapping independent tasks that happen to be in different levels. 
A secondary drawback
is that the recommended packers~(e.g.,~\cite{reverse-fit}) do not support
multiple resources and vector packing. 
We see that in practice ${\tt StripPart}$ under-performs 
the simpler heuristics.

\cut{
To tease apart the gains from parts of {\name}, we improve 
existing algorithms by adding some ideas from {\name}. 
Fig.~\ref{fig:eval_cp_g} and Table~\ref{tab:eval_complete_alt} (in red) 
show that substantial gap remains in spite of blessing existing alternatives
with overbooking.
}


\subsection{How close is {\namesec} to Optimal?}
\label{subsec:eval_lb}
\begin{figure}[t!]
\centering
\includegraphics[width=1.5in]{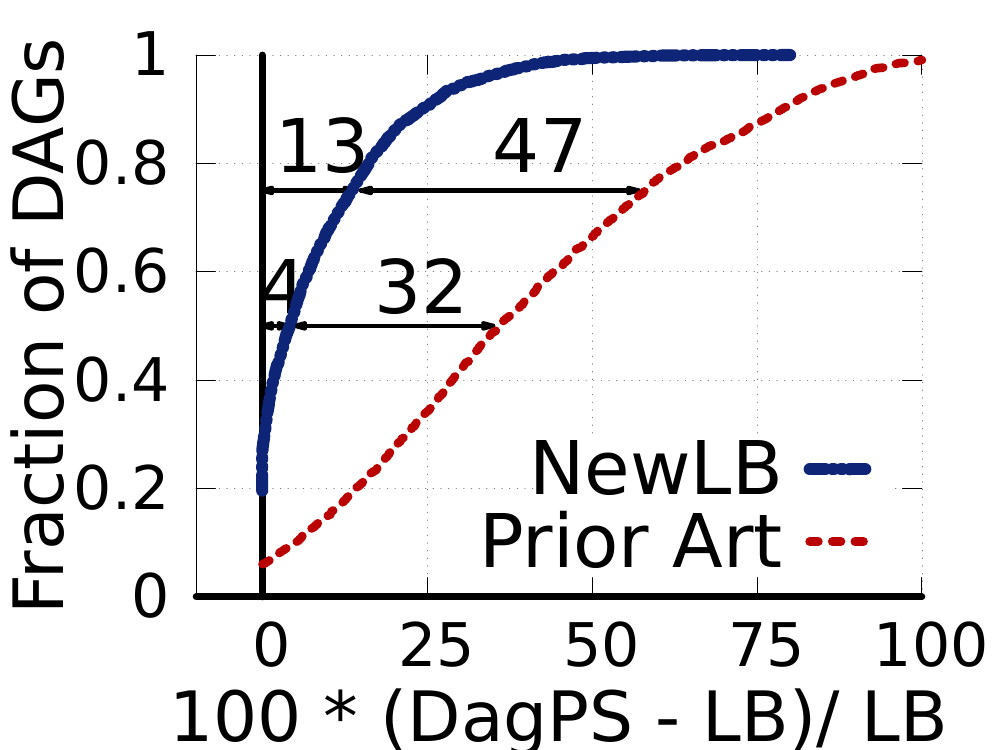}
\vspace{2mm}\ncaption{Comparing {\name} with lower bounds.\label{fig:g_vs_lb}}
\end{figure}
Figure~\ref{fig:g_vs_lb} compares {\name} with ${\tt NewLB}$ and the best previous lower bound~$\max({\tt CPLen}, {\tt TWork})$ (see~\xref{sec:betterlb}). Since the optimal schedule is no shorter than the lower bound, the figure shows that {\name} is optimal for about $40$\% of DAGs. For half~(three quarters) of the DAGs, {\name} is within $4$\%~($13$\%) of the new lower bound. A gap still remains: for the worst $10$\% of DAGs, {\name} takes $25$\% longer. Manually examining these DAGs shows that ${\tt NewLB}$ is loose for most of them. However, the figure also shows that the ${\tt NewLB}$ improves upon previous lower bounds by almost $30$\% for most of the DAGs. We conclude that while more work remains towards a good lower bound, ${\tt NewLB}$ suffices to argue that {\name} is close to optimal for most of the production DAGs.

\begin{figure}[t!]
	\centering
	\begin{subfigure}[b]{0.45\textwidth}
		\centering
		\includegraphics[width=1.5in]{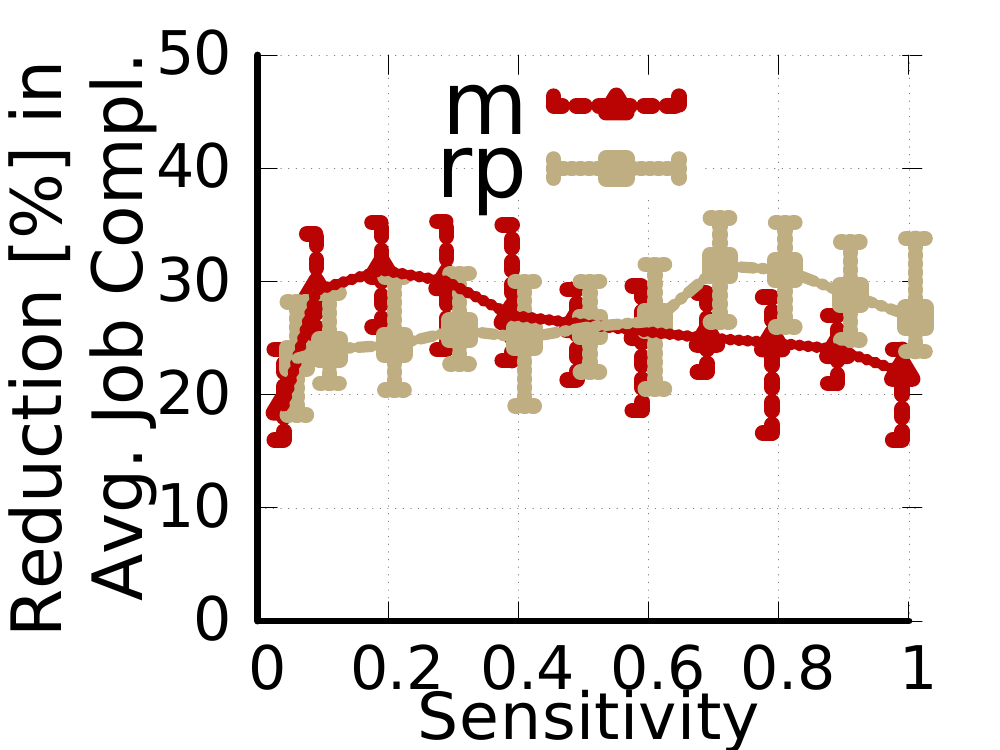}
		\vspace{-5mm}\ncaption{Job Duration\label{fig:sensitivity_job_dur}}
	\end{subfigure}
	\begin{subfigure}[b]{0.45\textwidth}
		\centering
		\includegraphics[width=1.5in]{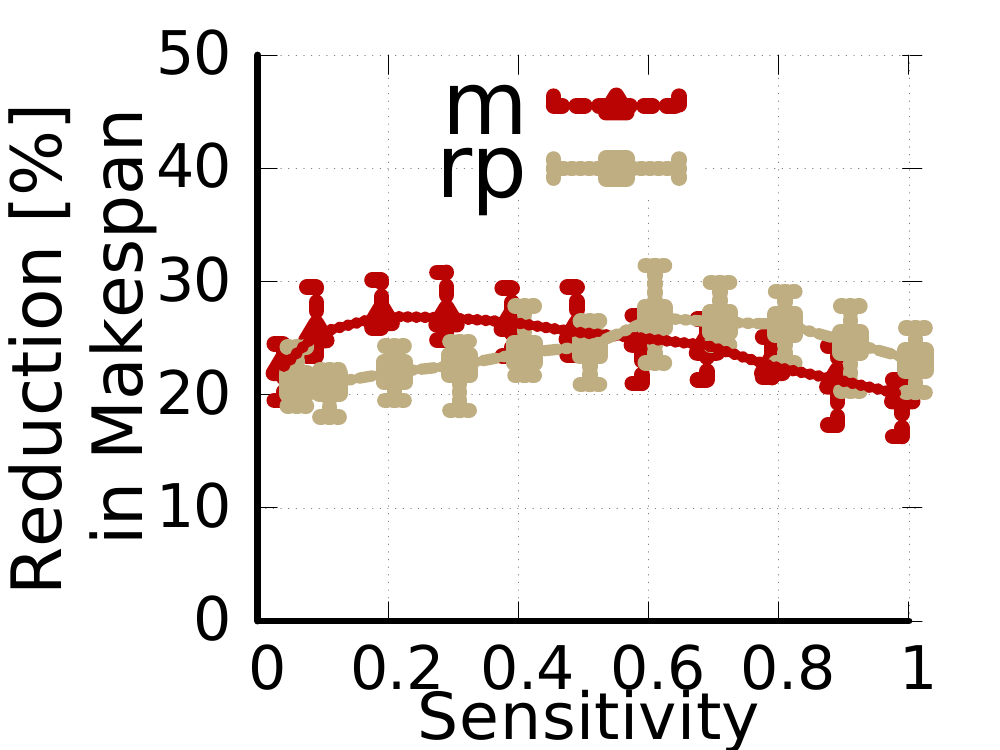}
		\vspace{-5mm}\ncaption{Makespan\label{fig:sensitivity_makespan}}
	\end{subfigure}
	\vspace{3mm}\ncaption{{\name} - sensitivity analysis.} \label{fig:eval_sensitive_analysis}
\end{figure}

\subsection{Sensitivity Analysis}
\label{subsec:eval_sa}
We evaluate {\name}'s sensitivity to parameter choices.
\cut{
Recall that in order to pick tasks for assignment, {\name} computes a score similar to Tetris. 
In addition it incorporates two new factors: a priority factor applied as a multiplication to the total score, 
and an overfit factor which change the magnitude of the alignment score. 
None of the new variables we introduced in the score computation adds additional sensitivity. In the following, we will do a parameter sweep for weighting alignment factor and remote penalty defined in Tetris, but considering the additional constraints introduced by {\name}.
}

\noindent
\textbf{Packing vs. Shortest Remaining Processing
  Time (${\tt srpt}$):} Recall that we combine packing score and ${\tt srpt}$ using a
weighted sum with $\eta$~(first box in Figure~\ref{fig:ps_sched}). 
Let $\eta$ be $m$ times the average over the two expressions that it combines.
Here, we evaluate the sensitivity of the choice of $m$.
Figure~\ref{fig:eval_sensitive_analysis} shows the reduction in average
job completion time~(on left) and makespan~(on right) for different
values of $m$.  Values of $m \in [0.1, 0.3]$ have the most gains.
Lower values lead to worse average job completion time because the
effect of ${\tt srpt}$ reduces.  On the other hand, larger
values lead to moderately worse makespan. Hence, we recommend
$m=0.2$.

\vskip .05in
\noindent
\textbf{Remote Penalty:} {\name} uses a remote penalty ${\tt rp}$ to
prefer local placement.  Our analysis shows that both job completion
time and makespan improve the most when ${\tt rp}$ is between $15\%$
and $30\%$ (Fig.~\ref{fig:eval_sensitive_analysis}). Since ${\tt rp}$ is a multiplicative penalty, lower values of ${\tt  rp}$ cause the scheduler to miss (non-local) scheduling
opportunities whereas higher ${\tt rp}$ can over-use remote resources
on the origin servers. We use ${\tt rp}=0.8$.

\vskip .05in
\noindent
\textbf{Cluster Load:} 
We vary cluster load by reducing the number of available servers without
changing the workload. 
Figure~\ref{fig:eval_cluster_load} shows the job completion times 
and makespan for a query set derived from TPC-DS.
We see that both {\name} and the alternatives offer more gains 
at higher loads. This is to be expected since the need for careful scheduling and packing
increases when resources are scarce. Gains due to {\name} increase by $+10$\%
at $2\times$ load and by $+15$\% at $6\times$ load. However, the gap between
{\name} and the alternatives remains similar across load levels.
\cut{
In this experiment, we examine the performance of {\name} under different cluster load. 
We vary the load by changing the number of servers but maintaining the same workload size. 
Fig.~\ref{fig:eval_cluster_load} shows our findings for a TPC-DS workload. As expected, the gains improve with the cluster load. {\name}'s improvement in job completion time increase by more than 10\% at $2\times$ the current load and goes over 15\% at more than $6\times$. At $4\times$ the current load, {\name} improves makespan by more than 10\%. However, gains don't go over 40\%. This can be explained due to the fact that the cluster load is too constrained by the setup. Fig.~\ref{fig:eval_cluster_load} also shows the performance of alternative schemes under various cluster load. While both {\tt Tez+Tetris} and {\tt Tez+CPSched} improve performance for higher load, the gap w.r.t. to {\name} remains at least the same as in the case of the original cluster load. The reason is that for higher load more opportunities for scheduling exists and the alternative schemes are more prone to choose bad schedules.
}

\begin{figure}[t!]
	\centering
	\begin{subfigure}[b]{0.45\textwidth}
		\centering
		\includegraphics[width=1.45in]{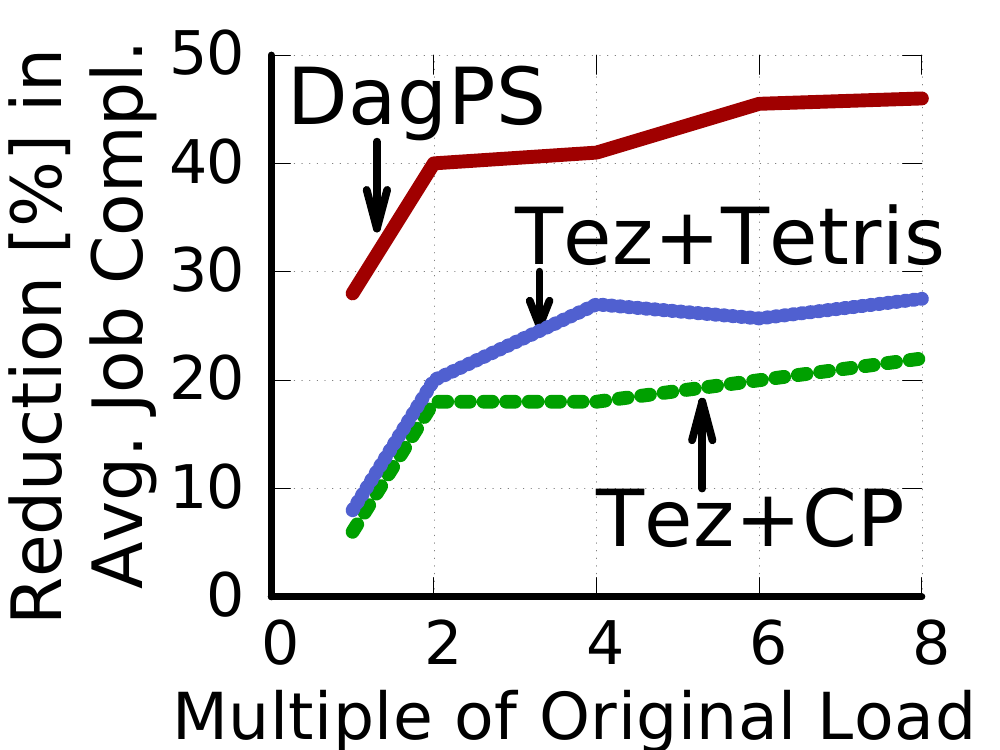}
		\vspace{-5mm}\ncaption{Job Duration\label{fig:cluster_load_job_dur}}
	\end{subfigure}
	\begin{subfigure}[b]{0.45\textwidth}
		\centering
		\includegraphics[width=1.45in]{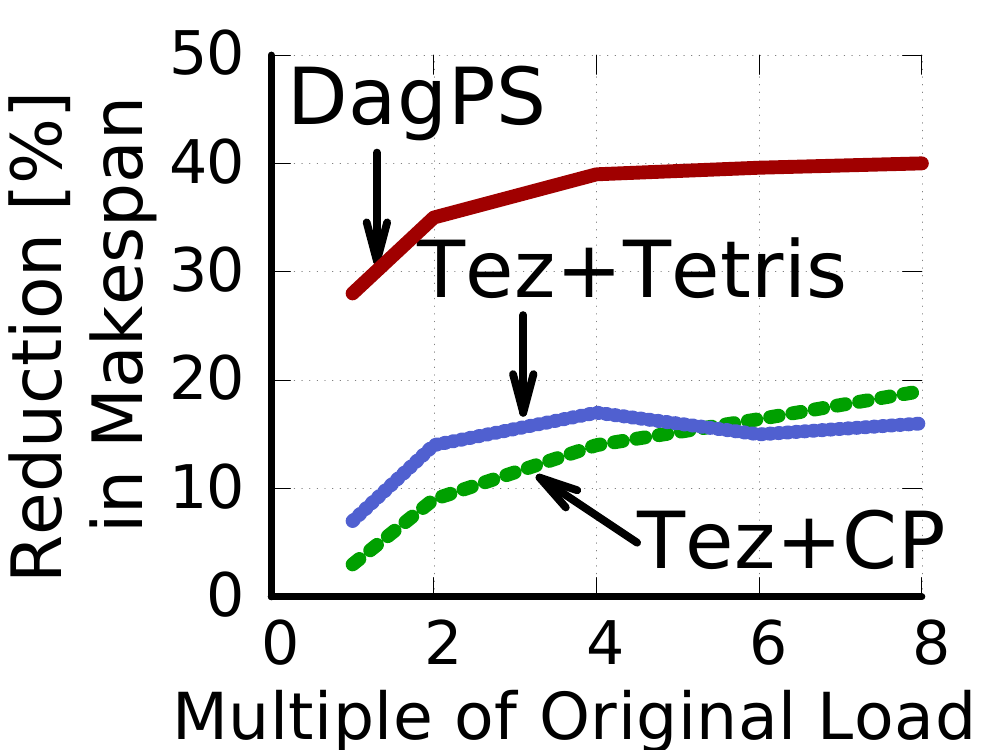}
		\vspace{-5mm}\ncaption{Makespan\label{fig:cluster_load_makespan}}
	\end{subfigure}
	\vspace{2.5mm}\ncaption{{\name}'s gains increase with cluster load.} \label{fig:eval_cluster_load}
\end{figure}

\section{Applying {\name} to other domains}
\label{subsec:other-domains}
We evaluate {\name}'s effectiveness in scheduling the DAGs that arise
in distributed compilation systems~\cite{bazel,cloudmake} and
request-response workflows for Internet services~\cite{kwiken}.

Distributed build systems speed up the compilation of large code
bases~\cite{bazel,cloudmake}. Each build is a DAG with dependencies
between the various tasks~(compilation, linking, test, code
analysis). The tasks have different runtimes and have different
resource profiles. Figure~\ref{fig:cloud_build} shows that {\name} is
$20$\% faster than ${\tt Tetris}$ and $30$\% faster than ${\tt CP}$
when scheduling the build DAGs from a production distributed build
system. Each bar shows the median gain for DAGs of a certain size and
the error bars are quartiles. The gains hold across DAG sizes/ types.

We also examine the DAGs that arise in datacenter-side
request-response workflows for Internet-services~\cite{kwiken}. For
instance, a search query translates into a workflow of dependent RPCs
at the datacenter~(e.g., spell check before index lookup, video and
image lookup in parallel).  The RPCs use different resources, have
different runtimes and often execute on the same server
pool~\cite{kwiken}.  Over several workflows from a production service,
Figure~\ref{fig:kwiken} shows that {\name} improves upon alternatives
by about $24$\%.

These early results, though preliminary, are encouraging and
demonstrate the generality of our work.  

\begin{figure}[t!]
        \begin{subfigure}[b]{0.49\textwidth}
                \centering
                \includegraphics*[width=1.5in]{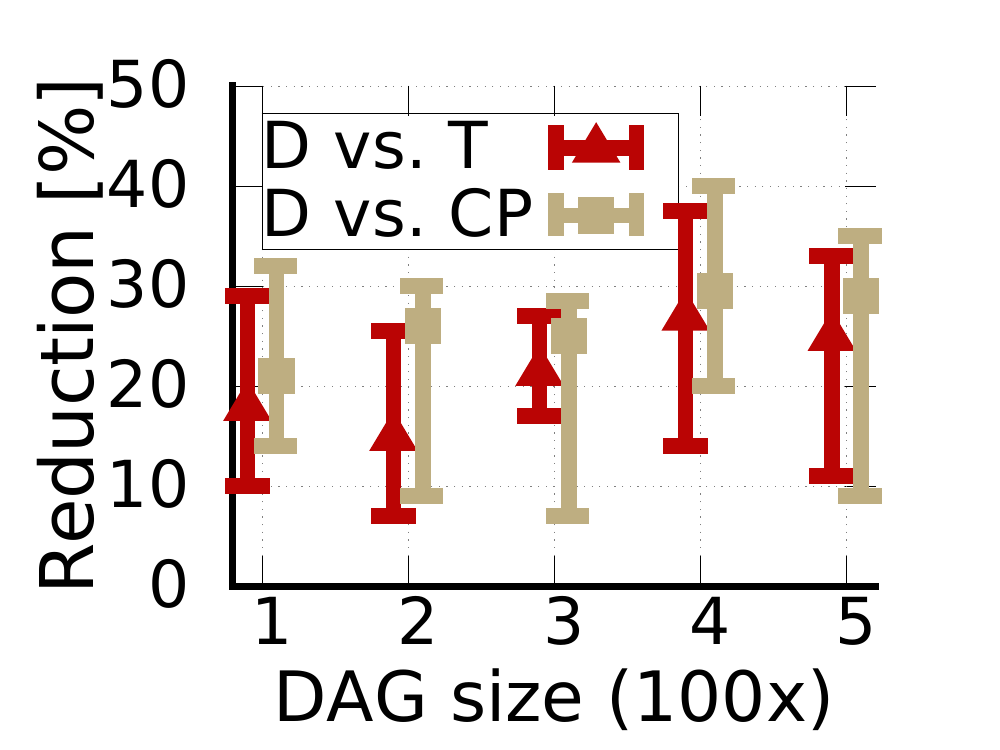}
                \vspace{-1mm}
                \ncaption{Distributed Build Systems: \\Compilation time\label{fig:cloud_build}}
        \end{subfigure}
        \begin{subfigure}[b]{0.49\textwidth}
                \centering
                \includegraphics*[width=1.5in]{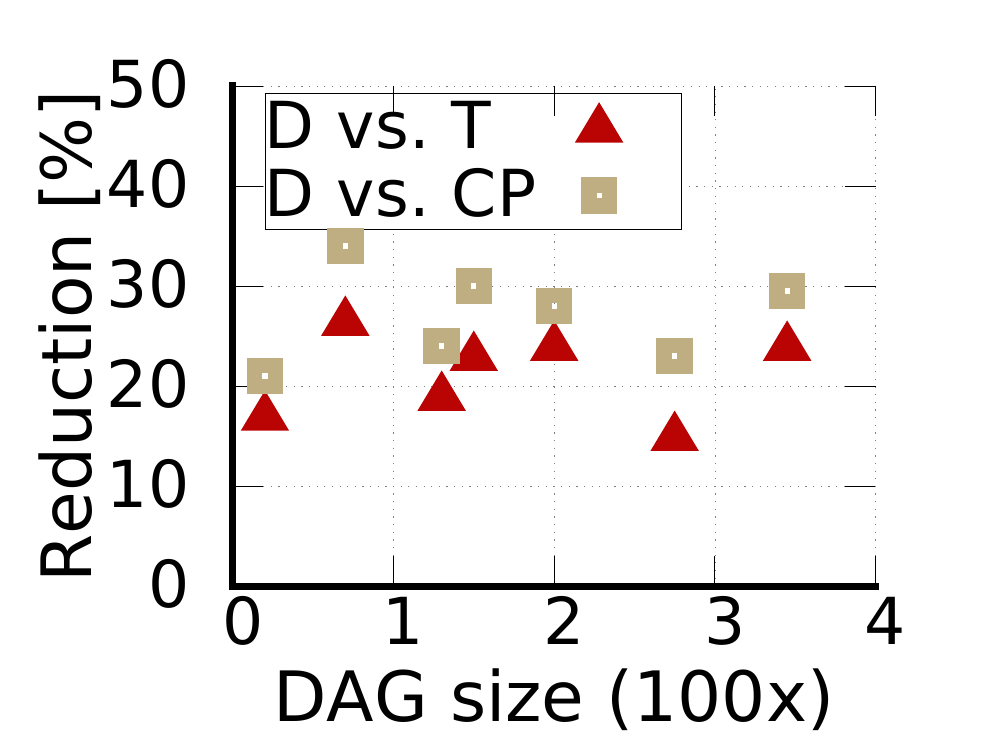}
                \vspace{-1mm}
                \ncaption{Request-response workflows: \\Query latency\label{fig:kwiken}}
        \end{subfigure}
        \ncaption{Comparing {\name}~(${\tt D}$) with Tetris~(${\tt T}$) and Critical path scheduling~(${\tt CP}$) on DAGs from two other domains.} \label{fig:eval_compare_other_domains}
\end{figure}

\section{Related Work} 
\label{sec:related}
To structure the discussion, we ask four questions:
(Q1) does a scheme consider both packing and dependencies, 
(Q2) does it make realistic assumptions,
(Q3) is it practical to implement in cluster schedulers and,
(Q4) does it consider multiple objectives such as fairness?
To the best of our knowledge, {\name} is unique in positively answering these four questions. 

\vskip .03in
\noindent
${\tt Q1: NO.}$ Substantial prior work ignores dependencies but packs tasks with varying demands for multiple resources~\cite{ChekuriK04,ImKMP15,tetris,reverse-fit,AzarCG13}. The best results are when the demand vectors are {\em small}~\cite{psv}. Other work considers dependencies but assumes homogeneous demands~\cite{graham,coffman-graham}.
A recent multi-resource packing scheme, Tetris~\cite{tetris}, succeeds on the three other questions but does not handle dependencies. 
Hence, we saw in~\xref{sec:eval} that it performs poorly when scheduling DAGs. Further, Tetris has poor worse-case performance~(up to $2d$ times off, see Figure~\ref{fig:tetrisdtimesworse}) and can be arbitrarily unfair.

\vskip .05in
\noindent
${\tt Q1: YES, Q2: NO.}$ The packing+dependencies problem has been considered at length under the keyword {\em job-shop scheduling}~\cite{lmr,Czumaj:2000,Goldberg:1997,Shmoys:1994}. Most results assume that jobs are known apriori~(i.e., the offline case). See~\cite{kwok1999static} for a survey. For the online case~(the version considered here), no algorithms with bounded competitive ratios are known~\cite{Mastrolilli:2008, MastrolilliS:2009}. Some other notable work assumes only two resources~\cite{banerjee1990approximate}, applies only for a chain but not a general DAG~\cite{tpkelly1} or assumes one cluster-wide resource pool~\cite{lepere2002approximation}. 

\vskip .03in
\noindent
${\tt Q3: NO.}$ All of the schemes listed above consider one DAG at a time and are not easily adaptable to the online case when multiple DAGs share a cluster. Work on related problems such as VM allocation~\cite{Chowdhury2009a} also considers multi-resource packing. However, cluster schedulers have to support roughly two to three orders of magnitude higher rate of allocation~(tasks are more numerous than VMs). 

\vskip .03in
\noindent
${\tt Q3: YES, Q1: NO.}$ Several notable works in cluster scheduling exist such as 
Quincy~\cite{quincy}, Omega~\cite{omega}, Borg~\cite{borg}, Kubernetes~\cite{kubernetes}
and Autopilot~\cite{Isard2007}. None of these combine multi-resource packing with DAG-awareness.
Many do neither. Job managers such as Tez~\cite{tez} and Dryad~\cite{dryad} use simple heuristics such as breadth-first scheduling which perform quite poorly in our experiments.

\vskip .03in
\noindent
${\tt Q4: NO.}$  There has been much recent work on novel fairness schemes to incorporate multiple resources~\cite{drf} and be work-conserving~\cite{hug}. Several applications arise especially in scheduling coflows~\cite{varys,aalo}. We note that these fairness schemes neither pack nor are DAG-aware. {\name} can incorporate these fairness methods as one of the multiple objectives and trades off bounded unfairness for performance.

\cut{
{\name} uniquely combines packing with dependency awareness for data-parallel workload.
Both our schedule constructor~(which places troublesome tasks first and then overlays the other tasks in a 
specific order) and the runtime~(which offers an online method that enforces
 desired schedules and packs while bounding unfairness) are novel.

 are~\cite{graham,kwok1999static,,lepere2002approximation}.
Some do not allow for tasks requiring less than a full machine~\cite{coffman-graham} or assume a cluster-wide 
resource pool rather than spread across many machines~\cite{banerjee1990approximate,lepere2002approximation}. 

\noindent{\bf Prior algorithmic work} considers several simpler special cases of the
packing+dependencies problem.  
For the offline case, 

the best known algorithms in {\em job shop} 
scheduling
achieve logarithmic approximation factors to the optimal schedule. 
 Restricting to the case without dependencies~(i.e., only packing), 
the most related results are in online vector scheduling~\cite{ChekuriK04,ImKMP15} and online vector bin
packing~\cite{AzarCG13}. The former allows for overbooking whereas the latter requires tasks to fit in the bin.
The best possible online algorithms have competitive ratios of $\log(d)$~\cite{ImKMP15} and $\Omega(d)$ for the two problems
respectively. Constant factor approximations are known for two resources when the demands are small~\cite{psv}. 
As to lower bounds on DAG scheduling time, the measures that we compare against in 
this paper are the best known. 
Our new lower bound is about $30$\% tighter.  
Moreover, some recent works~\cite{Mastrolilli:2008, MastrolilliS:2009} show 
that improved lower bounds are necessary to achieve tighter
approximation ratios. To summarize, combining packing and dependency-awareness remains an open 
problem even in simplistic settings. This motivates us to work towards a practical heuristic.

\noindent
{\bf Cluster schedulers: } YARN~\cite{yarn} offers ${\tt Capacity}$ and ${\tt Fair}$
schedulers. Several notable works in cluster scheduling exist such as 
Quincy~\cite{quincy}, Omega~\cite{omega}, Borg~\cite{borg}, Kubernetes~\cite{kubernetes}
and Autopilot~\cite{Isard2007}. None of these combine multi-resource packing with DAG-awareness.
Many do neither. Tez~\cite{tez} uses breadth-first scheduling which performs quite poorly in our experiments.

\noindent
{\bf Analytical work: } Some of the key differences are mentioned in~\xref{subsec:hits}. 
Other notable works are~\cite{graham,kwok1999static,banerjee1990approximate,lepere2002approximation}.
Some do not allow for tasks requiring less than a full machine~\cite{coffman-graham} or assume a cluster-wide 
resource pool rather than spread across many machines~\cite{banerjee1990approximate,lepere2002approximation}. 

\noindent
{\bf Multi-resource fairness} schemes such as DRF~\cite{drf} neither pack nor are DAG-aware.
{\name} also shows how to trade-off a bounded amount of dominant resource unfairness
for better performance.

\noindent
{\bf VM allocation}~\cite{Chowdhury2009a} is a different problem.
While resource packing is relevant there too, VMs last for much longer and factors such as 
fault tolerance and ensuring good connectivity matter more than finishing a dependent set of tasks
quickly.

\noindent
{\bf Packers} of both the multi-resource and vector variety have been 
described already~(see~\xref{subsec:hits}). 
In particular, Tetris~\cite{tetris} has ambivalent performance when scheduling DAGs~(\xref{subsec:problem}). 
Also, Tetris does not overbook resources and can be arbitrarily unfair.

To the best of our knowledge, 
{\name} uniquely combines packing with dependency awareness for data-parallel workload.
Both our schedule constructor~(which places troublesome tasks first and then overlays the other tasks in a 
specific order) and the runtime~(which offers an online method that enforces
 desired schedules and packs while bounding unfairness) are novel.
}

\section{Concluding Remarks}
\label{sec:conc}
DAGs are indeed a common scheduling abstraction.  However, we found
that existing algorithms make several key assumptions that do not hold
in practical settings.  Our solution, {\name} is an efficient online
solution that scales to large clusters.  We experimentally validated
that it substantially improves the scheduling of DAGs in both
synthetic and emulated production traces. The core contributions are
threefold: (1) constructing a good schedule by placing tasks
out-of-order on to a virtual resource-time space, (2) an online
heuristic that softly enforces the desired schedules and helps with
other concerns such as packing and fairness, and (3) an improved lower
bound that lets us show that our heuristics are close to optimal. Much
of these innovations use the fact that job DAGs consist of groups of
tasks~(in each stage) that have similar durations, resource needs and
dependencies. We intend to contribute our {\name} implementation to
Apache YARN/Tez projects.  As future work, we are considering applying
these DAG scheduling ideas to related domains, most notably scheduling
the coflows with dependencies that arise in geo-distributed
analytics~\cite{dje,iridium,geode,swag}.






\begin{appendices}
\iflongversion
\else
\section{Value of DAG awareness}
\label{subsec:dag_awareness}
\noindent
{\it Proof of Lemma~\ref{lem:dag_awareness}: }

\qed
\fi

\iflongversion
\else
\section{Worst-case DAG Examples}
\label{sec:worst-case}

\fi

\iflongversion
\section{Minimizing Makespan In The Absence of Task Dependencies}
\label{sec:simple_greedy}
In this section we describe the structure of optimal makespan algorithm in the multi-resource setting for leontief utilities when there are no task dependencies. We first make the problem statement formal. We are given a set $J$ of $N$ tasks, and a set of resources \{1, 2 \ldots D \}.  We assume, without loss of generality, that each resource is available to an extent of 1. Each task $j$ is characterized by following parameters: 1) Processing length (execution time) $p_j$. 2) Resource requirement vector $A_j = ( a_{1j}, a_{2j}, \ldots a_{Dj} )$, which is a  $D$-dimensional vector with each entry at most 1. Each coordinate $a_{dj}$ indicates the  fraction of resource $d \in [D]$ that the task $j$ demands. We assume that scheduler cannot allocate more than $a_{jd}$ fraction of the resource $d$ to the task $j$. We shall assume that all tasks are available for processing at the same time. The goal of a scheduler is to allocate resources to tasks at each time instant.  A feasible allocation is one where no resource is overallocated. 

\medskip

\noindent \textbf{Rate of a job:} \\\\
Let $x_{jd}$ denote the fraction of the demand ($a_{jd}$) of resource $d$ that is allocated to task $j$.   A task $j$ executes at a {\em rate} 
\begin{equation}
\label{e:rate}
x_{j} = \min_{j \in d} \{x_{jd}\}
\end{equation}

The notation $j \in d$ denotes that task $j$ demands resource $d$.  Given an allocation of resources to tasks, the equation (\ref{e:rate}) implies that the tasks execute at a rate which is equal to the minimum fraction of resource that is allocated. 
In other words, the rates of tasks are {\em leontief}. It is also worth noting that the rate of a task completely determines the amount of each resource that it gets. That is, if the rate of a job at time $t$ is $x_{j}$ then the job $j$ receives $x_{j}$ fraction of each resource it demands. 
The rate of a task is simply the total amount of work completed per unit time. We assume, by normalizing, that a task can complete only 1 unit of work per unit time, and it needs to complete $p_j$ units of work before it departs from the system.

\subsection{Optimal Algorithm}
We now state the rate allocation by an optimal algorithm for minimizing makespan. Define,
$$
P := \max_{d} \sum_{j} a_{jd} \cdot p_j.
$$

Then, an optimal algorithm allocates rates to tasks as follows:

$$
x_j := \frac{p_j}{P}
$$

We now prove that this allocation of rate is optimal. First observe that $P$ is a lower bound on the optimal makespan. (We will also assume that $P \geq \max_j p_j$, otherwise the problem is trivial.) 
Hence it remains to show that rates are feasible. 

\begin{lemma}
	Rates allocated by the optimal solution are feasible. 
\end{lemma}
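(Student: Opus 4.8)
The plan is to show that the stated allocation $x_j = p_j/P$ consumes at most a unit of each resource at every instant, which is exactly what feasibility demands. The first observation I would make is that under this allocation every task runs at a constant rate and therefore completes at the \emph{same} time: task $j$ must perform $p_j$ units of work at rate $x_j = p_j/P$, so it departs at time $p_j / x_j = P$, independent of $j$. Consequently all $N$ tasks are simultaneously active throughout $[0,P)$ and the usage of each resource is constant in time. This collapses the (a priori time-varying) feasibility check into a single per-resource inequality.

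Next I would fix a resource $d$ and compute its total consumption. Since a task running at rate $x_j$ receives an $x_j$ fraction of its demand $a_{jd}$, the amount of resource $d$ used by task $j$ is $x_j a_{jd}$. Summing over all (simultaneously active) tasks gives the total usage
\[
\sum_{j} x_j a_{jd} = \frac{1}{P} \sum_{j} p_j a_{jd}.
\]
By the definition $P = \max_{d'} \sum_{j} a_{jd'} p_j$, we have $\sum_{j} p_j a_{jd} \le P$ for every $d$, so the right-hand side is at most $1$. Hence no resource is over-allocated at any time during $[0,P)$.

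Finally I would verify the remaining legitimacy condition, namely that each allocated fraction is valid, i.e. $x_{jd} \le 1$. Taking $x_{jd} = x_j$, we get $x_j = p_j/P \le 1$ precisely because of the standing assumption $P \ge \max_{j} p_j$ (absent which a single long task dominates and the instance is trivial). Combining the per-resource capacity bounds with $x_j \le 1$ establishes full feasibility.

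The argument has essentially no hard step; the only point requiring care is recognizing that this allocation forces all tasks to finish simultaneously, which is exactly what reduces the time-varying capacity requirement to the single inequality $\sum_{j} p_j a_{jd} \le P$ guaranteed by the definition of $P$. I would close by noting that since $P$ is simultaneously a lower bound on the optimal makespan (as argued just before the lemma) and the makespan achieved by this feasible schedule, feasibility is the last piece needed to conclude optimality.
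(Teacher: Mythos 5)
Your proof is correct and takes essentially the same route as the paper: fix a resource $d$ and bound its total allocation $\sum_j a_{jd}\,p_j/P \le 1$ directly from the definition of $P$ as the maximum over resources. You are in fact slightly more careful than the paper's one-line computation (which writes equality to $1$, though only $\le 1$ holds for non-maximizing resources), and your added observations that all tasks run simultaneously on $[0,P)$ and that $x_j \le 1$ via $P \ge \max_j p_j$ make explicit what the paper leaves implicit.
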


\begin{proof}
	Fix a resource $i \in [D]$. We show $i$ is allocated to an extent of at most $1$. To see this, the total allocation of resource $i$ is given by,
	$$
		\sum_{j} a_{jd} \cdot \frac{p_j}{P} = \sum_{j} a_{jd} \cdot p_j \cdot \frac{P}{P} = 1.
	$$
\end{proof}

\begin{lemma}
	Every task completes by time $P$.
\end{lemma}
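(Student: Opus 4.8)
The plan is to observe that this lemma is essentially a one-line consequence of the rate allocation, once the previous feasibility lemma is in hand. First I would recall the two facts that make the argument go through: (i) by the previous lemma the rates $x_j = p_j/P$ are simultaneously feasible, so no resource is overallocated and every task can be run at its prescribed rate throughout; and (ii) by the normalization, a task's rate equals the amount of work it completes per unit time, and task $j$ must accumulate $p_j$ units of work before it departs.

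Next I would hold the rate of each task constant at $x_j = p_j/P$ over the whole interval. Since the allocation is static (it does not depend on $t$), no rescheduling is needed, and the work completed by task $j$ by time $t$ is exactly $x_j \cdot t$. Setting $x_j \cdot t = p_j$, the completion time of task $j$ is $p_j / x_j = p_j/(p_j/P) = P$. Hence every task finishes by time $P$.

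One small point I would check explicitly is that the prescribed rate is admissible, i.e. $x_j \le 1$ for every $j$, since a task can complete at most one unit of work per unit time. This follows from the standing assumption $P \ge \max_j p_j$, which gives $x_j = p_j/P \le 1$. Combined with the earlier observation that $P$ is a lower bound on the optimal makespan, this lemma then yields that the allocation is in fact optimal.

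The main ``obstacle'' here is really only bookkeeping: being careful that feasibility is what justifies running all tasks at their full prescribed rates simultaneously from time $0$, and that the assumption $P \ge \max_j p_j$ is what keeps each rate at most one. There is no deeper difficulty; the leontief structure together with the constant-rate allocation makes the completion time an immediate division.
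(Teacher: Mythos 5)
Your proposal is correct and follows essentially the same route as the paper: both run each task at the constant rate $x_j = p_j/P$ and conclude completion by time $p_j/(p_j/P) = P$, relying on the preceding feasibility lemma and the standing assumption $P \geq \max_j p_j$ exactly as the paper does. Your explicit check that $x_j \leq 1$ is a careful spelling-out of what the paper disposes of with its remark that the problem is trivial otherwise; no substantive difference.
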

\begin{proof}
	Since task $j$ executes at the rate of $\frac{p_j}{P}$, it completes by time $p_j \cdot \frac{P}{p_j} = P$.
\end{proof}

The above two lemmas complete the proof. Observe that for an input instance, $P$ is a constant and hence in the optimal schedule a task executes at a rate proportional to its length.

\else
\fi

\iflongversion
\section{Proof of deadlock freedom and completeness}
\label{sec:app_deadlock_free}
\noindent
{\bf Proof of lemma 2.}
This claim follows from combining these observations.
(1) {\small\sf T} is a closure over troublesome tasks.
(2) When placing the tasks in  {\small\sf T}, both forward and backward placements respect all dependencies
between tasks in {\small\sf T}.
(3) The allowed placements for subsets {\small\sf P, O, C} are such that
forward placement is allowed for a subset if and only if all ancestors of tasks
in that subset have been placed already. For example, {\small\sf O} in {\small\sf TPOC} and 
{\small\sf C} in every allowed order.
(4) By similar reasoning as above, backward placement is allowed if and only if 
all descendants of tasks in that subset have been placed already.
\eat{
It could also help to note that all ancestors of tasks in {\small\sf P} are in {\small\sf P}  and all descendants 
of a task in {\small\sf C} are in {\small\sf C}. 
}

\section{Additional pseudocode}
Some of the deferred pseudocode is in Fig.~\ref{fig:ps_sched_app}.
\begin{algorithm}[t!]
{\scriptsize
\chline\\
 	 \textbf{Func:} \textsf{PlTasksBack}: \codecomment{// only backward order, i/o same as ${\tt PlaceTasks}$}\\
 	 \textbf{Input}: $\mathcal{V}$, $\mathcal{S}$, $\mathcal{G}$; \textbf{Output}: $\mathcal{S}'$\\
 	 $\mathcal{S} \leftarrow \mbox{\tt Clone}(\mathcal{S}_{\mbox{in}})$\\
 	 finished placement set $\mathcal{F} \leftarrow \{ v \in \mathcal{G} | v \mbox{ already placed in } \mathcal{S}\}$\\
 	 ready set $\mathcal{R} \leftarrow \{ v \in \mathcal{V} - \mathcal{F} | \mathcal{C}(v, \mathcal{G}) \mbox{ already placed in } \mathcal{S}\}$\\
 	 \While{$R \neq \varnothing$}
 	 {
 	 $v' \leftarrow \mbox{task in } \mathcal{R} \mbox{ with longest runtime}$\\
 	 $t \leftarrow \min_{v \in \mathcal{C}(v, \mathcal{G})} {\tt BeginTime}(v, \mathcal{S})$\\
 	 \codecomment{// place $v'$ at the latest time $\leq t - v'.\mbox{duration}$ when its resource demands can be met}\\
 	 $\mathcal{F} \leftarrow \mathcal{F} \cup v'$\\ 
 	 $\mathcal{R} \leftarrow \mathcal{R} \cup \{v \in \mathcal{P}(v', \mathcal{G}) -\mathcal{F}  | \mathcal{C}(v, \mathcal{G}) \mbox{ placed in } \mathcal{S}\}$\\
 	 }
 	 \chline\\
 }
	\vspace{-1.5mm}\ncaption{Deferred parts of Fig.~\ref{fig:ps_sched_helpers}.\label{fig:ps_sched_app}\vspace{-.2in}}
\end{algorithm}

\section{Lower bound: example and proof}
\label{sec:app_lb}
\begin{figure}[t!]
\centering
\includegraphics[width=1.5in]{figures/lb_example_dag.png}\qquad
\includegraphics[width=1.12in]{figures/lb_example_fin.png}
\ncaption{Original DAG on left. Unmarked nodes have $1$ task each with duration $1t$ and demand $\varepsilon r$. 
The modified DAG used by our lower bound is on the right. Lower bound
improves by $40$\% from $5t$ to $6.8t$. Edge labels ${\tt o2o}, {\tt a2a}$ denote one-to-one and all-to-all 
dependencies between the tasks in the corresponding stages.\label{fig:lb_example}}
\end{figure}
Fig.~\ref{fig:lb_example} shows an example DAG and the various lower bounds. 
Note that, ${\tt CPLen}$ is $5t$ and ${\tt TWork}$ is $4.8t$.  
The longest critical path is $\{S_1, S_2\} \rightarrow S_3 \rightarrow S_6 \rightarrow S_8.$
The only stages with non-trivial work are $S_4$ at $3t$ and $S_7$ at $1.8t$. We assume the cluster capacity $C$ is $1r$.

The DAG on the right shows the modifications used by {\name}.
We can cut at $S_6$ per logic in~\xref{sec:simplify}.
We can replace $S_4$ with its $s_{\mbox{modDur}}$ of $3t$~(see Eqn.~\ref{eqn:stage_moddur}).
We can group $S_7$ and $S_8$ and replace them with their $s_{\mbox{modDur}}$ of $1.8t$.
We can also merge $S_1$ and $S_2$ but their $s_{\mbox{modDur}}$ remains $1t$ because of their 
small resource requirement.
The overall lower bound is now $6.8t$; $4t$ from ${\tt CPLen}$ of the DAG part
above the cut and $2.8t$ from the ${\tt ModCP}$ of the part below the cut.

\vskip .1in
\noindent
{\bf Proof of lemma 3.}
First, observe that the maximum of the lower bounds is also a lower bound.
Second, observe that by definition the DAG is cut into parts that have no overlap.
Hence, the lower bound of the DAG is equal to the sum of the lower bounds of the parts.
This supports Eqn.~\ref{eqn:newlb}.

Next, grouping stages with identical parent and child stages is appropriate because (a) there are no dependencies
between these tasks and (b) the definition of a stage has been a group of independent tasks that can run in parallel
and adhere to a specific dependence pattern with tasks in parent and child stages.

Using the total work to be done in a stage instead of the duration of a single task is appropriate.
This holds because either all of the work has to be done in line (when all parents and children have all-to-all edges)
or at least one stage on a path through the DAG will have to finish all of its work. 
This supports Eqns.~\ref{eqn:stage_moddur},~\ref{eqn:dag_modcp}.

Stepping back, this new lower bound was possible because of the abundance of groups of independent tasks that is common in data-parallel DAGs.
As we did not relax either dependence satisfaction or resource capacity limits, this lower bound
is much tighter than other bounds based on linear programs that relax those aspects~\cite{ec15,spaa13,icalp14}.

\eat{
\section{System details}
\label{sec:app_system}

We present some more details of our implementation including how we
estimate tasks resource demands, our effort of integrating {\name}
into open source, and how we dealt with some of the challenges
during this process.

\subsection{Estimating  Resource Demands of Tasks}
\label{ssec:res_demands}
To estimate task demands and durations, we use two methods. First,
recurring jobs are fairly common in the cluster (up to 40\% in some of
our clusters~\cite{rope}); these are jobs that execute periodically on
newly arriving data (e.g., updating metrics for a dashboard). For such
jobs, {\name} extracts statistics from prior runs. Second, {\name} monitors
the progress and resource usage of tasks at runtime. Note that tasks belonging to the 
same stage do similar computation albeit on different inputs.
Furthermore, these tasks often run in multiple waves. Hence, {\name}
uses statistics from the completed tasks as well as the progress
reports of running tasks to estimate demands for the remaining
tasks. We do note however that requiring user annotation is also
possible, and there are some promising efforts to infer task
requirements from program analysis~\cite{percolator}.

\subsection{Open Sourcing Process}
\label{ssec:os_process}
As we mentioned in the introduction, we are in the process of
committing the core {\name} ideas to the Apache YARN and Tez
frameworks. We have filed several JIRAs~\cite{jira} to track and
commit our patches. The main JIRA,
\href{https://issues.apache.org/jira/browse/YARN-2745}{YARN-2745}
provides a detailed design for the implementation and links to the corresponding sub-jiras.

Concretely, our effort has four main parts: (1) Implement new
task matching logic at the RM
(\href{https://issues.apache.org/jira/browse/YARN-2967}{YARN-2967})
that implements Pseudocode~\ref{fig:ps_online}. 
To enable its implementation in open source, we
have also redesigned the RM container allocation logic
(\href{https://issues.apache.org/jira/browse/YARN-4056}{YARN-4056}) to
enable arbitrarily many containers to be allocated in one
iteration over \textit{queues, applications} and \textit{priorities}.
We call this {\em bundling}.
(2) Extend the YARN API to encapsulate additional information in
the AM-to-RM heartbeat such as the remaining work in the job, tasks' 
resource demands and durations etc. The RM scheduler needs this
additional information to drive  the above logic:
(\href{https://issues.apache.org/jira/browse/YARN-2966}{YARN-2966},
\href{https://issues.apache.org/jira/browse/YARN-4093}{YARN-4093}).
(3) Enhance the node managers~(NMs) that run on each machine to monitor and report the actual
resource usage of the running tasks as well as the aggregate usage of the machines to the RM
(\href{https://issues.apache.org/jira/browse/HADOOP-12210}{HADOOP-12210},
\href{https://issues.apache.org/jira/browse/HADOOP-12211}{HADOOP-12211},
\href{https://issues.apache.org/jira/browse/YARN-2965 }{YARN-2965}).
(4) Implement the schedule construction logic~(Pseudocode~\ref{fig:ps_sched}) in Apache Tez.

For most of the JIRAs mentioned above, {\name}'s patches are under
review by committers from Microsoft, Cloudera and Hortonworks.

\begin{algorithm}[t!]
	{\scriptsize
		\chline\\
		\textbf{Definitions:} a \mbox{\textit{bundle}} keeps information about a set of tasks\\
		\chline\\
		\codecomment{// whenever a heartbeat arrives from node \mbox{X}}\\
		\While{canAssign on \mbox{X}}
		{
			\ForEach{\mbox{queue} $\mathcal{Q} \in {\tt avail. queues}$}
			{
				\ForEach{\mbox{app.} $\mathcal{A} \in {\tt \text{runnable apps.} \in \mathcal{Q}}$}  	
				{
					\ForEach{\mbox{prio.} $\mathcal{P} \in {\tt \mathcal{A}\text{'s prios.}}$}  	
					{
						\If{\mbox{task($\mathcal{P}$, $\mathcal{A}$, $\mathcal{Q}$)} fits on \mbox{X}}
						{
							\textbf{\textcolor{red}{
								\If{the assignment is deferred}
								{
									\codecomment{// update the bundled set of tasks }\\
									\mbox{\textit{bundle}}.add(\mbox{task($\mathcal{P}$, $\mathcal{A}$, $\mathcal{Q}$}))\\
									continue
								}
							}}
							\Else
							{
								assign \mbox{task($\mathcal{P}$, $\mathcal{A}$, $\mathcal{Q}$)} to \mbox{X}\\
								\codecomment{// break all \textbf{for} loops }\\
								break
							}
						}
					}
				}
			}
		}
		\textbf{\textcolor{red}{assign every task in the bundle in one shot;}}
	}
	\ncaption{Simplified pseudocode for the assignment logic. In red are the main logical changes due to the bundling logic.\label{fig:os_bundling}\vspace{-.2in}}
\end{algorithm}

\subsection{Implementation Challenges: Bundling}
\label{ssec:os_challenge}
In the YARN Resource Manager~(RM),  {\name} primarily changes
the scheduler. Of which there are two implementations: Capacity and Fair.
We target Capacity Scheduler which is much more widely used than FairSched. 

Architecturally, Capacity Scheduler offers hierarchical fair queueing. Each queue
has a share relative to its siblings. Applications~(jobs) arrive in each queue. Many applications
can share a queue. And, within an application, tasks can have many priorities. Recall, we enforce
our desired schedule order using priorities.

Capacity Scheduler iterates over \textit{queues}, \textit{applications} and then \textit{priorities} to
find the first task that fits within a machine. Once discovered, the loop breaks and restarts 
because the queues and applications have to be resorted. This design has two complications: (1)
it is slow, only one task is picked per iteration over the three {\em loops}  and (2) implementing new logic
is cumbersome since everything has to be implemented as a sort order over queues or applications or priorities.

We propose the notion of a \textit{bundle}.
In each loop iteration, all tasks that could be assigned to a machine are added to the bundle.
The choice of which tasks are best suited is made at the end of the loop by examining 
all tasks in the bundle.

This speeds up the scheduler since multiple tasks can be assigned in a single loop iteration.
It also allows for non-greedy choices over tasks: such as picking $t_2$ and $t_3$ together 
instead of $t_1$ in spite of $t_1$ being discovered earlier in the loop. Picking $t_1$ greedily,
as the current scheduler does, may preclude picking any other task because not enough
resources are available.
We have implemented a variety of bundling policies including one that mimics the 
current scheduler and one that implements Pseudocode~\ref{fig:ps_online}.
Fig.~\ref{fig:os_bundling} describes how the bundling logic fits onto
the existing scheduling logic. 

A crucial challenge with bundling is the need for backward compatibility.
Bundlers should also account for several aspects. 
(1) \textbf{Reservations} are currently implemented by remembering 
the number of missed scheduling opportunities for task. When starvation is detected, 
a reservation is made. To not affect reservations, a bundler will (a) allocate reserved containers as before
and (b) create reservations/ increment missed-scheduling-opportunities only for the first legitimate
task discovered per loop.
(2) \textbf{AM assignments} refer to the launching of the job managers. 
To not delay the assignment of AM containers,
if an AM container request is seen before any legitimate task, the bundler
will terminate the iteration early and assign the AM container to the machine right away.
(3) \textbf{Delay Scheduling}: similar to the case of reservations, a bundler will update the 
scheduling opportunity counts only for tasks seen in each iteration before any other legitimate task. 
(4) \textbf{Statistics and other Resource checks}: by carefully accounting for delayed action, 
the bundler maintains correctness of all existing checks 
on queue and user resource limits and node labels.


}

\eat{
\begin{figure}[t!]
	\centering
	\begin{subfigure}[b]{\textwidth}
		\centering
		\includegraphics[width=2in]{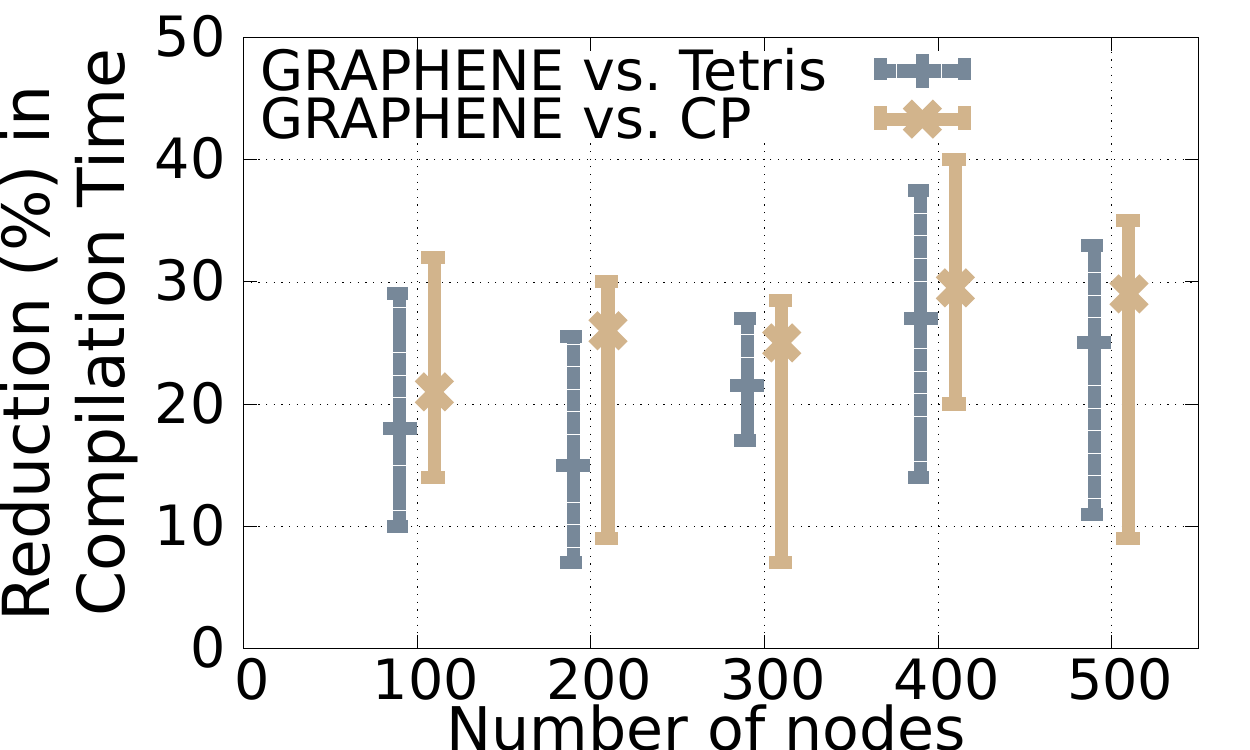}
		\ncaption{Compilation Systems\label{fig:cloud_build}}
	\end{subfigure}
    \par\bigskip
	\begin{subfigure}[b]{\textwidth}
		\centering
		\includegraphics[width=2in]{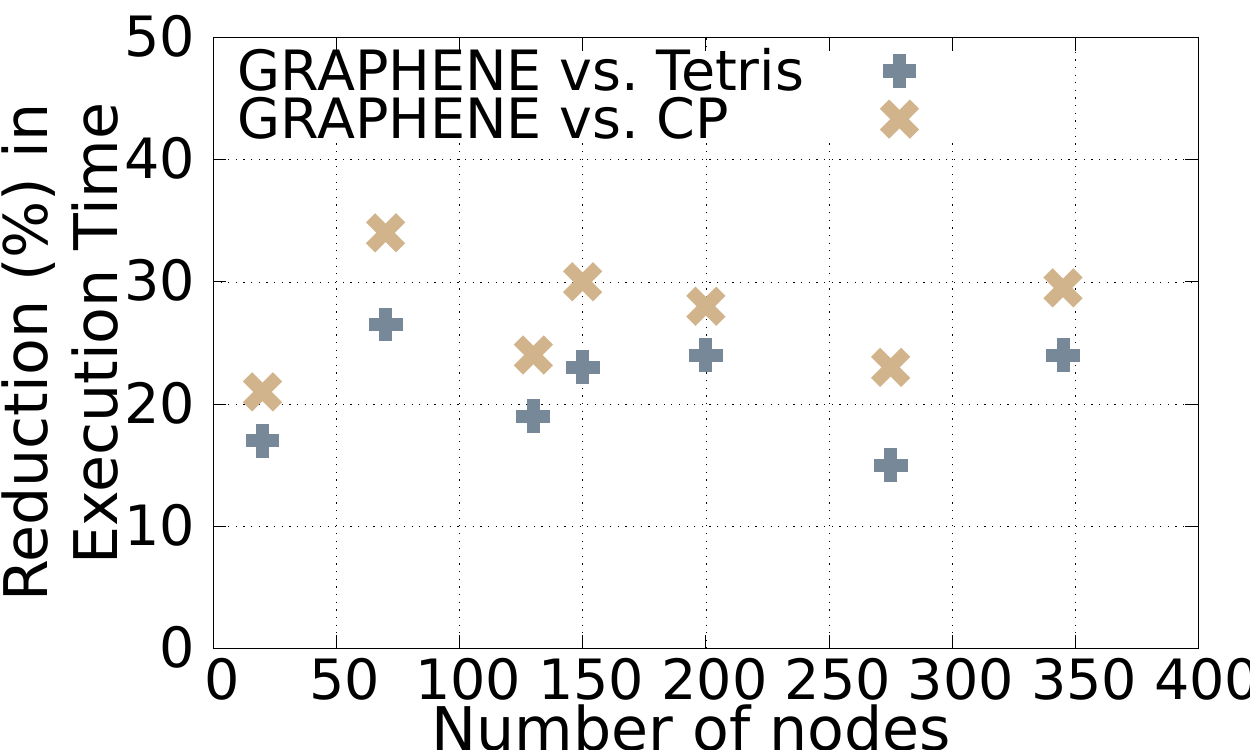}
		\ncaption{Execution Systems\label{fig:kwiken}}
	\end{subfigure}
	\vspace{2.5mm}\ncaption{{\name} benefits to other domains.} \label{fig:eval_compare_other_domains}
\end{figure}
\section{Applying {\name} to other domains}
\label{sec:otherdom}
Distributed compilation systems are being built to speed up compilation of large code bases~\cite{bazel,cloudmake}. 
Each compilation job recompiles the files that have changed and links with the pre-compiled libraries for the 
other files. The various tasks~(compilation, linking, test, code analysis) can take different amounts of time and 
require different resources. We evaluate {\name} on the DAGs that arise in compilations of two large code bases. 
Fig.~\ref{fig:cloud_build} shows that {\name} is $20$\% faster than ${\tt Tetris}$ and $30$\% faster than 
${\tt CPSched}$ at median. The error bars indicate quartiles and we see that the gains hold across job sizes. 

A second use case is that of datacenter-side request-response
workflows for Internet-services~\cite{kwiken}. A search query made to
Bing translates into a workflow of RPC calls at the datacenter with
specific dependencies between calls~(e.g., spell check before index
lookup, video and image lookup in parallel).  Different calls use
different resources and can take very different times. Often, most
call-tasks execute on the same server pool~\cite{kwiken}. Upon
analyzing several workflows from Bing, we found that {\name} improves
upon alternatives by about $24$\% (Fig.~\ref{fig:kwiken}).
}

\begin{figure}[!t]
	\includegraphics[width=2.8in]{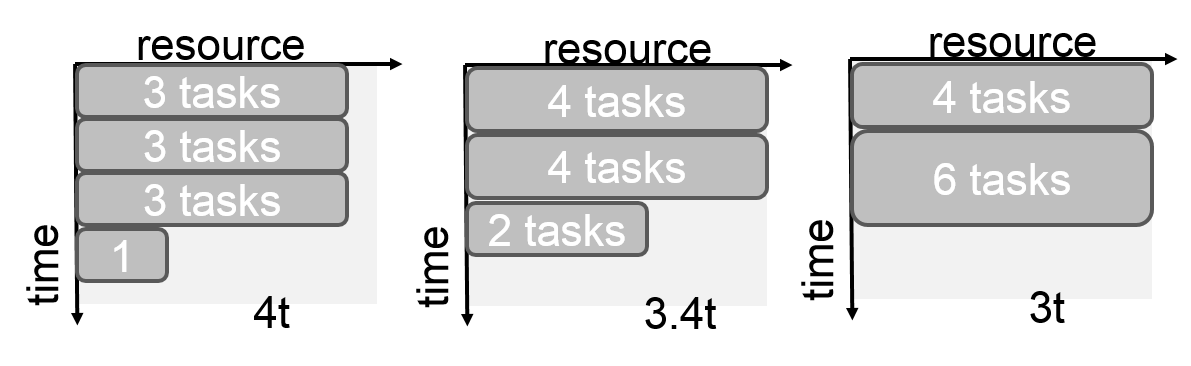}
	\vspace{-1.5mm}\ncaption{Wavekilling: $10$ tasks with duration $1t$ and demands $.3r$. Available capacity is $1$r.  Ensuring fit leads to the schedule on the left. Overbooking helps. But overbooking the last {\em wave} slightly more can reduce runtime substantially~($25$\%).\label{fig:wavekilling}}
\end{figure}
\begin{figure}[t!]
	\includegraphics[height=.581in]{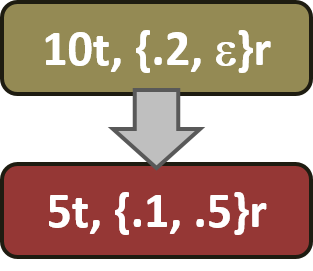}
	\includegraphics[height=.681in]{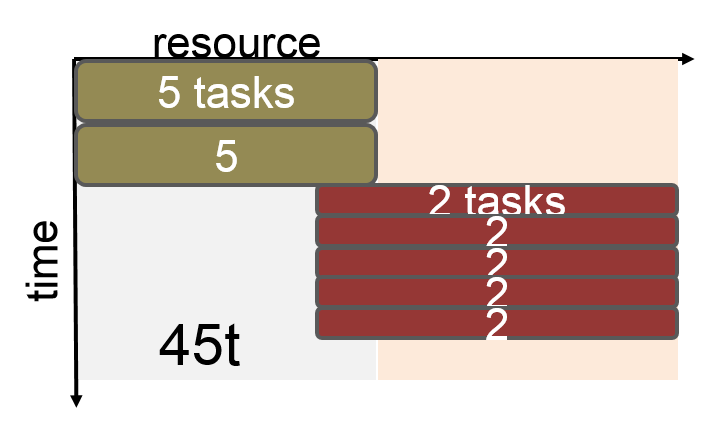}
	\includegraphics[height=.681in]{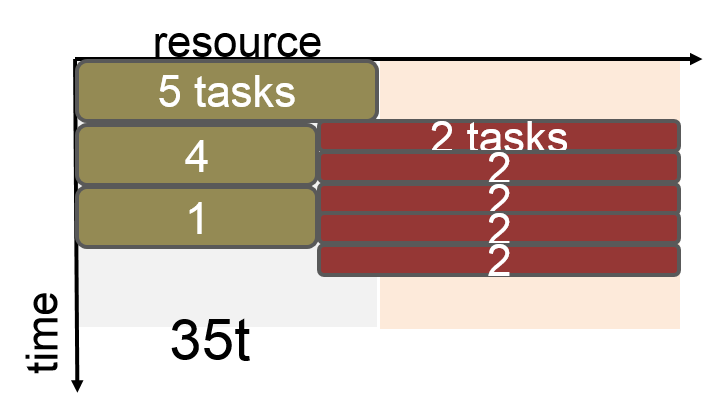}
	\vspace{-1.5mm}\caption{Pipelining: when stages have different dominant resources, more opportunities arise. As we saw above in Fig.~\ref{fig:wavekilling}, if the DAG has just the first stage~(S1), then its optimal completion time is $30t$. Scheduling the second stage~(S2) subsequently takes five waves each $5t$ long leading to the middle schedule of length $55t$. However, observe that upon completion of the first wave of S1, four tasks of S2 are runnable due to the one-to-one dependency.  These can be packed two-at-a-time beside the next wave of S1. Note here that since $2$ of S2's tasks already use up all of the available resource, there is no gain from wavekilling S1. Since S2's tasks take $5t$ each, maintaining at least a width of $4$ for S1, is ideal since all of S2's tasks that become runnable due to the previous wave can overlap with the subsequent wave. This leads to the schedule on the right; gains of $\sim 30\%$. The more the number of tasks in S1 and S2, the larger the gains due to such pipelining.\label{fig:pipelining}}
\end{figure}
\section{More examples}
This section provides two 
further examples of techniques used by {\name}. 

Wavekilling is an idea that allows for more aggressive overbooking
for tasks that are at the end of a wave. Consider the example in Figure~\ref{fig:wavekilling}.
No overbooking takes $4$t. The overbooking logic described in~\xref{sec:overbooking}
takes $3.4$t. But more aggressively overbooking the last wave concludes in $3$t.

Pipelining is an idea suited for the case when a parent-child pair of stages
has resource demands such that they can be better packed together.
Fig.~\ref{fig:pipelining} shows an example with a detailed account of why pipelining
helps.
\eat{
Further examples to motivate {\name}'s benefits are presented in Figs.~\ref{fig:exampleset1},~\ref{fig:exampleset2}.

\begin{figure}[t]
\centering
\begin{subfigure}[b]{\textwidth}
\includegraphics[height=.8in]{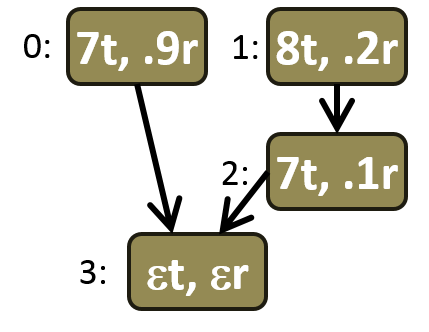}
\includegraphics[height=.9in]{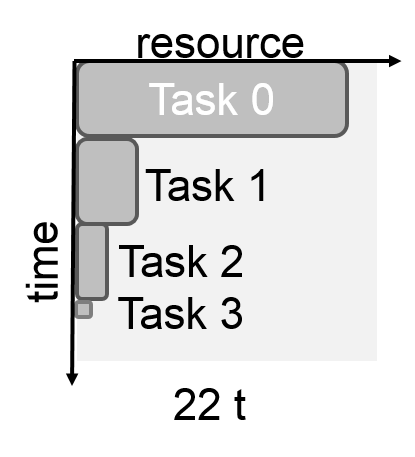}
\includegraphics[height=.9in]{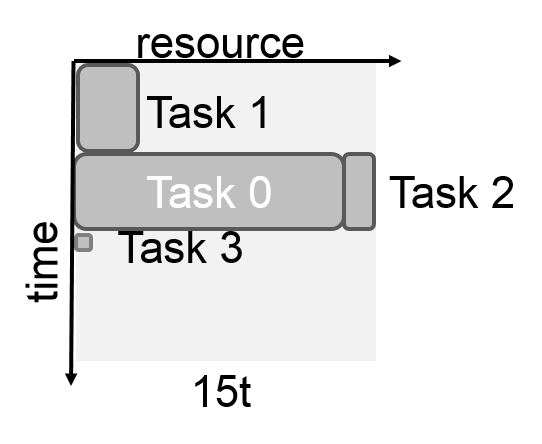}
\ncaption{DAG on the left. Label illustrates duration (suffix t) and resource demand (suffix r). 
Assume total resource available is $1r$ and $\varepsilon \rightarrow 0$. Middle and right are schedules drawn to scale. Breadth-first scheduling leads to the middle schedule, Critical Path to the right schedule; Random has an even chance of either. Here CriticalPath = Opt whereas others are $~50\%$ off. This example illustrates the need to pack.\label{fig:p_eg1}}
\end{subfigure}
\vskip .2in
\begin{subfigure}[b]{\textwidth}
\includegraphics[height=.8in]{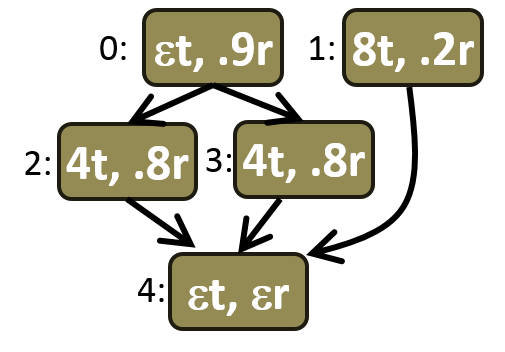}
\includegraphics[height=.8in]{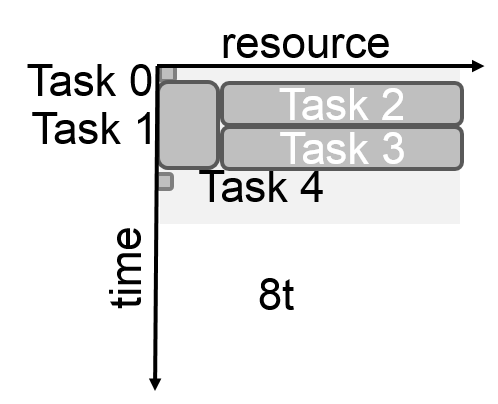}
\includegraphics[height=.8in]{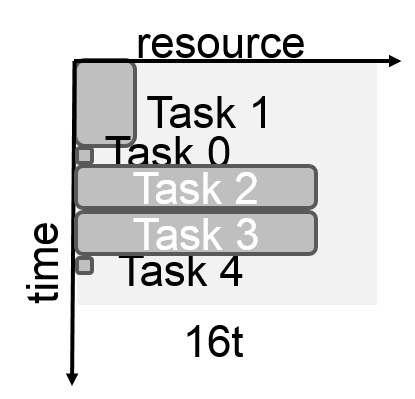}
\ncaption{As above, breadth-first scheduling leads to the middle schedule, Critical Path to the right and Random has an even chance of either. Critical path $=2\times$ Opt. But BreadthFirst = Opt. This example illustrates that different algorithms are better for different DAGs.\label{fig:p_eg2}}
\end{subfigure}
\ncaption{Examples to illustrate the challenges in scheduling DAGs.\label{fig:exampleset1}}
\end{figure}

\begin{figure}[t!]
\centering

\begin{subfigure}[b]{\textwidth}
\includegraphics[height=.7in]{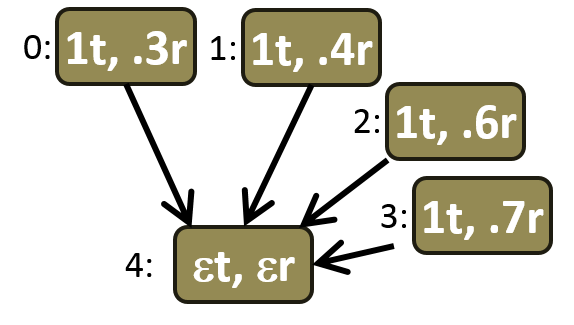}
\includegraphics[height=.7in]{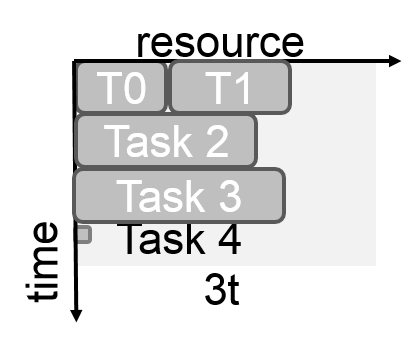}
\includegraphics[height=.7in]{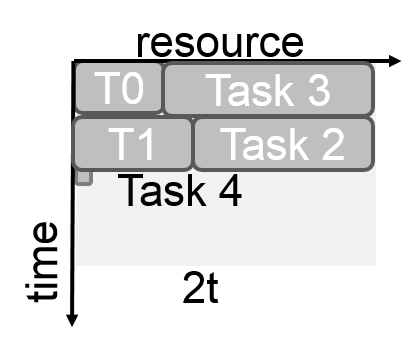}
\ncaption{Fragmentation: Schedulers that do not pack such as CPSched, Breadth-first, etc. can yield the schedule in the middle. Packers such as Tetris would yield the right schedule which is 33\% better.\label{fig:p_eg6}}
\end{subfigure}
\vskip .2in
\begin{subfigure}[b]{\textwidth}
\includegraphics[height=.7in]{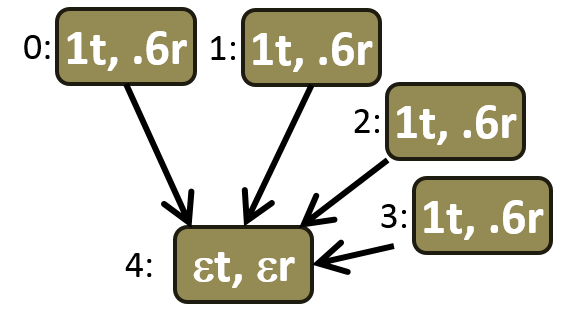}
\includegraphics[height=.7in]{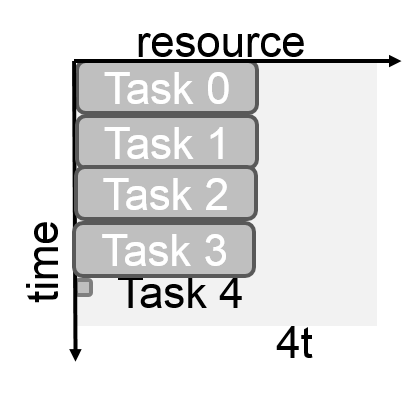}
\includegraphics[height=.7in]{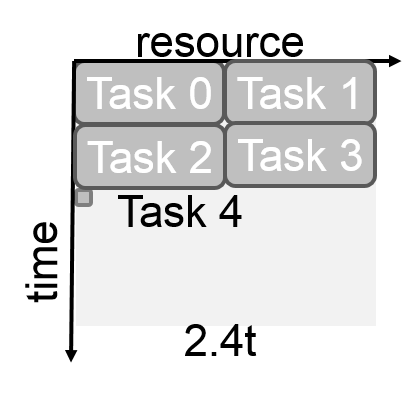}
\ncaption{For fungible resources, overbooking can help. Tasks'
get only $.5r$ each and their runtime increases to $1.2t$. But, by eliminating fragmentation, the job finishes in $2.4t$ instead of $4t$~($40$\%).\label{fig:p_eg4}}
\end{subfigure}
\vskip .2in
\begin{subfigure}[b]{\textwidth}
\includegraphics[height=.7in]{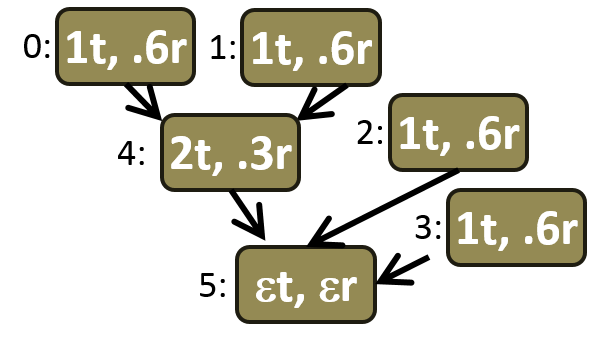}
\includegraphics[height=.7in]{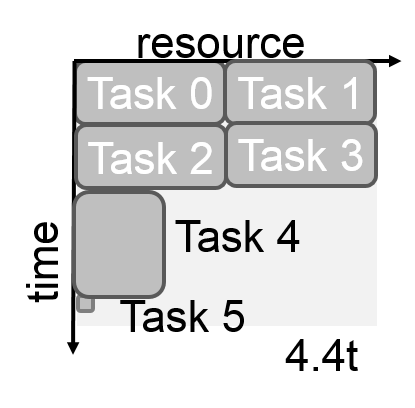}
\includegraphics[height=.7in]{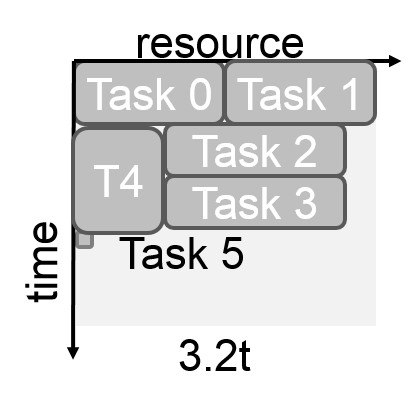}
\ncaption{Overbooking has to be done judiciously.
The middle schedule which naively overbooks has a chokepoint; no task to accompany T4. $4.4t \rightarrow 3.2t$~($\sim$25\%).\label{fig:p_eg5}}
\end{subfigure}
\caption{Overbooking helps but has to be done judiciously.\label{fig:exampleset2}}
\end{figure}
}
\eat{
\section{More results}
\begin{figure}[!]
\begin{subfigure}[b]{1\textwidth}
	\centering
	\begin{scriptsize}
		\begin{tabular}[b]{l|rrr}
			& \multicolumn{3}{c}{$75^{th}$ \%} \\
			{Workload} & {T+CP} & {T+T} & {\bf G} \\
			\hline
			{TPC-DS} & 8.9 & 16.6 & {\bf 45.7} \\
			\hline
			{TPC-H} & 7.7 & 15.0 & 48.3 \\
			\hline
			{BigBench} & 21.7 & 18.5 & 33.3 \\
			\hline
			{Skype} & 4.5 & 14.2 & 29.7 \\\hline
		\end{tabular}
	\end{scriptsize}
\end{subfigure}
\vspace{-3mm}\ncaption{Improvements in job completion time. This result is part of Fig.~\ref{fig:eval_compl_times}\label{fig:eval_compl_times_appendix}}
\end{figure}
}

\eat{
	\begin{subfigure}[b]{0.60\textwidth}
		\begin{scriptsize}
			\begin{tabular}[b]{l|rrr|rrr|rrr}
				& \multicolumn{3}{c|}{$25^{th}$ \%} & \multicolumn{3}{c|}{$50^{th}$ \%} & \multicolumn{3}{c}{$90^{th}$ \%}\\
				{Workload} & {T+CP} & {T+T} & {\bf G} & {T+CP} & {T+T} & {\bf G} & {T+CP} & {T+T} & {\bf G} \\
				\hline
				{TPC-DS} & 2.0 & 1.9 & {\bf 16.0} & 4.1 & 6.5 & {\bf 27.8} & 12.0 & 52.0 & {\bf 70.8} \\
				\hline
				{TPC-H} & 1.8 & 1.5 & 7.6 & 3.8 & 8.9 & 30.5 & 13.6 & 42.1 & 63.7  \\
				\hline
				{BigBench} & 4.1 & 2.0 & 5.6 & 6.4 & 6.2 & 25.0 & 47.2 & 42.8 & 67.0 \\
				\hline
				{MS-Prod} & -3.0 & 3.2 & 4.4 & 1.0 & 5.8 & 19.0 & 25.2 & 43.6 & 58.1\\\hline
				\multicolumn{10}{p{3.6in}}{\rule{0pt}{2ex}{\bf G} is {\name}, {\bf T+T} is Tez + Tetris and {\bf T+CP} is Tez + CP. The improvements are relative to ${\tt Tez}$. Each group of columns reads out the gaps at the percentile in the label of that group. $75^{th}$\% result is presented in Fig.~\ref{fig:eval_compl_times_appendix}.}\\
			\end{tabular}
		\end{scriptsize}
		\ncaption{Improvements in job completion time across all the workloads \label{tab:eval_lat_datasets}}
	\end{subfigure}

\section{Analytical model for a DAG of tasks}
\fixme{this section is TBE}
To connect the packing+scheduling problem with prior theoretical work, we extend the formalization in prior work~\cite{tetris} to also consider dependencies between tasks.

\vskip .05in
\noindent{\bf Notations:} We consider demands of tasks along four resources: CPU, memory, disk, and network bandwidth~(see Table~\ref{tab:task_resource_demands}). For every resource $r$, we denote ($i$) the capacity of that resource on machine $i$ as $c_i^r$, and ($ii$) the demand of task $j$ on resource $r$ as $d_j^r$.  To encode the dependencies between tasks, let $p_j$ denote the set of {\em parent}s of task $j$.

\begin{table}[t!]
\begin{small}
\begin{tabular}{p{.6in}|p{2.4in}}
{\bf Term} & {\bf Explanation}\\\hline
$d_j^{\text{CPU}}, f_j^{\text{CPU}}$ & CPU demand (i.e., cores)~($d$) and CPU cycles~($f$) \\[.05in]\hline
$d_j^{\text{mem}}$ & Peak memory used by task \\[.05in]\hline
$d_j^{\text{diskR}}, d_j^{\text{diskW}}$, $f_{ij}^{\text{diskR}}, f_{j}^{\text{diskW}}$ & Peak disk read/write bandwidth of the task~($d$), bytes to be read from machine $i$ and output~($f$)  \\[.05in]\hline
$d_j^{\text{netIn}}, d_j^{\text{netOut}}$ & Peak network bandwidth in/out of a machine that the task can use~($d$)\\ \hline\hline
\end{tabular}
\end{small}
\caption{Resource demands of task $j$. Note that the demands for network resource depend on task placement.
}
\label{tab:task_resource_demands}
\end{table}

Next, we define variables that encode the task schedule and resource allocation. Note that the allocation happens across machines (spatial) and over time (temporal). Let $Y_{ij}^t$ be an indicator variable that is $1$ if task $j$ is allocated to machine $i$ at time $t$ (time is discretized). Also, task $j$ is allocated $X_{ij}^{r,t}$ units of resource type $r$ on machine $i$, at time $t$. \camera{Let $i^*_j$ denote the machine that task $j$ is scheduled at, and $\text{start}_j$ and $\text{duration}_j$ denote the corresponding aspects of task $j$.\footnote{While we use $\text{start}_j$ and $i_j^*$ in the constraints below for convenience, note that they are unknown apriori. Hence, all their occurrences need to be replaced with equivalent terms using the indicator variables $Y_{ij}^t$. For example $X_{i_j^*j}^{r,t} = \sum_i Y_{ij}^t X_{ij}^{r,t}$ and $\mbox{start}_j = \min_{i,t}\{Y_{ij}^t > 0\}.$}} We do not model preemption here for simplicity. Note that a task may need network and other resources at multiple machines in the cluster due to remote reads.

\vskip .05in
\noindent{\bf Constraints:} 
\camera{First}, the cumulative resource usage on a machine $i$ at any given time cannot exceed its capacity:
\begin{equation}
\sum_j X_{ij}^{r,t} \leq c_i^r, \forall{i, t, r}.
\label{eqn:alloc_cap}
\end{equation}
Second, tasks need no more than their peak requirements and no resources are allocated when tasks are inactive.
\begin{eqnarray}
0 \leq X_{ij}^{r,t} \leq d_j^r, \forall{r, i, j,t}.\label{eqn:alloc_peak}\\
\forall_{i, r, t}\  X_{ij}^{r,t} = 0 \mbox{ if } t \notin [\text{start}_j, \text{start}_j+\text{duration}_j].
\end{eqnarray}
Third,  to avoid costs of preemption, schedulers may prefer to uninterruptedly allot resources until the task completes.
\begin{equation}
\sum_{t=\text{start}_j}^{\text{start}_j  + \text{duration}_j} Y_{ij}^t = 
\left\{
\begin{array}{ll}
\text{duration}_j & \mbox{machine $i = i_j^*$}\\
0 & \mbox{other machines}
\end{array}\right.
\label{eqn:indicator_runtime}
\end{equation}
Fourth, the duration of a task depends on task placement and resource allocation:
\begin{equation}
\text{duration}_j = \max \left(
\begin{array}{l}
\frac{f_j^{\text{CPU}}}{\sum_{t} X_{i_j^*j}^{\text{CPU},t}}, 
\frac{f_j^{\text{diskW}}}{\sum_{t} X_{i_j^*j}^{\text{diskW},t}}, \\
\forall i \ \ \frac{f_{ij}^{diskR}}{\sum_{t} X_{ij}^{\text{diskR},t}}, \\
\frac{\sum_{i\neq i_j^*}f_{ij}^{diskR}}{\sum_{t} X_{i_j^*j}^{\text{netIn},t}},\\
\forall i \neq i_j^*, \ \  \frac{f_{ij}^{diskR}}{\sum_{t} X_{ij}^{\text{netOut},t}}\\
\end{array}
\right)
\label{eqn:runtime}
\end{equation}
In each term, the numerator(f) is the total resource requirement (e.g., CPU cycles or bytes), while the denominator is the allocated resource rate (e.g., cores or bandwidth). 
From top to bottom, the terms correspond to CPU cycles, writing output to local-disk, reading from (multiple) disks containing the input, network bandwidth into the machine that runs the task and network bandwidth out of other machines that have task input.
Note that cores and memory are only allocated at the machine that the task is placed, but the disk and network bandwidths are needed at every machine that contains task input. We assumed here for simplicity that task output just goes to local disk and that the bandwidths are provisioned at each time so as to not bottleneck transfers, i.e., $$\forall{t, j, i \neq i_j^*}\ \ X_{ij}^{\text{netOut}, t} \geq X_{ij}^{\text{diskR},t} \text{ and } X_{i_j^*j}^{\text{netIn}, t} \geq \sum_{i\neq i^*}X_{ij}^{netOut,t}.$$ We also assumed that the task is allocated its peak memory size, since unlike the above resources, tasks' runtime can be arbitrarily worse (due to trashing) if it were allocated less than its peak required, i.e., $  \forall t \ \ X_{i_j^*j}^{\text{mem},t} = d_j^{\text{mem}}$. 

Finally, we have to encode dependencies between tasks. That is, a task cannot begin until all its parent tasks have finished.
\begin{equation}
\mbox{start}_j \geq \max_{k \in p_j} \left(\mbox{start}_k + \mbox{duration}_k\right).
\end{equation}

\begin{table}[t!]
\begin{small}
\begin{tabular}{p{.6in}|p{2.4in}}
{\bf Term} & {\bf Explanation}\\\hline
$c_i^{\text{CPU}}, c_i^{\text{mem}}$ & Number of cores, Memory size\\
$c_i^{\text{diskR}}, c_i^{\text{diskW}}$ & Disk read and write bandwidth\\
$c_i^{\text{netIn}}, c_i^{\text{netOut}}$ & Network bandwidth in/out of machine\\
\end{tabular}
\end{small}
\caption{Resources measured at machine $i$. Note that the availability of these resources changes as tasks are scheduled on the machine.\label{tab:machine_rsrc}}
\end{table}
\vskip .05in
\noindent{\bf Objective function:} 
\camera{
Our setup lends itself to various objective functions of interest to distributed clusters including minimizing {\em makespan}~\footnote{Makespan = $\max_{\mbox{job}\ J} \max_{\mbox{task}\ j \in J} \max_{\mbox{time}\ t} (Y_{ij}^t>0)$} and {\em job completion time}~\footnote{Job $J$'s finish time is $\max_{\mbox{task } j \in \mbox{ job } J} \max_{\mbox{time }t} (Y_{ij}^t >0)$}. Minimizing makespan is equivalent to maximizing packing efficiency. 
Further, we can pose fairness as an additional constraint to be meet at all times.\footnote{Dominant resource share of $J$ at time $t$ $=\max_{\mbox{resource }r} \frac{\sum_{i, j\in J} X_{ij}^{r,t}}{\sum_i c_i^r}$.} 
}

\vskip .05in
\noindent{\bf Takeaways: }
{\bf Why dependencies make this so much harder?}
The above analytical formulation is essentially a hardness result.
Regardless of the objective, several of the constraints are non-linear.
Resource malleability~(eqn.\ref{eqn:alloc_peak}), task placement~(eqn.\ref{eqn:runtime}) and how task duration relates to the resources allocated at multiple machines~(eqn.\ref{eqn:runtime}) are some of the complicating factors. Fast solvers are only known for a few special cases with non-linear constraints~(e.g., the quadratic assignment problem).
Finding the optimal allocation, is therefore computationally expensive.
Note that these non-linearities remain {\em inspite} of the several simplifications used throughout (e.g.,  no task preemption). 
In fact, scheduling theory shows that even when placement considerations are eliminated, the problem of packing multi-dimensional balls to minimal number of bins is APX-Hard~\cite{apx}. 
Worse, re-solving the problem whenever new jobs arrive requires online scheduling.
For the online case, recent work also shows that no polynomial time approximation scheme is possible with competitive ratio of the number of dimensions unless NP=ZPP~\cite{onlineVectorPacking}.
}

\cut{
Given this computational complexity, unsurprisingly, cluster schedulers deployed in practice do not attempt to pack tasks. All known research proposals rely on heuristics.  We discuss them in further detail later, but note here that we are unaware of any that considers multiple resources and simultaneously considers improving makespan, average job completion time and fairness considerations. Next, we develop heuristics for packing efficiency~(\xref{subsec:packing}), job completion time~(\xref{subsec:completion}) and incorporate fairness~(\xref{subsec:fairness}).
}

\cut{
\section{Other examples}
Figs.~\ref{fig:exampleset1}, ~\ref{fig:p_eg1}, ~\ref{fig:p_eg2}, ~\ref{fig:p_eg3}.

Since Tetris ignores dependencies, it does not consistently beat the alternatives when scheduling DAGs. Tetris leads to the schedule in the middle for both the DAGs in Fig.~\ref{fig:p_eg1} and~\ref{fig:p_eg2}. Because it packs, Tetris can improve upon CPSched. But because it does not pack along the time dimension and ignores dependencies, it can do worse as well.

The logic repeats If task $0$ runs first, we get the schedule in the middle in Fig.~\ref{fig:p_eg1}. This schedule is $1.5\times$ the optimal schedule shown on the right.

Existing DAG schedulers determine the order in which nodes are to be
scheduled, typically, by performing a topological sort
of the DAG. Furthermore, an optimization commonly used in practice by
DAG schedulers is to consider a downstream node runnable after a
threshold fraction of its upstream parent nodes (or tasks) have
finished execution. This allows overlapping computation and
communication and mitigates the effect of stragglers. Therefore at any given time during the job's execution DAG
scheduling can be interpreted as a choice amongst the tasks of the
currently runnable set of nodes.

The most commonly implemented criteria to choose amongst the runnable
tasks rely on exogenous factors such as fairness and disk locality. For example,
whenever a computational {\em slot} (i.e., CPU core/RAM) becomes
available on a particular machine, DAG schedulers will assign that
slot to one of the runnable tasks whose input is present on that
machine~\cite{hadoop,dryad}. From the perspective of the DAG
however, such criteria appears approximately {\em random} since all
runnable tasks have a similar chance of being picked next.

Such random choice can lead to poor performance since it ignores the
structure of the DAG. Consider the DAG in Fig.~\ref{fig:p_eg1}. At time=0, tasks $0$ and $1$ are runnable. The label of the node
denotes the task duration with suffix $t$ and tasks' demands for a resource with suffix $r$. Assume that the  resource available to this DAG is $1r$. Today's schedulers will launch whichever of the tasks finds a suitable placement first. If task $0$ runs first, we get the schedule in the middle in Fig.~\ref{fig:p_eg1}. This schedule is $1.5\times$ the optimal schedule shown on the right.

Critical Path scheduling (CPSched) picks tasks along the critical path (CP). The CP for a task is the longest path from the task to the job output. For the DAG in Fig.~\ref{fig:p_eg1}, CPSched is indeed optimal~(Task1 has the higher CP and is scheduled first). However, CPSched can be arbitrarily far from optimal and does not even beat Random consistently.

Consider the DAG in Fig.~\ref{fig:p_eg2} where CPSched is $2\times$ optimal. This is because CPSched ignores resource demands and does not explicitly pack tasks; it runs Task1 first leading to the schedule on the right. Instead, scheduling Task0 first, even though it has a smaller CP length, allows Opt to pack Tasks2 and 3 alongside Task1. Note that BreadthFirst scheduling which schedules tasks in a topologically sorted order indeed equals Opt in this case. Random scheduling has a 50\% likelihood of beating CPSched by $2\times$. Note however that a simple tweak to the DAG, e.g., taking its mirror image, would make BreadthFirst = CPSched = $2\times$ Opt.

A recent proposal, Tetris~\cite{tetris}, explicitly packs multiple resources. The key relevant idea is to greedily pick, from among the runnable tasks, the task best aligned with the available resources, i.e., have high dot product between task demand vector and the available resource vector. Since Tetris ignores dependencies, it does not consistently beat the alternatives when scheduling DAGs. Tetris leads to the schedule in the middle for both the DAGs in Fig.~\ref{fig:p_eg1} and~\ref{fig:p_eg2}. Because it packs, Tetris can improve upon CPSched. But because it does not pack along the time dimension and ignores dependencies, it can do worse as well.

To understand how poor CPSched could be, consider the DAG in Fig.~\ref{fig:p_eg3}. The DAG has $2n$ tasks, $n$ of which are {\em long-thin} tasks with duration $\sim Tt$ but small demands $\frac{1}{n}r$ and $n-1$ are {\em short-fat} tasks~(i.e., duration = $\epsilon t$ and demand = $(1-\epsilon) r$). The DAG is constructed such that the CP length monotonically decreases from left to right; the leftmost task has CPlength $T+\epsilon$, the next path has tasks with CPlength $T$ and $T-\epsilon$ and so on. CPSched takes $(n*T) t$ to schedule this DAG. Observe however that all the long-thin tasks pack together, their total demand is $\leq 1r$ and scheduling all the short-fat tasks first allows this to happen. Hence, Opt takes $T t$ and CPSched is $n \times$ worse.  We state, without proof, that it is easy to construct equally bad DAGs for every scheduling algorithm mentioned thus far.

Why do the alternatives fail? Compare the optimal schedules in Fig.~\ref{fig:exampleset1} with the sub-optimal ones. All of them indicate {\em chokepoints} which we define as periods in a schedule where a lot of resources are wasted because too few tasks are runnable. In the case of Fig.~\ref{fig:p_eg1}, Task 2 runs by itself whereas it could have run alongside T0 if only T1 were scheduled first. In the case of Fig.~\ref{fig:p_eg2}, T1, T2, T3 run by themselves and in Fig.~\ref{fig:p_eg3} each of the long-thin tasks run by themselves. We now shift focus to {\em fragmentation}.

\subsection{Fragmentation and Overbooking}
Consider the DAG in Fig.~\ref{fig:p_eg6}; algorithms that do not explicitly pack lead to the schedule in the middle. Packers such as Tetris~\cite{tetris} would result in the schedule on the right which is $33\%$ shorter. Formally, we define {\em fragmentation} as the wastage of resources, even when other tasks are schedulable, due to poor choice of which tasks run simultaneously.

So far, we have been considering a single resource and ensuring {\em fit}, i.e., allocating no more than total capacity. Consider the DAG in Fig.~\ref{fig:p_eg4}; tasks require 60\% of the resource and so can only run one at a time.

A natural question to ask is: if a task is allocated less than its peak demand, how would its runtime change? For resources such as memory, overbooking can have significant performance impact; if the task's memory gets swapped to disk, its runtime would increase substantially.  Other resources such as network bandwidth are in general more {\em fungible}, i.e., overbooking results in a (roughly) proportional increase in runtime; e.g., offering 50\% fewer network bandwidth would increase the time to finish a transfer by 2$\times$.

Complementary to good packing, we observe that overbooking substantially reduces the effect of fragmentation. Consider the right schedule in Fig.~\ref{fig:p_eg4}; each task takes longer ~($1.2t$) due to overbooking but the DAG speeds up by $40\%$.

Overbooking however has to be done with care. We defer some practical concerns (errors in estimating demand, overbooking a lot can cause thrashing or incast, etc.) for now. But more fundamentally, it is not clear when to overbook and when not to. Consider the middle schedule in Fig.~\ref{fig:p_eg5} where naive overbooking leads to a worse schedule due to a {\em chokepoint}~(T4 runs by itself). Rather, it is better to overbook partially and pack others alongside T4 as shown in the schedule on the right.

We use {\em chokepoints} and {\em fragmentation} as rhetorical means to show the effect of poor scheduling. However, they are two sides of the same coin. Eliminating one often leads to the other perhaps elsewhere in the schedule. Ultimately a good schedule minimizes the amount of wasted resources.

\subsection{Old text in ndesign.tex}
It is apparent that constructing a good schedule for a DAG requires a sophisticated algorithm that respects dependencies and packs tasks.  Also, the system has to cope with online effects of arrivals, failures and hundreds or more of jobs running concurrently on the cluster.  Hence, {\name} makes the architectural choice to decouple dealing with these two aspects~(Fig.~\ref{fig:g_overview}). At job compilation, or soon thereafter, {\name} constructs a schedule per DAG. The schedule is handed off to the runtime component which uses these schedules as a soft-preference, coordinates across DAGs and handles various online aspects.

Here, we offer a quick overview of the core ideas in {\name}. Most details are deferred to~\xref{sec:design}.

The input to schedule construction is a DAG of tasks with resource demands and dependencies and the number of machines available to run that DAG. The goal is to minimize the completion time of the DAG.

One departure from prior work is that {\name} constructs a complete schedule of the DAG ``on paper". Rather than offer a greedy heuristic to select only among the currently runnable tasks -- such as in most cluster schedulers -- \name{} looks at the whole DAG and ``binds" tasks in a specific order. A task is bound to a particular machine and time as long as the assignment respects resource capacity and dependency constraints.

Recall {\em chokepoints} and {\em fragmentation} which happen because too few tasks are runnable or the runnable tasks do not fit well together. {\name} identifies the troublesome tasks -- the ones that are more likely to lead to chokepoints or fragmentation -- and binds them first. Doing so has two advantages. First, the troublesome tasks are bound onto a relatively sparsely populated \emph{machine $\times$ time} allocation space and the resulting ``green field" lets them be compactly arranged. For example, tasks requiring a lot of resource may otherwise be harder to place. Second, the remaining tasks whose binding has been delayed are bound ``on top" of the troublesome tasks allowing them to use up any ``resource holes". For example, if long running tasks are bound first, the resources that are left unusable beside these tasks can be used by the tasks that are bound later.

How to identify the troublesome tasks in order to bind them early? The benefits of lookahead construction crucially depend on doing this well. However, there is no one answer for all DAGs. Long-running tasks may be troublesome as in the case of Fig.~\ref{fig:p_eg3} where each leads to a chokepoint with substantial resource wastage. Tasks with large or odd shaped demands may be troublesome as in Fig.~\ref{fig:p_eg6} where they cause fragmentation. {\name} offers a strategic way to search for troublesome tasks, the details are in~\xref{sec:searchspace}.

In a bit more detail, {\name} computes a \emph{meat} subset $M$ that consists of the troublesome tasks and any other tasks that may lie on a directed path between troublesome tasks~(such as $t_2$ above).
Subset $P$ consists of all tasks that are ancestors of tasks in meat. Subset $C$ consists of descendants of meat, and subset $O$ contains all the other tasks. See Fig.~\ref{fig:ill}. By construction, there are no edges between $M$ and $O$, no edges from $O$ to $P$, and no edges from $C$ to $O$. The meat tasks, $M$, are bound first and can be done safely since no in-between tasks are missed. Two different orders are possible among the rest -- $M, C, O, P$ or $M, P, O, C$ -- both are deadlock-free. For example, when binding tasks in subset $C$ in $MCOP$, we can be sure that no tasks lie in-between the tasks in $C$ and the tasks that have already been bound~($M$).  However, no other order is deadlock-free.

The order in which subsets are bound controls which tasks can overlap with other tasks. Hence, it affects the quality of the schedule. Within a subset, the order in which the tasks are bound also has a similar impact. These two choices also intertwine in a specific way. {\name} chooses both inter-subset order and intra-subset order of binding tasks so as to yield a compact schedule, the details are in~\xref{sec:greedypacking}.

Given a partial schedule and a subset of tasks, how best to pack them atop the schedule?
Two subtle concerns exist regarding such intra-subset order.
First, a constraint. When binding the tasks in subset $C$, a task cannot be bound unless all its ancestors in $C$ have already been bound. Otherwise, the in-between task problem recurs. That is, the ancestor may not find a place to bind to without violating dependencies. For subset $P$, the constraint goes the other way, a task can be bound only after all its descendants have been bound. For, subset $O$ the constraint depends on the inter-subset order; it mimics the requirement of $C$ in $MPOC$ and that of $P$ in $MCOP$.  Second, two choices. Among the tasks that satisfy the previous constraint, which to pick and where to bind it? Here,

{\name} employs judicious overbooking both during schedule construction and online at runtime.  Both approaches are heuristic. The runtime approach has a cost-benefit flavor -- assess whether overbooking a specific task at a specific location is beneficial overall. Our approach during construction is different because ``on paper'', the tasks can be bound both in the past and the future. The details are in~\xref{sec:overbooking}.

Recall from Fig.~\ref{fig:p_eg4} that overbooking can substantially reduce the effect of fragmentation. If task runtime gracefully increases when less resources are offered, i.e., no worse than proportionally, then  task throughput only improves with overbooking. In Fig~\ref{fig:p_eg4}, throughput increases from $1$ task/$t$ to $2$ tasks/$1.2t$. But, if free resources exist elsewhere or if the task(s) could pack well later, then overbooking hurts because task runtimes increase needlessly and should be avoided altogether. See Fig.~\ref{fig:p_eg5} for example. The challenge is to distinguish between these two cases.

A special case of overbooking which we call {\em wavekilling} is worth highlighting. Consider Fig.~\ref{fig:wavekilling} where the DAG is a node with $10$ tasks, duration $1t$ and demands $.3r$. As before, $1r$ capacity is available for DAG. Fit results in the schedule on the left. Overbooking results in the schedule in the middle. But consider the schedule on the right, where the last {\em wave} is overbooked even more to avoid wasting resources. We find this to be quite useful in practice. Many DAGs have sizable {\em wave widths} and finish within a few waves. If the last wave has only a few tasks, it wastes a lot of resources and time, and eliminating that will speed up the job.

{\name} conservatively assumes a concave increase in runtime to prevent excessive overbooking, i.e., when deciding whether to overbook, {\name} assumes that task runtime will increase super-linearly relative to the extent of overbooking. The specific parameters can be adjusted by the operators as needed.
}

\section{Judicious overbooking}
\label{sec:overbooking}

The key tussle in overbooking is as follows.
Overbooking improves throughput. For example, suppose tasks with duration $t$ require $0.6r$.
Running two tasks instead of one improves
throughput by $60\%$ from $1$ task/$t$ to $2$ tasks/$1.2t$.
Here, we assume task runtime increases linearly with overbooking amount.
Note that the potential gains from overbooking depend on the amount of idle resources that are
reclaimed by overbooking~($.4r$ in the above case).
Conversely, overbooking is counter productive if resources will become free at some other machine soon.
Suppose another machine can run the second task at time $+\varepsilon$. Without overbooking,
tasks will finish at $\{t, t+\varepsilon\}$ and with overbooking both finish at $1.2t$. Even in simple settings, optimal overbooking is NP-hard~\cite{icalp14}.

{\name} offers a heuristic for overbooking.
First, it uses micro-benchmarks to determine how  task runtimes will be delayed when each resource is overbooked.
These functions are concave and vary across resource types.
Next, per potential task to overbook, {\name} runs a what-if analysis to decide between overbooking and
waiting.
We compute the expected completion times of all affected tasks after overbooking.
Note that resource overbooking delays these tasks
but the extent of delay can vary.
The ${\tt benefit}$ of overbooking is how much earlier would the new task finish with overbooking versus having to wait for next-free-resource.
The ${\tt cost}$ is the increase in runtime of all the other tasks due to overbooking.
Thus, overbooking score equals ${\tt benefit} - {\tt cost}$.

\else
\fi
\end{appendices}
\bibliographystyle{acm}
\bibliography{bibs/paper}

\end{sloppypar}
\end{document}